\documentclass[11pt,a4paper]{article}
\pdfoutput=1 
\usepackage{fullpage}

\usepackage{amsmath}
\usepackage{amsfonts}
\usepackage{amssymb}
\usepackage{amsthm}
\usepackage{thmtools}
\allowdisplaybreaks
\usepackage{mathtools}
\usepackage{setspace}
\usepackage{float}

\usepackage[T1]{fontenc}
\usepackage[tt=false]{libertine}
\usepackage[libertine]{newtxmath}
\usepackage{hyperref}
\usepackage[svgnames]{xcolor}
\hypersetup{colorlinks={true},urlcolor={blue},linkcolor={DarkBlue},citecolor=[named]{DarkGreen}}
\usepackage[authoryear,square]{natbib}
\usepackage[capitalise,nameinlink,noabbrev]{cleveref}
\usepackage{xspace}

\usepackage{doi}
\usepackage{bm}
\usepackage{bbm}

\usepackage{booktabs} 
\usepackage{authblk} 
\usepackage{graphicx} 

\usepackage{tcolorbox}

\newtheorem{theorem}{Theorem}[section]

\newtheorem{lemma}[theorem]{Lemma}

\newtheorem{inftheorem}{Informal Theorem}
\newtheorem{infcorollary}{Informal Corollary}

\theoremstyle{definition}
\newtheorem*{comment*}{Comment}
\newtheorem{definition}[theorem]{Definition}
\newtheorem{remark}[theorem]{Remark}

\newcommand{\pX}{\mathbf{X}}

\usepackage{xcolor}
\usepackage{xspace}

\newcommand{\efkx}{{EFkX}\xspace}
\newcommand{\efx}{{EFX}\xspace}
\newcommand{\aefkx}{{$\frac{k+1}{k+2}$-EFkX}\xspace}

\usepackage{booktabs} 
\usepackage[ruled,vlined]{algorithm2e} 

\SetAlFnt{\small}
\SetAlCapFnt{\small}
\SetAlCapNameFnt{\small}
\SetAlCapHSkip{0pt}
\IncMargin{-\parindent}

\SetCommentSty{mycommfont}

\SetKwInput{KwData}{Input}
\SetKwInput{KwResult}{Output}

\crefname{claim}{Claim}{Claims}

\usepackage{enumitem}

\newlist{threecases}{enumerate}{1}
\setlist[threecases]{label={(\alph*)}}
\crefname{threecasesi}{Case}{Cases}

\newlist{properties}{enumerate}{1}
\setlist[properties]{label={(\alph*)}}
\crefname{propertiesi}{Property}{Properties}

\newlist{propertiesss}{enumerate}{1}
\setlist[propertiesss]{label={(\roman*)}}
\crefname{propertiesssi}{Property}{Properties}

\newlist{pproperties}{enumerate}{1}
\setlist[pproperties]{label={(F.\arabic*)}}
\crefname{ppropertiesi}{Property}{Properties}

\newlist{sproperties}{enumerate}{1}
\setlist[sproperties]{label={(S.\arabic*)}}
\crefname{spropertiesi}{Step Property}{Step Properties}

\newlist{levels}{enumerate}{1}
\setlist[levels]{label={(L.\arabic*)}}
\crefname{levelsi}{Level}{Levels}

\usepackage{tcolorbox}

\usepackage{bm}

\crefname{appendixsection}{Appendix}{Appendices}

\setcitestyle{authoryear}

\xspaceaddexceptions{]\}}

\theoremstyle{remark}

\def \X {\mathbf{X}\xspace}

\newcommand{\Gr}{G_\mathrm{mod}\xspace}
\newcommand{\Ge}{G_\mathrm{mod}\xspace}
\newcommand{\G}{G\xspace}

\newcommand{\PP}{\mathcal{P}\xspace}

\newcommand{\CR}{\textnormal{\textsc{CycleResolution}}\xspace}
\newcommand{\ACR}{\textnormal{\textsc{AllCyclesResolution}}\xspace}
\newcommand{\PR}{\textnormal{\textsc{PathResolution}}\xspace}
\newcommand{\PRPA}{\textnormal{\textsc{PathResolution$^*$}}\xspace}
\newcommand{\PPA}{\textnormal{\textsc{3PA}}\xspace}

\newcommand{\UCG}{\textnormal{\textsc{UncontestedCritical}}\xspace}

\newcommand{\IFAA}{\textnormal{\textsc{ImprovedFewAgentsAllocate}}\xspace}

\newcommand{\AEC}{\textnormal{\textsc{AllocateAndEliminateCritical}}\xspace}
\newcommand{\AEFKX}{\textnormal{\textsc{ApproximateEFkX}}\xspace}

\newcommand{\GPPA}{\textnormal{\textsc{G3PA}}\xspace}

\newcommand{\GPPAplus}{\textnormal{\textsc{G3PA+}}\xspace}

\newcommand{\ACC}{\textnormal{\textsc{ContestedCritical}}\xspace}

\newcommand{\UCC}{\textnormal{\textsc{UncontestedCritical}}\xspace}

\newcommand{\PPAplus}{\textnormal{\textsc{3PA$^{+}$}}\xspace}

\newcommand{\ECE}{\textnormal{\textsc{EnvyCycleElimination}}\xspace}

\renewcommand{\paragraph}[1]{\medskip\noindent\textbf{#1.\;}}

\author[1]{Aris Filos-Ratsikas}
\author[1]{Georgios Kalantzis}
\author[2]{Fangxiao Wang}

\affil[1]{University of Edinburgh}
\affil[2]{Hong Kong Polytechnic University}

\date{}

\title{Approximate Envy-Free Allocations up to any $k$ Goods}

\begin{document}

\maketitle

\begin{abstract}
We study the problem of finding approximate envy-free allocations up to any $k$ goods ($\alpha$-\efkx), when agents have additive values over goods in a bundle. As our main result, we show that for any $k>2$, $\frac{k+1}{k+2}$-EFkX allocations exist for any number of agents, and can be computed in polynomial time, via an appropriate generalization of the \PPA algorithm of \citep{amanatidis2024pushing}. An immediate corollary of this result is that $3/4$-EF2X allocations exist for any number of agents; in contrast, $2/3$-EFX allocations are only known to exist for up to 7 agents. We improve this latter result by devising an algorithm that achieves $2/3$-EFX for 8 agents. We also consider \efkx graph orientations; we prove that such orientations do not always exist, and that deciding their existence is NP-complete, thereby generalizing the corresponding result of \citep{christodoulou2023fair} for $k=1$.
\end{abstract}

\section{Introduction}

The area of computational fair division studies the problem of allocating resources to agents in a fair manner, based on those agents' heterogeneous and often contrasting preferences over how the resources should be allocated. Amongst its many variants, the one that is currently on the forefront of the associated literature is the existence of fair allocations of \emph{indivisible goods}, also known as \emph{discrete fair division} \citep{amanatidis2023fair}. This literature has studied multiple notions of fairness for this setting, but arguably the most intriguing out of all those is that of \emph{envy-freeness up to one good (EFX)} \citep{caragiannis2019unreasonable}; this notion stipulates that the envy of one agent for another agent's allocated bundle is eliminated, once we remove the least valuable good, from the envious agent's perspective, from the bundle of the envied agent. \medskip

\noindent The question of whether EFX allocations always exist, and, if they do, whether they can be computed in polynomial time, is still open. In fact, it would be fair to say that this is the most important question, or the ``holy grail'', of this very active research area. In the absence of a definitive answer, the literature has produced a plethora of results on meaningful special cases, which can be seen as important building blocks towards tackling the main question. For example, in a breakthrough result, \citet{chaudhury2024efx} showed the EFX allocations exist for three agents (strengthening a previous existence result for two agents due to \cite{PR18}, see also \citep{akrami2025efx}). \citet{amanatidis2021maximum} showed the EFX allocations exist and can be computed in polynomial time on instances where agents can have one of two possible values for the goods, and \citet{christodoulou2023fair} showed a corresponding result when the values can be represented by a graph. \citet{PR18} showed an EFX existence result when the agents' values induce the same ordering over goods. \medskip

\noindent Achieving \emph{exact} EFX allocations has proven to be quite challenging for settings that go beyond the aforementioned special cases. For this reason, the literature has applied further relaxations to the EFX notion, two of the most prominent of which are the following:
\begin{itemize}
    \item[-] \emph{approximate-EFX or $\alpha$-EFX}, i.e., a notion that stipulates that after the removal of any good from the envied agent's bundle, the remaining envy is bounded multiplicatively by a factor $\alpha$. 
    \item[-] \emph{\efkx}, i.e., a notion that stipulates that the envy is eliminated after the removal of any $k$ goods from the envied agent's bundle.
\end{itemize}
Obviously, when $\alpha=1$ and $k=1$, we recover the definition of EFX allocations. The notion of $\alpha$-EFX was notably studied by \citet{ANM2019}, who proved the (currently best known) approximation of $\alpha=0.618$; this bound was established also by \citet{farhadi2021almost} with a different algorithm. \citet{markakis2023improv} improved the factor to $\alpha=2/3$ when all of the $n$ agents agree on the $n$ most valuable goods. In a work that is most relevant to us, \citet{amanatidis2024pushing} obtained $2/3$-EFX allocations for three important cases, namely when (i) there are at most $7$ agents, or (ii) when the values can be represented by a multigraph, or (iii) when the agents have one of three possible values. \medskip

\noindent The notion of \efkx allocations was defined by \citet{akramiEF2xallocation}, who, among other results, proved the existence of EF2X allocations for a rather restricted class of values (namely when each agent has a value in $\{0,v_g\}$ for some good $g$). Recently, \citet{ashuri2025ef2x} showed the EF2X allocations exist for up to $4$ agents. To the best of our knowledge, prior to our work, results for $k>2$ were not known in the literature.

\subsection{Our Contribution}
In this work, for the first time we combine the two aforementioned lines of investigation, and consider \emph{approximate-\efkx allocations}. This combination allows us to achieve significant improvements to the approximation factor $\alpha$ compared to the state-of-the-art results for EFX, without any restrictions on the number of agents or their values. Our main result is given by the following theorem, informally stated below.

\begin{inftheorem}\label{infthm:main-theorem}
For any $k\geq 2$, and any number of agents with additive values, a \aefkx allocation exists and can be computed in polynomial time.
\end{inftheorem}

\noindent To put \cref{infthm:main-theorem} in context, we compare its instantiation for $k=2$ with the aforementioned results of \cite{akramiEF2xallocation} and \cite{ashuri2025ef2x}, which establish the existence of EF2X allocations for rather restricted valuation classes, and only for $4$ agents, respectively. By allowing an approximation of $\alpha=3/4$, we are able to achieve $(3/4)$-EF2X allocations for \emph{any} number of agents and \emph{any} additive valuation functions. We state the corresponding corollary below.

\begin{infcorollary}\label{infcor:ef2x}
For any number of agents with additive values, a $(3/4)$-EF2X allocation exists and can be computed in polynomial time.
\end{infcorollary}

\noindent \cref{infthm:main-theorem} does not apply when $k=1$. For this case, we manage to improve on the result of \citep{amanatidis2024pushing} by showing the existence of $2/3$-EFX allocations for $8$ agents. Although the improvement from $7$ to $8$ agents may not seem that significant at first glance, the extension reveals some of the inherent difficulties of moving beyond the case of $7$ agents, and introduces new techniques to achieve that. We state the corresponding theorem informally below. 

\begin{inftheorem}\label{infthm:8-agents}
A $2/3$-EFX allocation exists and can be computed in polynomial time for instances with up to $8$ agents. 
\end{inftheorem}

\noindent We also consider \efkx allocations on graphs, extending the setting introduced by \citet{christodoulou2023fair} in the context of EFX allocations. In this setting, every good (corresponding to an edge of the graph) is positively valued by only two agents (the endpoints of the edge in the graph). We are particularly interested in \emph{orientations} of the edges, i.e., allocations such that each good is assigned to one of the two agents that values it positively. \citet{christodoulou2023fair} showed the EFX orientations may not exist, and deciding their existence is an NP-complete problem. We strengthen their results to \efkx orientations, for any meaningful value of $k$.

\begin{inftheorem}\label{infthm:orientations}
\efkx orientations on graphs do not always exist. Furthermore, given an instance of the problem, deciding whether it admits an \efkx orientiation is NP-complete.
\end{inftheorem}

\subsubsection{An overview of our techniques}

First, as a warm-up, in \cref{sec:efkx-approximations-easy}, we present a rather simple algorithm which achieves the weaker guarantee of $\frac{k}{k+1}$-EFX. Then, for our main result in \cref{sec:efkx-approximations},  we generalize the approach taken in \citep{amanatidis2024pushing}. The \aefkx allocation in \cref{infthm:main-theorem} is obtained in three steps. \medskip 

\noindent First, we obtain a partial allocation $\X^1$ which satisfies a certain set of desirable properties (see \cref{prop:a,prop:b,prop:c,prop:d,prop:e,prop:f} in \cref{sec:G3PA}); the most important of those properties are that the set of unallocated goods (called ``the pool'') contains a limited number of highly-valued (or ``critical'') goods for each agent, and only for agents whose bundles are singletons. This is achieved by the \emph{Generalized Property Preserving Allocation algorithm (\GPPA)}, an appropriate generalization of the \PPA algorithm of \citet{amanatidis2024pushing}, which was designed for the case of $k=1$. Compared to the original design, the algorithm works for any value of $k \geq 1$ and it is also conceptually simpler, as it uses only one type of envy graph (coined the ``modified graph'') rather than the two variants (the ``reduced'' and ``enhanced'' envy graphs) used in \citep{amanatidis2024pushing}, as well as fewer subroutines. The use of the modified graph also allows us to prove the polynomial running time of the algorithm, correcting an oversight of \citet{amanatidis2024pushing} which was idenfied by \citet{hv2025almost} (see \cref{rem:amanatidis-et-al-proof} in \cref{app:alternative-running-time-proof}). The proof of correctness of the algorithm follows closely that of the original version, with the necessary technical adaptations to make it work for the more general case of any $k$. \medskip

\noindent In the second step, we devise an algorithm which inputs the partial allocation $\X^1$ and produces a partial allocation $\X^2$, which is \aefkx, and for which the pool does not have any critical goods for any agent. In \citep{amanatidis2024pushing}, this was only achieved for special cases of the problem, e.g., for instances with up to $7$ agents. For $k \geq 2$, our novel algorithm, coined \AEC, achieves this allocation for any number of agents. The crucial insight that allows us to establish this result is that, after allocating a number of critical goods to an agent, the remaining goods in the pool are no longer critical for the agent, since her utility in the meantime has increased. This approach falls short for $k=1$; in this case we still manage to strengthen the state-of-the-art results from $7$ to $8$ agents. While the number of agents increases by only one, our newly introduced algorithm for allocating the critical goods is much more involved compared to the corresponding one in \citep{amanatidis2024pushing}, and handles the critical goods differently depending on the number of \emph{contested} critical goods and the number of sources in the modified envy graph. \medskip

\noindent Finally, in the third step, we use the Envy cycle elimination algorithm of \citep{lipton2004approximately} to transform the partial allocation $\X^2$ to a full allocation $\tilde{X}$ which is \aefkx. This is enabled by a crucial lemma, which establishes that this is possible, assuming the absence of any critical goods from the pool. This lemma was first formally stated and proved for $k=1$ by \citet{markakis2023improv}, but it was used implicitly earlier by \citet{ANM2019,farhadi2021almost}. Our version is an appropriate generalization for any value of $k$; we refer to this lemma as the \emph{Partial-to-Full-Allocation Lemma (P2FA)}, see \cref{lem:partial-efx-to-full-efx}.  \medskip

\noindent We consider \efkx graph orientations in \cref{sec:efkx-orientations}. First, we construct an instance in which graph orientations do not exist for any constant value of $k$, and then prove the corresponding NP-completeness result. Our NP-hardness result proceeds via induction on $k$ and uses the NP-hardness result of \citet{christodoulou2023fair} for $k=1$ as the base case. Then, by carefully introducing new graph gadgets in the construction of our reduction we show that we can decide the existence of \efkx orientations if and only if we can decide the existence of EFk-1X orientations on any input graph instance.

\subsection{Related Work}
\noindent We now provide a more extensive overview of the relevant literature on the problem of fairly allocating indivisible goods. More specifically, we focus on related work on exact and approximate envy-free allocations up to one good, other relaxed fairness notions and restricted settings. For further results and progress on fair allocations, we refer the reader to the survey by \citet{amanatidis2023fair}. \medskip

\noindent More specifically, on the topic of exact EFX allocations, \citet{chaudhury2024efx} showed existence for three agents with additive valuations. \citep{akrami2025efx} improved this to a slightly more
general class. \citet{amanatidis2021maximum} showed the EFX allocations exist and can be computed in polynomial time on instances where agents can have one of two possible values for the goods. \citet{christodoulou2023fair} showed a corresponding result when the values can be represented by a graph. \citet{PR18} showed an EFX existence result when the agents' values induce the same ordering over goods and \citet{hv2025efx} show existence when agents have one of three additive valuations. \medskip

\noindent On approximate EFX allocations, \citet{ANM2019} showed the existence of $0.618$-EFX allocations, \citet{markakis2023improv} improved the factor to $2/3$ when all agents $n$ agree on the ordering of the $n$ most valuable goods. Furthermore, \citet{amanatidis2024pushing} showed the existence of $2/3$-EFX allocations for $7$ agents and some other restricted valuation classes. \medskip

\noindent Regarding relaxed fairness notions, EF1 allocations, introduced by \citet{budish2011combinatorial}, which on contrast with EFX allow the removal of some good, were shown by \citet{lipton2004approximately} that always
exist, even for general monotone valuations, and can be computed efficiently. Stronger than notions than EF1 are EFL introduced by \citet{barman2018groupwise}, EFR by \citet{farhadi2021almost}, EEFX and MXS by \citet{caragiannis2023new}, and their combinations, such has existence of MXS and EFL by \citet{ashuri2025simultaneously}). Lastly, more closely to our work EF2X allocations were proven to exist by \citet{ashuri2025ef2x} for four agents and by \citet{akramiEF2xallocation} for a restricted additive class. \medskip

\noindent Lastly on orientations, \citet{christodoulou2023fair} showed the non-existence and the NP-completeness of deciding EFX orientations. \citet{deligkas2024ef1} showed that EF1 orientations always exist when agents have monotone valuations and proved that deciding the existence of EFX orientations remains NP-complete even for special classes of graphs. \citet{zeng2024structure} characterized the existence of EFX orientations. Lastly, \citet{afshinmehr2024efx} studied EFX orientations in
(bipartite) multigraphs.

\section{Preliminaries}\label{sec:preliminaries}

We consider the problem of fairly allocating a set $M$ of $m$ indivisible goods to a set $N$ of $n$ agents. Each agent $i$ has preferences over the goods, which are expressed by terms of an \emph{(additive)} valuation function $v_i: M \rightarrow \mathbb{R}_{\geq 0}$ that assigns non-negative values to each good; the value of an agent for a set (or a \emph{bundle}) $X \subseteq M$ of goods is defined as $v_i(X)=\sum_{g \in X}v_i(g)$. A partial allocation $\X$ is defined as a tuple $\X=(X_1,\ldots,X_n)$, where $X_i$ denotes the bundle allocated to agent $i$, such that (a) for every $i \in N$, it holds that $X_i \subseteq M$ and that (b) for any $i,j \in N$ such that $i \neq j$, it holds that $X_i \cap X_j = \emptyset$. A partial allocation $\X$ is a \emph{full allocation}, or simply, an \emph{allocation}, if additionally, it holds that $\bigcup_{i \in N}X_i = M$. For a partial allocation $\X$, the set $M\setminus \bigcup_{i \in N}X_i$ will be referred to as \emph{the pool}, and will be denoted by $\PP(\X)$. \medskip

\noindent The notion of fairness that we will be concerned with in this work is that of \emph{approximate envy-freeness up to $k$ goods}, which stipulates that an agent can only envy another agent's bundle up to a multiplicative factor $\alpha$, after the removal of any $k$ goods from the latter agent's bundle. We provide the formal definition below. 

\begin{definition}[$\alpha$-\efkx]
Given a partial allocation $\X$, a $k \in \{0,\ldots,m\}$, an $\alpha >0$, and two agents $i,j \in N$, we will say that agent $i$ is $\alpha$-\efkx towards agent $j$ if, for any $Y \in X_j$ such that $|Y|=k$, it holds that $v_i(X_i) \geq \alpha \cdot v_i(X_j \setminus Y)$. We will say that $\X$ is $\alpha$-\efkx if every agent $i \in N$ is $\alpha$-\efkx towards any other agent $j \in N$.
\end{definition}

\noindent When $\alpha=1$, or $k=1$, these will be omitted from the notation, i.e., we will instead refer to \efkx (when $\alpha=1$), $\alpha$-EFX (when $k=1)$, and EFX (when both $\alpha=1$, and $k=1$). When $k=0$, instead of $\alpha$-EF0X, we will say that an agent does not \emph{$\alpha$-envy} another agent, and $\X$ is $\alpha$-envy-free. Extending the definitions above, we will also need to compare the value of an agent for a bundle of goods $Y$ which is not necessarily the bundle of some other agent in $\X$. In that case, when $v_i(Y)\geq a \cdot v_i(X_i)$, we will say that agent $i$ \emph{prefers $Y$ by a factor of at least $\alpha$}. \medskip

\noindent Following the terminology of \citep{amanatidis2024pushing}, we will refer to goods in $\PP(\X)$ that an agent $i$ values significantly compared to her current allocation as \emph{critical goods}.

\begin{definition}[$\beta$-critical good]
Given a partial allocation $\X$, a good $g \in \PP(\X)$ is \emph{$\beta$-critical} for agent $i \in N$, if $v_i(g) \geq \beta \cdot v_i(X_i)$.  
\end{definition}

\paragraph{Envy-graphs} A central concept in the literature of fair division with indivisible goods is that of an \emph{envy graph} \citep{lipton2004approximately}. Given a partial allocation $\X$, its \emph{envy graph} $G(\X)=(N,E(\X))$ is a directed graph that has one node for each agent, and a directed edge $(i,j)$ if and only if agent $i$ envies agent $j$, i.e., $v_i(X_i) \geq v_i(X_j)$. For the \PPA algorithm, \citet{amanatidis2024pushing}, defined two different variants of the envy graph, namely \emph{reduced envy graphs} and \emph{enhanced envy graphs}. Our generalization simplifies their approach and only makes use of a single variant, that we call the \emph{modified envy graph}.

\begin{definition}[Modified Envy Graph $\alpha$-$\Ge$]
Given a partial allocation $\X$ and an $\alpha \in [0,1]$, its modified $\alpha$-envy graph $\alpha$-$\Ge(\X)=(N,E(\X))$ is a directed graph that has one node for each agent, and a directed edge $(i,j)$ if and only if:
\begin{itemize}
    \item[-] $|X_i|=|X_j|$ and $v_i(X_j) \geq v_i(X_i)$ \hfill \emph{(exact) envy}
    \item[-] $|X_i|=1$ and $|X_j|>1$ and $v_i(X_j) \geq \alpha\cdot v_i(X_i)$ \hfill \emph{$\alpha$-approximate (weak) envy}
    \item[-] $|X_i|>1$ and $|X_j|=1$ and $v_i(X_j) \geq \frac{1}{\alpha}\cdot v_i(X_i)$ \hfill \emph{$\frac{1}{\alpha}$-approximate (strong) envy}
\end{itemize}
An alternative but useful definition of the $\alpha$-$\Gr$ graph can be obtained via the following \emph{proxy valuation function}:
\[
\tilde{v}_i(Y)=
\begin{cases} 
v_i(Y), & |Y| = 1, \\
\frac{1}{\alpha} v_i(Y), & |Y| > 1.
\end{cases}
\]
Then, the modified envy graph $\Gr(\X)$ is simply the envy graph $\G(\X)$ on an instance with valuation functions $\tilde{v}_i(\cdot)$ rather than $v_i(\cdot)$, for any $i \in N$.
\end{definition}
\noindent When it is clear from context, we will drop the partial allocation $\X$ from the notation and simply write $G$, and $\alpha$-$\Gr$ for the envy graph and the modified envy graph, respectively. Notice also that $G$ is simply $\alpha$-$\Ge$, when $\alpha=1$. Since $\alpha$ will be fixed for all of our results where $\alpha$-$\Ge$ is used (as a function of $k$), we will refer to the modified envy-graph $\Gr$ (dropping the $\alpha$) instead. 

\subsection{Useful Subroutines}
Our algorithm will also make use of four useful subroutines, variants of which were defined in \citep[Section 2.3]{amanatidis2024pushing}. These apply to the graph variants defined above and exchange bundles along paths or cycles of the graph. The first three, namely $\CR$, $\ACR$ and $\PR$ 
are identical to those presented in \citep{amanatidis2024pushing}; we describe them below, and refer the reader to \citep[Section 2.3]{amanatidis2024pushing} (or, alternatively, \cref{app:section2-subroutines} of our paper) for the detailed pseudocode. The fourth one, $\PRPA$ is an appropriate generalization of the corresponding subroutine of the same name in \citep{amanatidis2024pushing}, which we present in full detail aftewards. Below, $\tilde{G}$ denotes any of the graph variants generically. 
\begin{itemize}[leftmargin=*]
    \item[] \textbf{Subroutine 1 -  $\CR(\X,\tilde{G},C)$:} For every edge $(i,j)$ in the cycle $C$ of $\tilde{G}$, assign the bundle $X_j$ of agent $j$ to agent $i$.
    \item[] \textbf{Subroutine 2 - $\ACR(\X,\tilde{G})$:} While there exists a cycle $C$ in $\tilde{G}$, apply \\ $\CR(\X,\tilde{G},C)$. 
    \item[] \textbf{Subroutine 3 - $\PR(\X,\tilde{G},\Pi)$:} Let $\Pi = (i_1,i_2,\ldots,i_\ell)$ be a path of $\tilde{G}(\X)$. For any $h \in \{1,\ldots,\ell-1\}$, assign the bundle of agent $i_{h+1}$ to agent $i_h$. 
\end{itemize}
\noindent Notice that the outcome of $\PR$ is not a partial allocation, as it does not specify what happens with the bundle of the last agent $i_\ell$ of the path; this will be determined by the algorithms that use $\PR$ as a subroutine. We conclude with the definition of the \PRPA subroutine, see below.

\setcounter{algocf}{3}
\SetAlgorithmName{SUBROUTINE}{Subroutine}

\begin{algorithm}[!ht]
\DontPrintSemicolon
\caption{$\PRPA(\X,\tilde{G},\Pi,Y)$} 
\label{alg:path-resolution-and-critical}
\SetKwComment{Comment}{/* }{ */}
\KwData{A partial allocation $\X$, its graph $\tilde{G}(\X)$, a path $\Pi = (s, \ldots, i)$ in\, $\tilde{G}(\X)$ starting at a source $s$ of $\tilde{G}(\X)$ with $|X_s|=k+1$ and ending at a node $i$, and a set of goods $Y \subseteq X_s \cup \PP(\X)$, with $|Y| = k+1$.}
\KwResult{An updated partial allocation $\X$.\vspace{3pt}}
$(X_j)_{j \in N: \exists (j, \ell) \in \Pi}=\PR(\X,\tilde{G},\Pi)$\;
$X_i \gets Y$\; 
$\PP(\X) \gets (\PP(\X) \cup X_s)\setminus Y$\;
\Return $\X$
\end{algorithm}

The main difference between Subroutine~\ref{alg:path-resolution-and-critical} and the corresponding subroutine in \citep{amanatidis2024pushing} is in the bundle that agent $i$ receives. In their work, agent $i$ receives, from her perspective, the best item out of the two goods that constitute the bundle of the source $X_s$, and the best good from the pool $\PP(\X)$. In our case of $k \geq 1$, the source $s$ will have $k+1$ goods, and it will not be a-priori clear which of those agent $i$ should receive together with which goods from $\PP(\X)$. The subroutine thus receives the bundle $Y$ consisting of these goods as input, and delegates this decision to the main algorithm that calls it; in our case that is the \GPPA algorithm of \cref{sec:efkx-approximations}. 

Finally, we present the \emph{Envy Cycle Elimination} algorithm of \citet{lipton2004approximately}, adapted (for our purposes) to start from any partial allocation $\X$. As long as $\PP(\X) \neq \emptyset$, the algorithm repeatedly finds a source of the envy graph $G(\X)$ and allocates an arbitrary good $g \in \PP(\X)$ to that source. If no such source exists in $G(\X)$, that means that $G(\X)$ might have at least one cycle. The algorithm then ``resolves'' that cycle using the \CR subroutine described above; we refer to this algorithm as $\ECE(\X,G)$, and present its pseudocode for completeness in \cref{app:section2-subroutines}. 

\subsection{Partial to Full Allocation (P2FA) Lemma}

\noindent As we discussed in the Introduction, in the heart of our approach lies a simple that useful lemma, which allows us to translate partial allocations that satisfy certain properties to full approximate \efkx allocations. Such lemmas were used (implicitly) in the past by other works \citep{ANM2019,farhadi2021almost}), but the first explicit statement appeared in \citep{markakis2023improv}, see also \citep[Lemma 4.2]{amanatidis2024pushing}. The statement in \citep{markakis2023improv} only concerns EFX allocations (rather than \efkx); we state the appropriate generalization below. Its proof is a straightforward adaptation, but we include it here for completeness.

\begin{lemma}[Partial-to-Full-Allocation (P2FA) Lemma]\label{lem:partial-efx-to-full-efx}
Let $\alpha,\beta \in [0,1]$. Let $\X$ be an $\alpha$-\efkx partial allocation such that $\PP(\X)$ does not contain any $\beta$-critical good for any agent. Then a $\min\{\alpha,\frac{1}{\beta+1}\}$-\efkx allocation $\tilde{\X}$ can be obtained from $\X$ in polynomial time.
\end{lemma}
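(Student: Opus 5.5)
We run $\ECE(\X,G)$ starting from $\X$ and show that the output $\tilde{\X}$ is $\gamma$-\efkx with $\gamma=\min\{\alpha,\frac{1}{\beta+1}\}$. The argument is an induction over the steps of the algorithm. The invariant is:
\begin{itemize}[leftmargin=*]
\item[] (I1) the current partial allocation is $\gamma$-\efkx;
\item[] (I2) the value of every agent for her own bundle never decreases;
\item[] (I3) no good still in the pool is $\beta$-critical for any agent.
\end{itemize}
Initially (I1) holds because $\alpha\ge\gamma$, and (I3) holds by hypothesis. Polynomial running time is standard for envy cycle elimination. Each allocation step puts one more good into the allocation. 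Each cycle resolution strictly increases the value of the agents on the cycle and leaves the others unchanged, so the number of edges of the envy graph strictly decreases. Hence at most $O(n^2)$ cycle resolutions occur between consecutive allocations, and there are $m$ allocations in total.

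\textbf{Cycle resolution.} Along a cycle, every agent on it receives a bundle she values at least as much as her own, so (I2) holds. The multiset of bundles is unchanged, so for any agent $i$ the set of bundles she compares herself against is the same as before. Since $v_i(X_i)$ did not decrease, $i$ remains $\gamma$-\efkx towards every bundle, which gives (I1). The pool is unchanged and bundle values only went up, so criticality can only disappear, which gives (I3).

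\textbf{Allocating a good to a source.} Let $s$ be a source of $G(\X)$ and let $g\in\PP(\X)$ be given to $s$. Bundle values again only increase, so (I2) and (I3) persist. Agent $s$ herself is unaffected as an envier except through her own gain. For agents $j\neq s$, their comparisons towards bundles other than $X_s\cup\{g\}$ are unchanged. The key case is an agent $i$ looking at $X_s\cup\{g\}$. Since $s$ is a source, $v_i(X_i)\ge v_i(X_s)$. Since $g$ is not $\beta$-critical for $i$, $v_i(g)<\beta\, v_i(X_i)$. Therefore
\[
v_i(X_s\cup\{g\}\setminus Y)\le v_i(X_s\cup\{g\}) < (1+\beta)\,v_i(X_i),
\]
so $v_i(X_i)\ge \frac{1}{1+\beta}\,v_i(X_s\cup\{g\}\setminus Y)\ge\gamma\, v_i(X_s\cup\{g\}\setminus Y)$ for every $Y$ with $|Y|=k$. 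Removing $Y$ is not even needed here. One subtlety is that the bound must use $i$'s current value. This is fine: criticality was defined relative to the original $v_i(X_i)$, and by (I2) a good that is non-critical then remains non-critical.

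\textbf{Main obstacle.} The only delicate points are the bookkeeping of (I3) under monotonicity and the fact that the source condition refers to the exact envy graph $G$, not $\Gr$. Once (I2) is established for both operations, the argument is routine. At termination the pool is empty, so $\tilde{\X}$ is a full $\gamma$-\efkx allocation.
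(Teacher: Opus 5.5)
Your proof is correct and follows essentially the same route as the paper: run \ECE, use that each agent's value for her own bundle never decreases, and bound a source's bundle plus one non-$\beta$-critical good by $(1+\beta)$ times the envier's current value; the paper organizes this as a case split on each final bundle (an untouched bundle of $\X$, or a bundle whose last added good went to a source), whereas you use a step-by-step invariant. One small imprecision: in the cycle step, an agent on the cycle does not compare against the same set of bundles as before, since she now also compares against her own former bundle, but this is trivially fine because she values it strictly less than her new one.
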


\begin{proof}
    The full allocation $\tilde{\X}$ will be obtained from $\X$ by allocating the goods from the pool $\PP(\X)$ using the $\ECE$ algorithm (see Algorithm~\ref{alg:ece} in \cref{app:section2-subroutines}). First, from the definition of $\ECE$, it follows that, during the run of the algorithm, the utility of an agent $i \in N$ never decreases (see \citep{lipton2004approximately} and \citep[Lemma 4.1]{amanatidis2024pushing}). Consider any agent $i \in N$ and any agent $j \in N \setminus \{i\}$. If suffices to prove that in $\tilde{X}$, agent $i$ is $\min\{\alpha,\frac{1}{\beta+1}\}$-\efkx towards agent $j$. We consider two cases:
    \begin{itemize}[leftmargin=*]
        \item[-] \emph{Case 1:} $\tilde{X}_j = X_j'$, i.e., the final bundle of agent $j$ in allocation $\tilde{\X}$ was the exact bundle of some (possibly other) agent $j'$ in partial allocation $\X$, without adding any items from $\PP(\X)$. Then, since $v_i(\tilde{X}_i) \geq v_i(X_i)$ as discussed above, and since agent $i$ was $\alpha$-\efkx towards agent $j'$ in $\X$, she will now be $\alpha$-\efkx towards agent $j$ in $\tilde{X}$. 
        \item[-] \emph{Case 2:} One or more goods from $\PP(\X)$ have been added to $X_j$ during the execution of the algorithm. Let $g$ be the last good that was added to $\tilde{X}_j$, and let $j' \in N$ be the agent to whom the bundle was assigned to before the addition of good $g$. Also, let $\tilde{\X}^t$ be the partial allocation produced by the algorithm at that point in the execution; hence $\tilde{X}_{j'}^t = \tilde{X_j} \setminus \{g\}$. By definition of the algorithm, agent $j'$ was a source of $G(\X^t)$, and hence $v_i(\tilde{X}_i^t) \geq v_i(\tilde{X}_{j'}^t)=v_i(\tilde{X}_j \setminus \{g\})$. Since $g$ was not $\beta$-critical for agent $i$, it holds that $v_i(g) \leq \beta \cdot v_i(X_i) \leq \beta \cdot v_i(\tilde{X}_j \setminus \{g\})$. Therefore, we have that:
        \[
        v_i(\tilde{X}_j) = v_i(\tilde{X}_j \setminus \{g\}) + v_i(g) \leq (1+\beta) \cdot v_i(\tilde{X}_i^t) \leq (1+\beta)\cdot  v_i(\tilde{X}_i). 
        \]
        This establishes that agent $i$ is $\frac{1}{\beta+1}$-envy-free (and hence $\frac{1}{\beta+1}$-\efkx) towards agent $j$. This completes the proof. 
    \end{itemize}
    
\end{proof}


    
    

\section{Approximate \efkx allocations}\label{sec:approx-efx}

\subsection{Warm-up: Achieving $\frac{k}{k+1}$-\efkx allocations} \label{sec:efkx-approximations-easy}

Before we present our main algorithm which achieves $\frac{k+1}{k+2}$-\efkx allocations in \cref{sec:efkx-approximations} below, we first observe that a rather straightforward polynomial-time algorithm suffices to achieve the worse \efkx approximation of $\frac{k}{k+1}$. The algorithm first constructs a partial allocation $\X$, simply by $k$ consecutive executions of the \emph{Round-Robin} algorithm, which fixes an ordering of the agents $i \in N$, and lets them pick their favorite good from $\PP(\X)$ according to that ordering. Then, the algorithm completes $\X$ by running the Envy Cycle Elimination algorithm of \cite{lipton2004approximately} for the leftover goods in $\PP(\X)$. We present the pseudocode for the algorithm, as well as the brief proof for its \efkx approximation guarantee below.

\setcounter{algocf}{1}
\SetAlgorithmName{ALGORITHM}{Algorithm}

\begin{algorithm}[!ht]
\DontPrintSemicolon
\caption{\textsc{$k$-RoundRobin with EnvyCycleElimination}}
\label{alg:rr-efkx}
\SetKwComment{Comment}{/* }{ */}
\KwData{Valuations $(v_i(g))_{i \in N,\, g \in M}$, $k\geq 1$.}
\KwResult{A $\frac{k}{k+1}$-\efkx allocation $\X$.\vspace{3pt}}

Fix an ordering $I = (i_1,i_2,\ldots,i_n)$ of the agents.\;

\While{$k >0$}{
$k = k-1$\;
\If{$\PP(\X) = \emptyset$}{
      \Return $\X$\;
    }
  \For{$i_\ell \in I$}{
    Pick $g' \in \arg\max_{g \in \PP(\X)} v_{i_\ell}(g)$\;
    $X_{i_\ell} \gets X_{i_\ell} \cup \{g'\}$\;
    $\PP(\X) \gets \PP(\X) \setminus \{g'\}$\;
  }
}

$\X = \ECE(\X,G(\X))$\;

\Return $\X$
\end{algorithm}

\begin{theorem}
    For any $k \in [1,n]$, Algorithm~\ref{alg:rr-efkx} runs in polynomial time, and achieves a $\frac{k}{k+1}$-\efkx allocation.
\end{theorem}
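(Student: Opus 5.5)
The plan is to apply the P2FA Lemma (\cref{lem:partial-efx-to-full-efx}) to the partial allocation $\X$ produced after the $k$ rounds of Round-Robin, with $\alpha=\frac{k}{k+1}$ (or better) and $\beta=\frac{1}{k}$. This gives $\min\{\alpha,\frac{1}{1+1/k}\}=\frac{k}{k+1}$. If the pool empties during the Round-Robin phase, the returned allocation is already full. Then only the partial-allocation guarantee below is needed, since the algorithm returns immediately. Polynomial running time is immediate: the Round-Robin phase makes at most $kn$ picks, each an $O(m)$ scan, and \ECE runs in polynomial time.

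\textbf{Step 1: no $\frac1k$-critical goods remain.} After the $k$ rounds, every agent $i$ holds $k$ goods. Each of these was chosen by $i$ as her favorite available good at the moment of choosing. Every good $g$ still in the pool was available at each of those moments, so $v_i(g)\le v_i(h)$ for each good $h\in X_i$. Summing over the $k$ goods gives $k\,v_i(g)\le v_i(X_i)$, i.e., $g$ is not $\beta$-critical for any $\beta>\frac1k$. The P2FA proof only needs $v_i(g)\le\beta v_i(X_i)$ with non-strict inequality, so $\beta=\frac1k$ can be used directly.

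\textbf{Step 2: $\X$ is (exactly) \efkx.} Every bundle has size at most $k$, so removing any $k$ goods from $X_j$ leaves the empty set, whose value is $0\le v_i(X_i)$. Hence $\X$ is $1$-\efkx, and in particular $\frac{k}{k+1}$-\efkx. The same observation covers an early return: every bundle then has at most $k$ goods, so the allocation is even \efkx. One subtlety is that the definition quantifies over $Y$ with $|Y|=k$. For bundles with fewer than $k$ goods, I will read this as the usual convention of removing $\min\{k,|X_j|\}$ goods, which is vacuous here.

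\textbf{Step 3: conclude.} Applying \cref{lem:partial-efx-to-full-efx} with $\alpha=1$ and $\beta=\frac1k$ yields a full allocation that is $\min\{1,\frac{k}{k+1}\}=\frac{k}{k+1}$-\efkx. This allocation is exactly the output of \ECE on $\X$, because the lemma's proof runs \ECE.

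The only step needing care is Step 1. The bound must use all $k$ picks, not just the last one, and it relies on goods never re-entering the pool during Round-Robin, which holds since goods are only removed from it. Everything else is direct bookkeeping.
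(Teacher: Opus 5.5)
Your proposal is correct and follows the paper's proof essentially verbatim. You apply the P2FA Lemma with $\beta=\frac1k$, use the Round-Robin picks to get $k\,v_i(g)\le v_i(X_i)$ for every pooled good, and use the bound $|X_i|\le k$ to show the partial allocation is exactly \efkx. You are slightly more careful than the paper in two places: the paper's definition of $\beta$-critical uses $\geq$, so a good with $k\,v_i(g)=v_i(X_i)$ would technically still be $\frac1k$-critical, and you rightly note that the P2FA argument only needs the non-strict bound $v_i(g)\le \beta\, v_i(X_i)$; you also handle a pool that empties early, which the paper's proof does not address.
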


\begin{proof}
    The polynomial running time of the algorithm follows immediately from the fact that both Round Robin and Envy Cycle Elimination are known to run in polynomial time (e.g., see \citep{amanatidis2023fair}). For the approximation to \efkx, by \cref{lem:partial-efx-to-full-efx}, it suffices to show that the partial allocation $\X$ that is produced after the $k$ rounds of Round Robin is $\alpha$-EFX, and that $\PP(\X)$ does not contain any $\beta$-critical goods, for $\alpha=k/(k+1)$ and $\beta=1/k$. For the former, it is easy to see that $\X$ is in fact \efkx, as for every agent $i \in N$, $|X_i| \leq k$, and hence every agent is \efkx towards every other agent. For the latter, consider any agent $i \in N$ with bundle $X_i$ in $\X$. Since every good in $X_i$ was picked via Round Robin, this implies that for every pair of goods $(g',g)$ such that $g' \in \PP(\X)$ and $g \in X_i$, it holds that $v_i(g') \leq v_i(g)$. By summing over items in $X_i$, we obtain that $k\cdot v_i(g') \leq \sum_{g \in X_i}v_i(g) = v_i(X_i)$; this latter inequality precisely establishes that $g'$ is not $(1/k)$-critical for agent $i$.   
\end{proof}

\subsection{Achieving $\frac{k+1}{k+2}$-\efkx allocations} \label{sec:efkx-approximations}

We now move on to our main positive result, which is the existence of a polynomial-time algorithm that achieves the improved $(k+1)/(k+2)$ approximation bound on \efkx allocations, for $k\geq 2$. The algorithm consists of the following three parts:
\begin{enumerate}[leftmargin=*]
    \item A polynomial-time algorithm which either, for any $k\geq 1$, produces a full \aefkx allocation, or a partial allocation $\X^1$, that is $\frac{k+1}{k+2}$-\efkx and satisfies a set of desirable properties. $\PP(\X^1)$ may still have $\frac{1}{k+1}$-critical goods, but these properties apply certain restrictions on them, which will be utilized in the next step. We present this algorithm in \cref{sec:G3PA}.
    \item A polynomial-time algorithm which, for any $k \geq 2$, inputs the partial allocation $\X^1$ and outputs a partial allocation $\X^2$, which is still $\frac{k+1}{k+2}$-\efkx, but in which $\PP(\X)$ does not have any $\frac{1}{k+1}$-critical goods. We present this algorithm in \cref{sec:completing}.
    \item By the P2FA Lemma (\cref{lem:partial-efx-to-full-efx}), a polynomial-time algorithm which inputs $\X^2$ and outputs a full $\frac{k+1}{k+2}$-\efkx allocation $\tilde{\X}$.
\end{enumerate}

\noindent The algorithm, coined \AEFKX is presented in Algorithm~\ref{alg:approx-efkx} below.

\SetAlgorithmName{ALGORITHM}{Algorithm}

\begin{algorithm}[!ht]
\DontPrintSemicolon
\caption{\AEFKX}
\label{alg:approx-efkx}
\SetKwComment{Comment}{/* }{ */}
\KwData{Valuations $(v_i(g))_{i \in N,\, g \in M}$, $k\geq 2$.}
\KwResult{A $\frac{k}{k+1}$-\efkx allocation $\X$.\vspace{3pt}}

Let $\X^0$ be an allocation where each bundle has size $1$\;
\tcp{Start with an initial seed allocation (\cref{def:seed-allocation}) $X^0$.}
$\X^1 \gets \GPPA((v_i(g))_{i \in N,\, g \in M}, \X^0, k)$ \;
\tcp{Run the \GPPA algorithm (Algorithm~\ref{alg:G3PA}) on $\X^0$ to obtain a partial allocation $\X^1$ satisfying \cref{prop:a,prop:b,prop:c,prop:d,prop:e,prop:f} (\cref{sec:G3PA}).}
$\X^2 \gets \AEC((v_i(g))_{i \in N,\, g \in M}, \X^1, k)$\;
\tcp{Run the \AEC algorithm (Algorithm~\ref{alg:aec}) on $\X^1$ to obtain a partial allocation $\X^2$ that is \aefkx and does not have any $\frac{1}{k+1}$-critical goods (\cref{sec:completing}).}
$\X = \ECE(\X,G(\X))$\;
\tcp{Run the \ECE algorithm (Algorithm~\ref{alg:ece}) to obtain a full \aefkx allocation (\cref{lem:partial-efx-to-full-efx}).}
\Return $\X$
\end{algorithm}

\subsubsection{The Generalized Property Preserving Partial Allocation (\GPPA) algorithm}\label{sec:G3PA}
For the first part, we devise an algorithm which is an appropriate generalization of the \emph{Property Preserving Partial Allocation (\PPA)} algorithm of \citet{amanatidis2024pushing}; we refer to the algorithm as the \emph{Generalized Property Preserving Partial Allocation (\GPPA) algorithm}. Similarly to \PPA, \GPPA performs a series of steps, each of which may be executed only if the previous one has not. These steps are either \emph{swap steps}, where the agents are allowed to swap (parts of) their bundles with good from the pool, or \emph{graph reallocation steps}, which operate on the graph $\Ge$, and reallocate bundles along paths or cycles of the graph, possibly also adding some goods from the pool or releasing some good to it.

The algorithm starts with an initial allocation (referred to as a \emph{seed allocation}, see \cref{def:seed-allocation} below) $\X$ and updates $\X$ to a partial allocation that satisfies a certain set of properties. These properties, which are appropriate generalizations of the properties in \citep{amanatidis2024pushing}, are listed below. 

\begin{tcolorbox}[
    standard jigsaw,
    title=Desired Properties of a Partial Allocation $\X$:,
    opacityback=0,  
]
\begin{properties}[topsep=5pt,itemsep=0.5ex,partopsep=1ex,parsep=1ex]
\item For every agent $i \in N$, we have either than $|X_i|=1$ or $|X_i|=k+1$. \label{prop:a}
\item Every agent $i \in N$ with $|X_i|=1$ is \efkx towards any other agent.  
\label{prop:b}
\item Every agent $i \in N$ is $\frac{k+1}{k+2}$-\efkx towards any other agent. 
\label{prop:c}
\item Every agent $i$ values her own bundle at least as much as any single unallocated good, i.e., for every agent $i \in N$ and good $g \in \PP(\X)$, $v_i(X_i)\geq v_i(g)$.\label{prop:d}
\item Every agent $i \in N$ with $|X_i|=k+1$ does not have any $\frac{1}{k+1}$-critical goods, i.e., for any good $g \in \PP(\X)$, we have that $v_i(g)\leq \frac{1}{k+1} v_i(X_i)$.\label{prop:e}
\item Any agent $i$ with $|X_i|=1$ has at most $k$ $\frac{1}{k+1}$-critical goods, and she values the bundle consisting of those goods at most $\frac{k+1}{k+2}$ of her current bundle, i.e., for any agent $i$ and bundle $Y_i^c \subseteq \PP(\X)$ of goods with $|Y_i^c| \leq k$, such that for every $g \in Y_i^c$ we have that $v_i(g)> \frac{1}{k+1} v_i(X_i)$, then it holds that $v_i(Y_i^c) \leq  \frac{k+1}{k+2} v_i(X_i)$.\label{prop:f}
\end{properties}
\end{tcolorbox}

\noindent While in \PPA, the partial allocation $\X$ contains bundles with at most two goods, for our updated set of properties, the bundle size should be at most $k+1$; in fact, we prove that it suffices to create bundles of size either $1$ or $k+1$. In turn, any agent with a singleton bundle does not have any $1/(k+1)$-critical goods, and any agent with a bundle of size $k+1$ has at most $k$ $1/(k+1)$-critical goods, which she values at most $(k+1)/(k+2)$ times the value of her bundle in $\X$. 

\begin{definition}[Seed allocation]\label[definition]{def:seed-allocation}
A \emph{seed allocation} $\X$ is a partial allocation that satisfies \cref{prop:a,prop:b,prop:c} above.
\end{definition}

\noindent As in \citep{amanatidis2024pushing}, initializing the algorithm with a seed allocation is done for flexibility in its use on partial allocations. For all of our results, taking an arbitrary partial allocation $\X$ in which $|X_i|=1$ for every agent $i \in N$ would suffice. \medskip

\begin{algorithm}[!ht]
\setstretch{1.05}
\DontPrintSemicolon
\SetNoFillComment
\LinesNotNumbered 
\caption{\textsc{\sc Generalized Property-Preserving Partial Allocation (\GPPA)} 
$\left(\left(v_i\right)_{i 
\in N}, \X, k \right)$} \label{alg:G3PA}
\SetKwComment{Comment}{/* }{ */}
\SetKw{Continue}{continue}
\SetKw{Break}{break}
\SetKw{Step}{Step}
\SetKw{Substep}{Substep}
\SetKwData{Kw}{}
\KwData{The values $v_i(g)$ for every $i \in N$ and every $g \in M$, a \emph{seed allocation} (\cref{def:seed-allocation}, and $k\geq 1$). 
}
\KwResult{A partial allocation $\X^1$ which satisfies \Cref{prop:a,prop:b,prop:c,prop:d,prop:e,prop:f}.\vspace{3pt}}
\While{$\PP(\X) \neq \emptyset$}
{
\nl \Kw{\texttt{Step 1}} \label{step1}
\uIf{there is $i \in N$ with $|X_i|=1$ and a good $g \in \PP(\X)$ such that $v_i(g) > v_i(X_i)$}{
$\PP(\X) \gets (\PP(\X)\cup \{X_i\})\setminus \{g\}$ and $X_i \gets \{g\}$ \;
\tcp{If an agent with 1 good prefers 1 good from the pool, swap them.}}

\nl \Kw{\texttt{Step 2}} \label{step2}
\uElseIf{there is $i \in N$ with $|X_i|=k+1$ and a good $g \in \PP(\X)$ such that $v_i(g) > \frac{k+2}{k+1}v_i(X_i)$}{
$\PP(\X) \gets (\PP(\X)\cup \{X_i\})\setminus \{g\}$ and $X_i \gets \{g\}$ \;
\tcp{Else if an agent with $k+1$ goods prefers 1 good from the pool by more than $(k+2)/(k+1)$, swap her bundle with that good.}
}

\nl \Kw{\texttt{Step 3}} \label{step3}
\uElseIf{there is $i \in N$ with $|X_i|=1$ and $k+1$ goods $g_1,\ldots,g_{k+1} \in \PP(\X)$ such that $v_i(\{g_1,\ldots,g_{k+1}\}) > \frac{k+1}{k+2}v_i(X_i)$}{
$\PP(\X) \gets (\PP(\X)\cup \{X_i\})\setminus \{g_1,\ldots,g_{k+1}\}$ and $X_i \gets \{g_1,\ldots,,g_{k+1}\}$\;
\tcp{Else if an agent with 1 good prefers $k+1$ goods from the pool by more than $(k+1)/(k+2)$, swap that 1 good with the $k+1$ goods from the pool.}
}

\nl \Kw{\texttt{Step 4}}\label{step4}
\uElseIf{there is $i \in N$ with $|X_i|=k+1$, and goods $g \in \PP(\X)$ and $g' \in X_i$ such that $v_i(g) > v_i(g')$}
{ $\PP(\X) \gets (\PP(\X) \cup \{g'\}) \setminus \{g\}$ and $X_i \gets (X_i \cup \{g\}) \setminus \{g'\}$.\;
\tcp{Else if an agent with $k+1$ goods prefers 1 good from the pool to one of her own goods, swap that good with the one good from the pool.}
}

\nl \Kw{\texttt{Step 5}}\label{step5}
\uElseIf{the modified envy graph $\Gr(\X)$ has cycles}{$\X \gets\ACR(\X,\Gr)$\;
\tcp{Else if the modified envy graph has cycles, resolve them by swapping the bundles.}}

\nl \Kw{\texttt{Step 6}}\label{step6}
\uElseIf{there exists a source $s$ in the modified envy graph $\Ge(\X)$ with $|X_s| = 1$}{
     \tcp{If there is a source in the modified envy graph with a single good,}
    
    \nl \Kw{\texttt{Substep 6.1}}\label{step61}
   \uIf{$|\mathcal{P}(\mathbf{X})| \ge k$}{
        Let $Y^{*} \in \arg\max_{Y \subseteq \PP(\X): |Y| = k} v_s(Y)$\;
        $\mathcal{P}(\mathbf{X}) \gets \mathcal{P}(\mathbf{X}) \setminus Y^{*}$\;
        $X_s \gets X_s \cup Y^{*}$\;
        \tcp{if there are at least $k$ goods in the pool, add the $k$ most valuable goods from the pool to the bundle of agent $s$.}
    }
    
    \nl \Kw{\texttt{Substep 6.2}}\label{step62}
    \uElse{
    $X_s \gets X_s \cup \PP(\X)$\;
    \Break\;
     \tcp{if there are at fewer than $k$ goods in the pool, add all of those goods to the bundle of agent $s$ and terminate.}
    }
}
\setcounter{AlgoLine}{6}
\nl \Kw{\texttt{Step 7}} \label{step7}
\uElseIf{there exists a source $s$ and a path $\Pi = (s,\ldots,i)$ in the modified graph $\Ge(\X)$ from $s$ to some agent $i$ such that $|X_i|=1$, and there exists a set of goods $Y \subseteq \PP(\X) \cup X_s$ with $|Y|=k+1$, such that $v_i(Y) > \frac{k+1}{k+2}\cdot v_i(X_i)$}
{$\X \gets \PRPA(\X,\Ge(\X),\Pi,Y)$\;
\tcp{Else there exists a source $s$ in the modified graph, and a path to some agent $i$ that has a singleton bundle, use the \PRPA subroutine to update the allocation, assuming that agent $i$ prefers an appropriate bundle $Y$ of $k+1$ goods from $X_s$ and the pool to her own bundle by a factor of more than $(k+1)/(k+2)$.}}


\nl \Kw{\texttt{Step 8}}\label{step8}
\Else {\Break\;}
}
\Return $\X$ 
\end{algorithm}

\noindent The \GPPA algorithm is presented in Algorithm~\ref{alg:G3PA}. The following lemma establishes its properties:

\begin{lemma}\label{lem:GPA-satisfies-properties}
Let $\X$ be a seed allocation, and let $\X^1 = \GPPA((v_i)_{i \in N},\X)$ be the outcome of the \GPPA algorithm on input $\X$. Then either (i) $\X^1$ is a $\frac{k+1}{k+2}$-\efkx full allocation or (ii) $\X^1$ satisfies \cref{prop:a,prop:b,prop:c,prop:d,prop:e,prop:f}, and, furthermore, the modified graph $\Ge(\X^1)$ has at least one source, and every source $s$ in $\Ge(\X^1)$ has $|X_s^1|=k+1$. 
\end{lemma}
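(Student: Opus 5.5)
Our plan is to prove \cref{lem:GPA-satisfies-properties} with an invariant argument: \cref{prop:a,prop:b,prop:c} (the seed properties) hold before and after every iteration of the while loop. The remaining properties will then be read off from the termination conditions. Termination and polynomiality we handle with a potential argument, which we expect to be the main obstacle.

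\textbf{Step 1: invariance of \cref{prop:a,prop:b,prop:c}.} We check each step in turn, with $\alpha=\frac{k+1}{k+2}$. Our tool is the proxy valuation $\tilde v_i$: an edge of $\Ge$ means $\tilde v_i(X_j)\ge \tilde v_i(X_i)$, and \cref{prop:b,prop:c} combined are essentially equivalent to ``no agent strongly prefers (in $\tilde v$) any $k$-removal of another bundle''. This holds because bundles have size $1$ or $k+1$, so removing $k$ goods from a $(k+1)$-bundle leaves a singleton, and singletons are never harmed.

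In Steps 1--4 only the swapping agent's bundle changes, and her value weakly increases (for Steps 2 and 3 this holds in the $\tilde v$ sense, by the thresholds $\frac{k+2}{k+1}$ and $\frac{k+1}{k+2}$). So her outgoing EFkX guarantees survive. For the guarantees towards her, we use the facts that the goods came from the pool and that the relevant steps were not triggered. Concretely: a singleton agent does not prefer any pool good by Step 1, and a $(k+1)$-agent does not value a pool good above $\frac{k+2}{k+1}$ of her bundle, by Step 2 failing in earlier configurations. The invariance is maintained inductively by keeping \cref{prop:d}-type bounds at each point.

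In Step 5, cycle resolution weakly increases every $\tilde v_i$ and preserves the bundles themselves, so the properties persist. In Step 6 the new bundle of source $s$ is not envied, since $s$ is a source. After adding $Y^*$, any other agent sees at most a singleton after removing $k$ goods, and that singleton is either $X_s$ (not envied) or a pool good (controlled by the failure of Steps 1--2). In Step 7 agents along the path move to bundles they weakly prefer. Agent $i$ receives $Y$ with $v_i(Y)>\alpha v_i(X_i)$, i.e.\ $\tilde v_i(Y)>\tilde v_i(X_i)$. The released $X_s$ returns to the pool; no one strongly envied it because $s$ was a source.

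\textbf{Step 2: properties at termination.} Termination happens in one of three ways. Under Substep 6.2 the allocation is full, and we verify that it is $\alpha$-EFkX as in Step 6, giving outcome (i). If the loop ends because $\PP(\X)=\emptyset$, outcome (i) holds directly. Otherwise we reach Step 8, where Steps 1--7 all failed. Then failure of Steps 1, 2 and 4 gives \cref{prop:d}, and failure of Step 4 gives \cref{prop:e}: each pool good is worth at most the minimum of the $k+1$ goods, hence at most $\frac{1}{k+1}v_i(X_i)$. Failure of Step 3, together with an argument padding any $\le k$ critical goods with pool goods, gives \cref{prop:f}. Failure of Step 5 means $\Ge$ is acyclic, so it has a source, and failure of Step 6 means every source has size $k+1$.

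\textbf{Step 3: termination in polynomial time.} We plan to use a lexicographic potential. Its first component is the sorted vector of the agents' proxy values $\tilde v_i(X_i)$, which weakly increases in every step and strictly increases for some agent in Steps 1--3, 6 and 7. Its second component is the number of allocated goods, or the quality of the multi-good bundles, to account for Step 4. For cycle resolution in Step 5 we use the standard edge-count argument on $\Ge$. The hard part is bounding the number of strict improvements polynomially. We would first try to show that each agent's bundle can only move to a bundle ranked higher in a fixed poly-size family: singletons from $M$, and $(k+1)$-sets built from the top goods. If that fails, we would follow the alternative argument in the appendix referenced in the text.
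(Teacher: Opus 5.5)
\textbf{Gap in the derivation of Property (f) at termination.} Your overall structure matches the paper's: induction over loop iterations for \cref{prop:a,prop:b,prop:c}, then reading \cref{prop:d,prop:e,prop:f} and the source property off the failure of the steps. Your proxy reformulation of \cref{prop:b,prop:c} as ``$\tilde v_i(X_i)\ge v_i(g)$ for every good $g$ of every other $(k+1)$-bundle'' is a clean way to organize the invariance part. The problem is Property (f). You derive it from the failure of Step~3 by padding the at most $k$ critical goods with further pool goods, which requires $|\PP(\X^1)|\ge k+1$. When $1\le|\PP(\X^1)|\le k$, Step~3 is vacuous and there is nothing in the pool to pad with.

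A violation is then not excluded by anything you use. Take $k=2$, $\PP=\{g_1,g_2\}$, and a singleton agent $i$ with $v_i(X_i)=1$ and $v_i(g_1)=v_i(g_2)=0.45$. Both goods are $\frac13$-critical and together worth $0.9>\frac34$. Yet Step~1 cannot fire ($0.45<1$), Step~3 cannot fire (only two pool goods), and Steps~2, 4, 5, 6 need not fire either: for instance, $i$ may point to a unique source holding $k+1$ goods. Your termination analysis never invokes the failure of Step~7, which is exactly the step designed to enforce \cref{prop:f}.

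\textbf{The missing argument (this is the paper's).} Since Step~5 failed, $\Ge(\X^1)$ is acyclic, so $i$ is reachable from some source $s$. Since Step~6 failed, $|X_s|=k+1$, and $i$, being a singleton, is not itself a source. Pad $Y_i^c$ with $k+1-|Y_i^c|$ goods taken from $X_s$. This gives $Y\subseteq \PP(\X^1)\cup X_s$ with $|Y|=k+1$ and $v_i(Y)\ge v_i(Y_i^c)>\frac{k+1}{k+2}v_i(X_i)$, so Step~7 would have fired along $(s,\ldots,i)$. This argument works for every pool size. Step~3 is needed only to show there are at most $k$ critical goods.

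\textbf{Two smaller points.} First, in the Step~7 invariance you should argue about other agents' guarantees towards $i$'s new bundle $Y$, not about the released $X_s$, which is irrelevant to \cref{prop:a,prop:b,prop:c}. Goods of $Y$ from $X_s$ are bounded by the induction hypothesis (or by $s$ being a source), and pool goods by the failure of Steps~1--2 in the current iteration. Second, your Step~3 on termination is not part of this lemma, since the paper proves the running time separately. As sketched, it also does not yet bound the number of iterations.
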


Due to space limitations, the proof of Lemma~\ref{lem:GPA-satisfies-properties} is provided in \cref{app:proofG3PA}.

\noindent Before we conclude the subsection, we prove that the \GPPA algorithm runs in polynomial time, for any fixed $k$. This follows along the same lines of the corresponding proof in \citep[Lemma 3.5.]{amanatidis2024pushing} for the \PPA algorithm. We provide the proof below for completeness.\footnote{The proof as stated in \citep{amanatidis2024pushing} contains an oversight, which was identified by \cite{hv2025almost}, see \cref{rem:amanatidis-et-al-proof} in \cref{app:alternative-running-time-proof}. Our use of the modified graph $\Gr$ also corrects this oversight.}

\begin{theorem}
    Given any fixed $k$, the \GPPA algorithm runs in polynomial time.
\end{theorem}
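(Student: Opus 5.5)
We will prove polynomial running time with a potential argument in the style of \citep[Lemma 3.5]{amanatidis2024pushing}. The plan has three parts: bound the work inside one iteration of the while loop, bound the number of \emph{value-increasing} iterations, and bound the number of \emph{structural} iterations that can occur between two value-increasing ones.

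\textbf{Cost of one iteration.} Each step can be tested and executed in time polynomial in $n,m$ for fixed $k$. Steps 1, 2 and 4 are direct scans over agents, goods, and (for Step 4) pairs of goods. Step 3 only needs agent $i$'s $k+1$ most valuable pool goods. Step 5 finds cycles in $\Gr$ and resolves each in polynomial time. Each resolution strictly decreases the number of edges of $\Gr$, for the following reason. Agents on the cycle receive bundles they weakly prefer under the proxy valuations $\tilde v_i$. Because the edges of $\Gr$ are exactly the envy edges for $\tilde v$, the standard argument of \citet{lipton2004approximately} then shows that no new in-edges are created and the cycle edges vanish. So \ACR makes at most $n^2$ resolutions. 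Step 7 may enumerate all $O(m^{k+1})$ sets $Y\subseteq \PP(\X)\cup X_s$ of size $k+1$, or simply take $i$'s top $k+1$ goods. It also computes reachability from sources to singleton agents. Substep 6.2 terminates the algorithm.

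\textbf{Potential.} Every agent's value $v_i(X_i)$ is non-decreasing throughout, since Step 2 and Step 3 increase it by the stated factors. Hence $\tilde v_i(X_i)$ is non-decreasing too: the proxy factor $1/\alpha$ with $\alpha=\frac{k+1}{k+2}$ exactly matches the thresholds in Steps 2 and 3. Next I would classify the steps by their effect on the proxy potential:
\begin{itemize}
\item Steps 1, 2, 3, 4 and 6.1 strictly increase $\tilde v_i$ for the acting agent, by replacing her bundle or a good with a more valuable one, or by adding goods. In Step 6.1 a singleton becomes a $(k+1)$-bundle.
\item Step 7 strictly increases $\tilde v_i$ for the endpoint $i$.
\item Step 5 weakly increases everyone's $\tilde v$.
\end{itemize}
The key is to turn ``strict increase'' into a polynomial count. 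Values alone do not suffice, so I will use a combinatorial potential instead. I will argue that each agent's bundle, after any strictly improving move, is determined by a position in her preference ranking over candidate bundles. These are singletons, of which there are $m$, and $(k+1)$-sets, of which there are $O(m^{k+1})$. Each agent moves strictly upward in this ranking, comparing by $\tilde v_i$.

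The main obstacle is that bundles of size $k+1$ can move up the ranking $O(m^{k+1})$ times per agent. With total potential $n\cdot O(m^{k+1})$ this is still polynomial for fixed $k$, and I expect it to be enough. The delicate point, and the oversight flagged in \cref{rem:amanatidis-et-al-proof}, is that cycle resolutions and path resolutions also move bundles of \emph{other} agents. I must check that no agent's $\tilde v$ ever decreases in Step 7. Along the path each agent takes the bundle of a successor she $\tilde v$-envies, so $\tilde v$ weakly increases. The source's bundle is released to the pool or into $Y$, and the source receives her successor's bundle. Only $i$'s old singleton leaves, and $i$'s new value strictly exceeds it under $\tilde v$, because $v_i(Y)>\frac{k+1}{k+2}v_i(X_i)$ means $\tilde v_i(Y)>\tilde v_i(X_i)$.

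Combining these facts: between consecutive strictly improving steps there is at most one execution of Step 5, costing at most $n^2$ cycle resolutions. The number of strictly improving steps is at most $n\cdot(m+\binom{m}{k+1})$. Together with the polynomial per-step cost, this gives total time $\mathrm{poly}(n,m^{k+1})$, which is polynomial for fixed $k$.
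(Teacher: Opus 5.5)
Your proof is correct and is essentially the paper's argument. The proxy values $\tilde v_i(X_i)$ never decrease and strictly increase for the acting agent, which bounds the number of while-loop iterations by about $n$ times the number of size-$1$ and size-$(k+1)$ bundles. The paper gets this bound by invoking the ``no agent receives the same bundle twice'' lemma of \citet{amanatidis2024pushing}; you derive it directly and note that Step~5 cannot run twice in a row. The same proxy argument makes every \CR call strictly remove edges of $\Gr$, so \ACR performs at most $n^2$ calls.

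There are two small slips. First, $v_i(X_i)$ itself is not monotone: Steps 3, 5 and 7 can lower it by up to a factor $\frac{k+1}{k+2}$, and only $\tilde v_i(X_i)$ is monotone. Second, the edge-decrease argument needs agents on a cycle to \emph{strictly} prefer their new bundles. So edges of $\Gr$ must be read as $\tilde v_i(X_j) > \tilde v_i(X_i)$, as the paper does in its proof. With only weak preference, an agent swapping into a tied bundle keeps her out-degree, and cycles can persist.
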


\begin{proof}
By \citet[Lemma 3.6]{amanatidis2024pushing}, it is known that during the execution of the algorithm, no agent is allocated the same bundle twice. The extension of their proof from $k=1$ to larger $k$ and to our version of the algorithm is immediate. This means that for the allocation $\X^1$ outputted by the algorithm, there are at most $\binom{m}{k+1} + \binom{m}{1} +1$ bundles that an agent can have; this is because $\X^1$ satisfies \cref{prop:a} and hence each bundle has size either $1$ or $2$. The last term in the sum above corresponds to the only bundle of size $\ell \notin \{1,k+1\}$, which an agent can be assigned only at Step 6.2 of the algorithm, when $\X^1$ is a full allocation. For any $m \geq 1$, this quantity is at most $m^k$. This means that after at most $n\cdot m^k+1$ iterations of the while loop in the \GPPA algorithm, all the choices of bundles will have been considered, and the algorithm will terminate. Therefore, it suffices to show that each step of the algorithm is executed in polynomial time. This is straightforward to see for all steps, except Step~\ref{step5}, which invokes the \ACR subroutine on $\Gr$. Each \CR subroutine in \ACR clearly runs in polynomial time, so it suffices to bound the number of calls to \CR within \ACR. For that, we will prove that in each such call, the number of edges in $\Gr$ strictly decreases, and hence the number of calls is bounded by $n^2$. 

To see this, recall the alternative definition of the modified graph $\Gr$, by means of the proxy valuation function $\tilde{v}_i(\cdot)$. Notice that during the execution of the \GPPA algorithm, whenever agent $i$'s bundle of goods $X_i$ is updated, $\tilde{v}_i(X_i)$ strictly increases. This is because $v_i(X_i)$ strictly increases when $|X_i|=1$, and it decreases by less than a $(k+1)/(k+2)$ factor when $|X_i|=k+1$. By the alternative definition of the modified envy graph $\Gr(\X)$, an edge $(i,j)$ in the graph exists if and only if $\tilde{v}_i(X_j) > \tilde{v}_i(X_i)$. By the two observations above, it follows that every time \CR is called, the number of outgoing edges for agent $i$ strictly decreases, if agent $i$ is involved in the cycle. Otherwise, the number of outgoing edges remains the same. Since at least one agent is involved in every call of \CR, the statement follows. This completes the proof.
\end{proof}

\subsubsection{Allocating the Critical Goods}\label{sec:completing}

We now move on to the second part of the algorithm, which establishes how to allocate the $\frac{1}{k+1}$-critical goods in $\PP(\X^1)$ in order to obtain the partial allocation $\X^2$. Contrary to the results of \citet{amanatidis2024pushing}, here we obtain that, for any $k\geq 2$, \aefkx allocations exist for any number of agents, without any restrictions on the valuation functions.

At first glance, the case of $k \geq 2$ seems equally challenging: although we now have a more relaxed fairness notion, $\PP(\X^1)$ also has more critical goods per agent (up to $k$), which need to be allocated. The crucial insight which enables the unconditional bound is the following: once we allocate some of the critical goods to an agent, the remaining goods are no longer critical, as her allocation has been updated and her utility has increased. This is established via Algorithm~\ref{alg:aec}, which we refer to as \AEC. \medskip

\begin{algorithm}[!ht]
\setstretch{1.05}
\DontPrintSemicolon
\SetNoFillComment
\LinesNumbered
\caption{\textsc{\AEC}
$\left(\left(v_i\right)_{i \in N}, \X, k \right)$}
\label{alg:aec}

\SetKwComment{Comment}{/* }{ */}

\KwData{The values $v_i(g)$ for every $i \in N$ and every $g \in M$, a partial allocation $\X$ which satisfies \cref{prop:a,prop:b,prop:c,prop:d,prop:e,prop:f}, and $k\geq 2$.}
\KwResult{A partial allocation $\X^2$ which is $\frac{k+1}{k+2}$-EFX, and in which $\PP(\X)$ does not contain any $\frac{1}{k+1}$-critical goods.}


\While{there exists an agent $i \in N$ who has a $\frac{1}{k+1}$-critical good $g \in \PP(\X)$}{
    \uIf{$|\PP(\X)| \ge k-1$}{
        Let $Y^{*}\in 
        \arg\max_{Y \subseteq \PP(\X): |Y|=k-1}
        v_i(Y)$\;
        $\PP(\X) \gets \PP(\X) \setminus Y^{*}$\;
        $X_i \gets X_i \cup Y^{*}$\; \label{step-aec-5}
        
        \tcp{If there are enough goods in the pool, allocate the most valuable $k-1$ of them to agent $i$.}
    }
    \Else{
        $X_i \gets X_i \cup \PP(\X)$\; \label{step-aec-7}
        $\PP(\X) \gets \emptyset$\; 
        \tcp{Otherwise, allocate all the remaining goods from the pool to agent $i$.}
    }
}

\Return $\X$\;
\end{algorithm}

\noindent We first prove the following lemma.

\begin{lemma}\label{lem:criticals-eliminated}
Let $i \in N$ be an agent such that $X_i$ was augmented to $X_i'$ in any step of the \AEC algorithm. After that step, the agent does no longer have any $\frac{1}{k+1}$-critical goods in the pool $\PP(\X)$. 
\end{lemma}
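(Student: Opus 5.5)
The plan is to prove the statement together with a simple invariant. An agent's bundle is changed only at the moment it is augmented, and the pool only shrinks during \AEC. So, once the lemma holds for an agent, that agent never again has a $\frac{1}{k+1}$-critical good: her value $v_i(X_i')$ stays fixed and the pool only loses goods. In particular, each agent is augmented at most once. It therefore suffices to analyse the first and only augmentation of agent $i$. At that moment $X_i$ is still her bundle from $\X^1$, and the current pool $\PP(\X)$ is a subset of $\PP(\X^1)$.

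Next I will use the properties from \cref{lem:GPA-satisfies-properties}. If $|X_i|=k+1$, \cref{prop:e} says $\PP(\X^1)$, and hence any subset of it, contains no $\frac{1}{k+1}$-critical good for $i$. Such an agent would never be selected by the while loop. So we may assume $|X_i|=1$.

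Let $C\subseteq \PP(\X)$ be the set of $\frac{1}{k+1}$-critical goods for $i$ at this moment. By \cref{prop:f}, applied to $\PP(\X^1)\supseteq \PP(\X)$, we have $|C|\le k$ and $v_i(C)\le \frac{k+1}{k+2}v_i(X_i)$. The trivial case is the else branch (Line~\ref{step-aec-7}): there the pool becomes empty and nothing remains to check. In the main branch, $Y^*$ consists of the $k-1$ goods of the pool most valuable to $i$. Critical goods are exactly the goods of value above the threshold $\frac{1}{k+1}v_i(X_i)$, so $Y^*$ contains $\min\{|C|,k-1\}$ critical goods, namely the most valuable ones. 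We now split on $|C|$.

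\emph{Case $|C|\le k-1$.} All of $C$ lies in $Y^*$. Every remaining good $g$ satisfies $v_i(g)\le \frac{1}{k+1}v_i(X_i)\le \frac{1}{k+1}v_i(X_i')$, since $v_i(X_i')\ge v_i(X_i)$ by additivity and non-negativity.

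\emph{Case $|C|=k$.} This is the main obstacle: exactly one critical good $g$ survives, and it is the least valuable element of $C$. Then $k\,v_i(g)\le v_i(C)\le \frac{k+1}{k+2}v_i(X_i)$, and hence $v_i(X_i)\ge \frac{k(k+2)}{k+1}v_i(g)\ge 2v_i(g)$. The last inequality is equivalent to $k^2\ge 2$, which is exactly where $k\ge 2$ is needed. Each of the $k-1$ goods in $Y^*$ has value at least $v_i(g)$, so
\[
v_i(X_i')\ge v_i(X_i)+(k-1)v_i(g)\ge (k+1)v_i(g).
\]
Thus $g$ is no longer critical, and the non-critical goods remain non-critical as in the first case. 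This completes the argument and also confirms the invariant used at the start.
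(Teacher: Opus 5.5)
Your proof is correct and follows the paper's argument. You reduce to $|X_i|=1$ via \cref{prop:a,prop:e}, invoke \cref{prop:f}, and split on whether there are at most $k-1$ or exactly $k$ critical goods, with $k\ge 2$ entering only in the second case. The differences are minor: you bound the $k-1$ allocated goods from below by the surviving good $g$ and reduce to $k^2\ge 2$, whereas the paper bounds them by the criticality threshold and compares $\frac{2k}{(k+1)^2}$ with $\frac{k+1}{k(k+2)}$; you also make explicit the ``augmented at most once, pool only shrinks'' invariant that the paper uses implicitly when applying the input properties to the current bundle.
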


\begin{proof}
Since the input partial allocation $\X$ to \AEC satisfies \cref{prop:a,prop:e}, it must be the case that $|X_i|=1$, as agents with $k+1$ goods do not have any $\frac{1}{k+1}$-critical goods in $\PP(\X)$. Additionally, by \cref{prop:f}, for the set of $\frac{1}{k+1}$-critical goods $Y_i^c$, we have that $|Y_i^c| \leq k$, and that $v_i(Y_i^c) \leq \frac{k+1}{k+2}\cdot v_i(X_i)$. Obviously, if $|Y_i^c| \leq k-1$, then the algorithm will allocate all of the goods in $Y_i^c$ to the agent in $\X'$(either in Step~\ref{step-aec-5} or Step~\ref{step-aec-7}), and $Y_i^c \cap \PP(\X') = \emptyset$. So it remains to consider the case when $|Y_i^c|=k$.

Let $Y_i^c=\{g_1,\ldots,g_k\}$ and assume without loss of generality that $v_i(g_\ell) \geq v_i(g_{\ell+1})$ for all $\ell \in \{1,\ldots, k-1\}$. We have
\begin{equation*}
v_i(g_k) \leq \frac{1}{k}v_i(Y_i^c) \leq \frac{k+1}{k(k+2)}\cdot v_i(X_i).
\end{equation*}
Furthermore, we have 
\begin{equation*}
v_i(Y_i^c \setminus \{g_k\}\} \geq (k-1)\cdot v_i(g_{k-1}) > \frac{k-1}{k+1}\cdot v_i(X_i).
\end{equation*}
where the first inequality follows from the additivity of the valuation function (since $g_{k-1}$ is the least valuable good in $Y_i^c \setminus \{g_k\}$), and the second inequality follows from the fact that $g_{k-1}$ is $\frac{1}{k+1}$-critical, and hence $v_i(g_{k-1}) > \frac{1}{k+1}v_i(X_i)$. After the allocation $X_i$ gets updated to $X_i'$, the whole of $Y_i^c\setminus\{g_k\}$ is allocated to agent $i$ in Step~\ref{step-aec-5}, and we have
\begin{align*}
\frac{1}{k+1}v_i(X_i') &= \frac{1}{k+1}v_i(X_i \cup Y_i^c \setminus \{g_k\}) \geq \frac{1}{k+1} \left(v_i(X_i) + \frac{k-1}{k+1}v_i(X_i)\right) \\
&= \frac{2k}{(k+1)^2}\cdot v_i(X_i) > \frac{k+1}{k(k+2)} v_i(X_i) \geq v_i(g_k),
\end{align*}
where the second to last inequality follows since $k \geq 2$. This establishes that $g_k$ is no longer a critical good for agent $i$ in $\PP(\X')$.
\end{proof}

We are now ready to conclude the correctness proof of the algorithm.

\begin{theorem}\label{thm:AEC}
Let $X$ be a partial allocation that satisfies \cref{prop:a,prop:b,prop:c,prop:d,prop:e,prop:f} and let $\tilde{X}$ be the output of $\AEC((v_i)_{i\in N},\X)$. Then $\tilde{X}$ is a partial \aefkx allocation such that $\PP(\tilde{X})$ does not contain any $\frac{1}{k}$-critical goods. Furthermore, the algorithm runs in polynomial time.
\end{theorem}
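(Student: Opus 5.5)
The plan has three parts: termination with a polynomial iteration bound, absence of critical goods in the final pool, and preservation of $\frac{k+1}{k+2}$-\efkx. Throughout I read ``$\frac{1}{k}$-critical'' in the statement as the $\frac{1}{k+1}$-critical notion the algorithm's loop condition actually tests. Since $\frac{1}{k+1}<\frac{1}{k}$, this is the stronger conclusion anyway.

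\textbf{Termination and running time.} Every iteration augments the bundle of some agent $i$ that currently has a critical good. By \cref{lem:criticals-eliminated}, after being augmented, $i$ has no $\frac{1}{k+1}$-critical goods in the pool. Later iterations only remove goods from the pool and never shrink $X_i$, so $v_i(X_i)$ is nondecreasing and the pool only shrinks. Hence $i$ never again has a critical good and is never augmented twice. The while loop therefore runs at most $n$ times. Each iteration needs only to detect a critical good and pick the $k-1$ most valuable pool goods for $i$, both polynomial, so the algorithm runs in polynomial time. At termination the loop condition fails, so no agent has a $\frac{1}{k+1}$-critical good in $\PP(\tilde{\X})$.

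\textbf{Only singleton agents are augmented.} By \cref{prop:a,prop:e} of the input, an augmented agent $i$ had $|X_i|=1$ in $\X$. Bundles of non-augmented agents are untouched. The final allocation therefore has three kinds of bundles: original bundles of size $1$ or $k+1$, and augmented bundles $X_i'=\{h\}\cup Y^*$ with $|Y^*|\le k-1$, so $|X_i'|\le k$.

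\textbf{Fairness, split by the direction of the comparison.} Fix agents $i\neq j$ in $\tilde{\X}$.
\begin{itemize}
\item \emph{Case 1: $j$ was augmented.} Then $|\tilde{X}_j|\le k$, so removing any $k$ goods leaves the empty set, and $i$ is trivially \efkx towards $j$.
\item \emph{Case 2: $j$ was not augmented.} Then $\tilde{X}_j=X_j$. Since $v_i(\tilde{X}_i)\ge v_i(X_i)$, the $\frac{k+1}{k+2}$-\efkx guarantee of \cref{prop:c} for $i$ towards $j$ in $\X$ carries over directly.
\end{itemize}
These two cases cover all pairs, which gives the theorem.

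The step I expect to need the most care is the invariance argument: an agent cleared of critical goods must stay cleared. This relies on two facts. First, \cref{lem:criticals-eliminated} applies at whatever point $i$ is augmented, which needs \cref{prop:f} to still hold for $i$ then. It does, because $X_i$ is unchanged up to that point and the pool has only shrunk. Second, $v_i(X_i)$ is monotone, which is immediate here because bundles are never reassigned. I also need to check the branch where the pool has fewer than $k-1$ goods: there the pool becomes empty, so the claim holds trivially.
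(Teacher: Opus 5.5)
Your proof is correct and follows the paper's argument essentially step for step. Both use termination via \cref{lem:criticals-eliminated} (each agent is augmented at most once), absence of critical goods from the failed loop condition, and the same two-case fairness argument: augmented bundles have at most $k$ goods, and unaugmented bundles inherit \cref{prop:c} because the envious agent's value only increases. Your additional checks fill in details the paper leaves implicit: that \cref{prop:e,prop:f} still hold for an agent at the moment she is augmented, and that reading $\frac{1}{k}$-critical as the stronger $\frac{1}{k+1}$ notion is what the algorithm actually guarantees.
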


\begin{proof}
By definition, as long as the algorithm terminates, there will not be any $\frac{1}{k}$-critical goods in the pool $\PP(\tilde{X})$. Termination follows from \cref{lem:criticals-eliminated}, as each agent's bundle is augmented at most once; after that the agent no longer has any $\frac{1}{k}$-critical goods, and therefore her bundle is not augmented again by the algorithm. It is also straightforward to see that each step of the algorithm runs in polynomial time. Therefore, it remains to argue that $\tilde{X}$ is a \aefkx partial allocation. 
Consider any agent $i \in N$ and any other agent $j \in N \setminus \{i\}$. We consider two cases:
\begin{itemize}[leftmargin=*]
     \item[-] \emph{Case 1: $\tilde{X}_j \supset X_j$, i.e., agent $j$'s bundle was augmented during the execution of the algorithm.} In this case, by \cref{prop:e}, we have $|X_j|=1$. Therefore, since the algorithm does not augment bundles by more than $k-1$ goods, it follows that $|\tilde{X}_j|\leq k$. This implies that agent $i$ is \efkx towards agent $j$. 
     \item[-] \emph{Case 2: $\tilde{X}_j = X_j$, i.e., agent $j$'s bundle was not augmented during the execution of the algorithm.} By \cref{prop:c}, agent $i$ was \aefkx towards agent $j$ in $\X$. Furthermore, it holds that $v_i(\tilde{X}_i) \geq v_i(X_i)$, and therefore agent $i$ is \aefkx towards agent $j$ in $\tilde{\X}$ as well.
\end{itemize}
This completes the proof.
\end{proof}

\subsection{A $2/3$-EFX allocation for 8 agents}
Our results in the previous section obtain $\frac{k+1}{k+2}$-approximations to \efkx for any $k \geq 2$. As we noted in the Introduction, for $k=1$ and general additive valuations, \citet{amanatidis2024pushing} managed to get a $2/3$ approximation to EFX only where there are at most $7$ agents. We improve this result to the case of $8$ agents in the following theorem.

\begin{theorem}\label{thm:8-agents-23-efx}
There is a polynomial-time algorithm which computes $2/3$-EFX allocations for instances with at most $8$ agents. 
\end{theorem}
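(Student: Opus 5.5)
We follow the same three-phase template as \AEFKX, now with $k=1$, $\alpha=2/3$ and $\beta=1/2$. First, we run \GPPA with $k=1$ on a seed allocation in which every agent holds a single good. By \cref{lem:GPA-satisfies-properties}, either we already obtain a full $2/3$-EFX allocation, or we obtain a partial allocation $\X^1$ satisfying \cref{prop:a,prop:b,prop:c,prop:d,prop:e,prop:f} with $k=1$. In the latter case:
\begin{itemize}
\item every bundle has size $1$ or $2$;
\item agents holding two goods have no $\frac12$-critical goods in the pool;
\item each agent $i$ holding a single good has at most one $\frac12$-critical good $c_i$, with $v_i(c_i)\le \frac23 v_i(X_i)$;
\item every source of $\Ge(\X^1)$ holds two goods.
\end{itemize}
Third, once we reach a $2/3$-EFX partial allocation whose pool contains no $\frac12$-critical good, \cref{lem:partial-efx-to-full-efx} with $\alpha=2/3$ and $\beta=1/2$ finishes the argument in polynomial time, since $\min\{2/3,1/(1+1/2)\}=2/3$. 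The entire difficulty therefore lies in the middle phase: designing a polynomial-time procedure that eliminates the critical goods while preserving $2/3$-EFX. \AEC cannot do this, because for $k=1$ it adds $k-1=0$ goods.

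For the middle phase we partition the singleton agents that have a critical good. Call a critical good \emph{uncontested} if it is critical for exactly one agent, and \emph{contested} otherwise. Let $C$ be the set of distinct critical goods in the pool and $S$ the set of sources of $\Ge(\X^1)$, each of which holds two goods.

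\textbf{Uncontested goods.} Giving $c_i$ to agent $i$ makes her bundle $\{x_i,c_i\}$. Others remain $2/3$-EFX towards her: they were EFX towards $\{x_i\}$ by \cref{prop:b}, and they did not find $c_i$ critical, so their value for $c_i$ is at most half their bundle. She herself becomes $2/3$-EFX towards everyone, because her value has only increased. After this step, no pool good is critical for her.

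\textbf{Contested goods.} Here we exploit the sources. A source $s$ with $X_s=\{a,b\}$ is not envied in $\Ge$, so its bundle can be broken up. One good can be handed, through \PRPA-style path reallocation, to a singleton agent whose critical good is contested; the other good goes to the pool, or is combined with a critical good. The aim is to eliminate two critical agents per source, or one contested good per source. We would split into cases according to $|C|$ relative to $|S|$ and $n-|S|$, repeatedly using \cref{prop:d,prop:f}. In particular, we would use that a contested good $c$, critical for agents $i$ and $j$, satisfies $v_i(c)\le\frac23 v_i(x_i)$, so that pairing $c$ with another good makes the pair desirable by a factor of at least $2/3$.

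The main obstacle is the counting. Every non-source singleton agent must be reachable from some source in the acyclic graph $\Ge$. We need to show that with at most $8$ agents, either there are enough sources to absorb all contested critical goods, or some singleton agent can be given a contested good together with a good from a source bundle without violating $2/3$-EFX. We would argue by contradiction. If a configuration has too many contested goods relative to sources, then since at least one agent is a two-good source and each contested good needs at least two singleton agents, the number of agents is forced up. We would then check by exhaustive case analysis that configurations with $n\le 8$ always admit a resolving move. This analysis is where the extension from $7$ to $8$ agents is delicate: the case of exactly one source with three contested critical goods shared among the remaining seven agents is the tight case we expect to handle separately. It may require first making a two-good agent a singleton via the Step~2/Step~3-type swaps in \GPPA and then rerunning \GPPA.

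Polynomial running time follows from \GPPA's polynomial bound, from the fact that each middle-phase step permanently removes at least one critical good or critical agent, and from \ECE being polynomial.
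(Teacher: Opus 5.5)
Your outer frame matches the paper: run a \GPPA-type algorithm with $k=1$, remove all $\frac12$-critical goods, and finish with \cref{lem:partial-efx-to-full-efx}. The middle phase, however, contains a concrete error, and it leaves the real content of the theorem unproved.

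\textbf{The uncontested step fails.} Handing $c_i$ directly to the singleton agent $i$ requires every $j\neq i$ to satisfy $v_j(X_j)\ge\frac23 v_j(x_i)$, since removing $c_i$ from $\{x_i,c_i\}$ leaves $x_i$. \cref{prop:b} does not give this. It constrains agent $i$'s envy, not envy towards $i$, and being EFX towards a singleton is vacuous. So nothing in \cref{prop:a,prop:b,prop:c,prop:d,prop:e,prop:f} bounds $v_j(x_i)$ in terms of $v_j(X_j)$.

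In fact, all sources of $\Ge(\X^1)$ hold two goods, so $i$ is not a source and has some in-edge $(j,i)$. If $|X_j|=2$, this edge already says $v_j(x_i)\ge\frac32 v_j(X_j)$. If $|X_j|=1$, then $v_j(x_i)$ can be arbitrarily larger than $v_j(x_j)$. For example, $v_j(x_j)=1$, $v_j(x_i)=10$, $v_j(c_i)=0$ is compatible with termination of \GPPA, and then $j$ is not $2/3$-EFX towards $\{x_i,c_i\}$.

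This is why the paper, following \citet{amanatidis2024pushing}, allocates uncontested critical goods with \UCC. It picks a source $s$ of the cycle-free envy graph with a path to $i$. Depending on how $i$ values $X_s\cup\{g_i\}$ relative to her own bundle, it either adds $g_i$ to the unenvied bundle $X_s$, or resolves the path and gives $i$ the bundle $X_s\cup\{g_i\}$.

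\textbf{The contested step is only a plan.} ``We would split into cases'' and ``we would check by exhaustive case analysis'' is exactly where the theorem lives. Your counting is the right start: $m_c\le 3$, because some two-good source has no critical good, each contested good is critical for at least two agents, and each agent has at most one critical good. You also identify the tight configuration correctly. But the paper needs five cases. The hardest one (one source, three contested goods, and a non-source agent $j$ with two goods) splits further into sub-cases 5.1--5.5 with explicit, non-obvious reallocations. For instance, in 5.5, $j$ gets $X_s\cup Y$, $s$ gets $X_k\cup\{g_2\}$, and $k$ gets $\{g_1,g\}$. Your fallback of turning a two-good agent into a singleton and rerunning \GPPA comes with no argument that it terminates or that it helps.

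Moreover, plain \GPPA is too weak as the first phase. The paper uses \GPPAplus, whose extra step guarantees that every two-good agent $i$ reachable from a source $s$ values any two goods of $X_s\cup\PP(\X^1)$ at most $v_i(X_i)$ (\cref{eq:X2S2}). This is what keeps two-good agents $2/3$-EFX when the unique source absorbs two contested goods (Case 2 of \ACC). Without it, $v_i(X_s)$ close to $v_i(X_i)$ plus two non-critical goods can exceed $\frac32 v_i(X_i)$ after removing one good. Your proposal has no substitute for this inequality.
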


\noindent The proof of the theorem is included in \cref{app:8-agents}; here we provide some high-level intuition. The algorithm that we construct makes use of the \GPPA algorithm, or, to be precise, a slight modification that we refer to as \GPPAplus, see Algorithm~\ref{alg:G3PA+}. This variant introduces an extra step and still produces a partial allocation $\X^1$ which satisfies \cref{prop:a,prop:b,prop:c,prop:d,prop:e,prop:f}. This is completely analogous to the approach in \citep{amanatidis2024pushing} for their $7$-agent $2/3$-EFX algorithm, and our \GPPAplus algorithm is a generalization of their \PPAplus algorithm.  

This extra step is used in order to argue that the partial allocation produces after allocated \emph{contested} critical goods is $2/3$-EFX. A contested critical good is a good that is critical for at least two agents at the same time. Once the contested critical goods are allocated, an algorithm called \UCG (see Algorithm~\ref{alg:ucc} in \cref{app:8-agents}) is called to allocate the remaining critical goods. Compared to \citep{amanatidis2024pushing} the algorithm for allocating the contested critical goods in our case is notably more involved; we refer to this algorithm as \UCC, see Algorithm~\ref{alg:acc} in \cref{app:8-agents}. \UCC performs a different allocation depending on the number of contested critical goods and the number of sources in the modified envy graph. We present the details of the algorithms and their analysis in \cref{app:8-agents}.

\section{\efkx Orientations on Graphs}\label{sec:efkx-orientations}
In the section, we consider \efkx orientations on graphs. An instance of our fair allocation setting on a graph is defined by the valuation functions as in \cref{sec:preliminaries}, with the additional restriction that for each good $g$, there are exactly two agents $i_g$ and $j_g$ such that $v_{i_g}(g) >0$ and $v_{j_g}(g) >0$.\footnote{The original definition in \citep{christodoulou2023fair} allows for $v_{i_g}(g)=0$ or $v_{j_g}(g)=0$. Since we only prove negative results in this section, this makes our results stronger. We also remark that the only result that we use from \citep{christodoulou2023fair}, namely their NP-hardness proof for deciding the existence of EFX-orientations, uses instances which are consistent with our definition.} Alternatively, one can interpret this setting as an undirected graph $H=(N,M)$ in which the nodes are the agents, the goods are the edges, and for each good and corresponding edge $(i_g,j_g)$, the only agents that value the good positively are $i_g$ and $j_g$. Given this interpretation, we can use goods $g$ and edges $(i_g,j_g)$ interchangeably. We define the notion of \emph{orientation} below.

\begin{definition}\label{def:orientation}
Given an instance of the fair division problem on a graph $H$, an \emph{orientation} is a full allocation $\X$ such that for each good $g=(i_g,j_g) \in M$, $g \in X_{i_g} \cup X_{j_g}$. 
\end{definition}

\noindent By \cref{def:orientation}, in an orientation, it holds that $|X_i|\leq n-1$ for any agent $i$. Hence, the only meaningful values of $k$ for which we need to consider \efkx orientations are $k \in [1,n-1]$. For ease of reference, we will say that a node $i$ \emph{receives an edge} $g$ in an allocation $\X$, if $g \in X_i$ for the corresponding agent $i$ and good $g$. Alternatively, we can view an orientation as a transformation of $H$ into a directed graph $H_d(\X)$, where each edge $(i,j) \in M$ is oriented towards the node that receives it.  \medskip

\noindent For our first result of the section, we present a family of instances where \efkx orientations do not exist, for $k = \Omega(n)$.

\begin{theorem}\label{thm: non-existence-orientation}
    Let $k = \Omega(n)$. There exist instances of the fair allocation problem on graphs for which \efkx orientations do not exist.
\end{theorem}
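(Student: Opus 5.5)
The plan is to build an explicit family of graph instances and show that no orientation can be \efkx.

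\textbf{Key observation.} In a simple graph, agent $i$ values only the single edge $ij$ inside $X_j$. Every other good of $X_j$ is worth $0$ to $i$. Since the definition of $\alpha$-\efkx quantifies over \emph{every} $Y\subseteq X_j$ with $|Y|=k$, zero-valued goods included, $i$ may choose to remove $k$ goods of value $0$. So an orientation fails to be \efkx exactly when some $j$ holds the edge $ij$, $|X_j|\ge k+1$, and $v_i(X_i)<v_i(ij)$. Conversely, any bundle with at most $k$ goods is never envied in the \efkx sense. Non-existence therefore comes from two forces acting together:
\begin{itemize}[leftmargin=*]
\item[] \emph{(a) counting:} some agents are forced to hold at least $k+1$ edges;
\item[] \emph{(b) weights:} every agent holding $k+1$ or more edges must have a neighbour $i$ whose edge it holds and who is unsatisfied, meaning $i$'s own bundle is worth less than that edge.
\end{itemize}

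\textbf{Counting step.} For (a) I would use a dense core on $\Theta(k)$ agents, such as a complete graph on $2k+3$ agents, possibly augmented with extra vertices. Its edge count exceeds what can be spread so that every agent holds at most $k$ edges. Indeed, $K_{2k+3}$ has $(2k+3)(k+1)$ edges, so by pigeonhole some agent has in-degree at least $k+1$. This is also why the family has $n=O(k)$, i.e.\ $k=\Omega(n)$. The core alone is not enough: with uniform weights a regular tournament orientation is \efkx.

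\textbf{Weights step.} For (b) I would give each agent $i$ one \emph{special} heavy edge, of value $1$ to $i$, and make all its other edges light, of value $\varepsilon$. The special edges should be arranged cyclically, in the spirit of the triangle gadget of \citet{christodoulou2023fair}. The intended effect is that an agent holding its own special edge is ``satisfied'' and must keep its bundle of size at most $k$. An agent not holding it has value at most $(n-1)\varepsilon<1$, so any agent holding that special edge must have at most $k$ edges. The resulting constraints should be a system of the form ``whoever holds $e$ has in-degree at most $k$'' together with in-degree pigeonhole. I would then derive a contradiction by summing: at most $k$ edges per holder of a special edge, plus the forced number of edges for the remaining agents, should exceed the total edge count.

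\textbf{Main obstacle.} The hardest step is choosing the exact graph and weights so that this accounting genuinely has no slack. Each agent holding a heavy edge imposes a cap of $k$ only on \emph{that} holder, and an orientation might route all overflow onto agents whose special edge is held elsewhere. I would first try $K_{2k+3}$ with the special edges forming a Hamiltonian cycle. If slack remains, I would add pendant ``overflow'' vertices attached to every core agent with moderate weights, so that they raise the edge count without relieving the caps. If a direct construction resists, a fallback is to take the EFX-nonexistence instance of \citet{christodoulou2023fair}. In it I would replace each agent by a copy carrying $k-1$ forced extra edges, using dense blocks of size $O(k)$ to force them, so that the \efkx condition on the blown-up instance reduces exactly to the EFX condition on the original.
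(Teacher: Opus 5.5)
\textbf{There is a genuine gap in the weights step.} Your reduction of \efkx in a simple graph to the condition ``$j$ holds $ij$, $|X_j|\ge k+1$ and $v_i(X_i)<v_i(ij)$'' is correct, and clique pigeonhole is the right counting tool. But as you describe the weights, nothing forces any agent to be dissatisfied, so no in-degree cap ever binds. The cap falls on whoever holds \emph{someone else's} heavy edge, and your gadget never forces anyone into that position.

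Suppose each agent $i$ has its own special edge $e_i$ that is heavy only for $i$, e.g.\ $e_i=(i,i+1)$ on a Hamiltonian cycle of $K_{2k+3}$, light for $i+1$. Orient each $e_i$ towards $i$ and the light edges arbitrarily. Then every agent has value at least $1$, while the only edge it values in any other bundle is worth $\varepsilon$ to it. This orientation is envy-free, hence \efkx.

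The same happens if every Hamiltonian-cycle edge is heavy for both endpoints. Orient the cycle cyclically: each agent holds a heavy edge worth as much as any edge it sees elsewhere. So your sum ``at most $k$ edges per holder of a special edge'' has nothing to sum over. The pendant ``overflow'' vertices and the blow-up of the \citet{christodoulou2023fair} instance are too unspecified to close this. In particular, forcing $k-1$ extra edges into each bundle needs a forcing gadget, which is exactly the missing argument.

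\textbf{The missing idea is to make the heavy edges shared.} Each heavy edge should be heavy for \emph{both} endpoints, and the heavy edges should form a perfect matching. Then in every orientation exactly one endpoint of each heavy edge loses it. The paper takes $K_{4k+2}$ with heavy edges of value $>4k+1$ forming a perfect matching (embedded in an alternating Hamiltonian cycle), and all other edges of value $1$. The argument then runs as follows.
\begin{itemize}
\item[] A loser holds only light edges, so her value is at most $4k$, which is less than the heavy value. She is therefore unsatisfied towards her partner.
\item[] The $2k+1$ winners are pairwise joined only by light edges, so they span a $K_{2k+1}$ with $k(2k+1)$ edges. Some winner therefore receives at least $k$ of these edges.
\item[] Together with her heavy edge, this winner holds at least $k+1$ goods. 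Removing $k$ light edges leaves the heavy edge, which her partner values more than her whole bundle.
\end{itemize}
So the clique you count in must be the set of winners, not the whole graph. Its size must be $2k+1$: the average in-degree is then $k$ and the heavy edge supplies the extra good. This is what fixes $n=4k+2$ and yields $k=\Omega(n)$.
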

\begin{proof}
    Consider an instance of the fair allocation problem on a graph $H=(N,M)$ with $n= 4k + 2$ nodes and the following two types of edges with symmetric valuations: 
    \begin{quote}
  \emph{Heavy} edges with value $\alpha>4k+1$ and \emph{Light} edges with value $1$.
\end{quote}
The set $M$ is defined as follows. Fix an ordering $1,\ldots,n$ of the nodes of the graph, and, for $i=1,\ldots,n-1$, add an edge $(i,i+1)$ according to this ordering. Additionally, add an edge $(n,1)$, essentially creating a (Hamiltonian) cycle $C$. For each edge $(i,i+1)$ in this ordering, let the edge be a \emph{heavy} edge if $i$ is even, and let the edge be a \emph{light} edge if $i$ is odd, and set the valuations of the endpoints of the edge accordingly (with both endpoints having the same value for the edge). For edge $(n,1)$, let the edge be a \emph{heavy} edge. Since $n$ is even, the edges of the cycle $C$ alternate between heavy and light edges. Finally, for any pair of nodes $\{i,j\}$ other than the pairs in $C$, add an edge $(i,j)$ to $M$ and let it be a \emph{light} edge, setting the values of the corresponding endpoint agents accordingly. In the end, the graph is the $K_{4k + 2}$ graph. \medskip
    
\noindent Assume by contradiction that an \efkx orientation $\X$ exists on $H$ and consider all the nodes and edges of $C$. In $\X$, exactly $2k+1$ nodes must receive the sole heavy edge that they are incident to; let $N_h$ denote the set of those nodes. Consider the $K_{2k+1}$ subgraph $H' = (N_h, M_h)$, where $M_h$ is the set of light edges between the nodes in $N_h$; see \cref{fig: ef2x-orientation-imposs} for an illustration in the case of $k=2$. Since $H'$ is a complete graph on $N_h$ nodes, it has a total of $(|N_h|\cdot (|N_h| -1))/2$ edges. \medskip

\noindent We now consider the orientation of the edges in $M_h$ in $\X$, and let $H_d'(\X)$ be the corresponding directed graph of $\X$. Let $d_i^{-}$ be the \emph{in-degree} of any node $i \in N_h$ in this graph. By the total number of edges of $M_h$ calculated above, it follows that there must exist some node $i \in N$ with $d_i^{-} \geq \left\lfloor \frac{|N_h|-1}{2}\right\rfloor$. Since $|N_h|=2k+1$ by construction, this implies that $d_i^{-} \geq k$. \medskip

\noindent Since $i \in N_h$, by definition the node received some heavy edge $(i,j)$, for some $j \in N$. By construction of $H$, node $j$ is incident to $n-1$ light edges. Therefore, we can bound $v_j(X_j)$ as:
\[
    v_j(X_j) \leq (n - 1) = 4k + 1
\] 

\noindent Let $Y \subset X_i$ be a bundle consisting of $k$ light edges. Since $X_i$ also contains a heavy edge, it holds that $v_j(X_i\setminus Y) \geq \alpha > 4k+1$. This implies that $\X$ is not \efkx, and we obtain a contradiction.
\end{proof}

\begin{remark}\label{rem:orientation-non-existence}
A closer look at the proof of \cref{thm: non-existence-orientation} reveals that the proof in fact establishes something stronger, namely that $\alpha$-\efkx orientations do not always exist, for any $\alpha \in (0,1]$. We elected to make the simpler statement about (exact) \efkx orientations instead, to be consistent with the NP-completeness result of \cref{thm:NP-c-orientation} below.
\end{remark}

\begin{figure}[H]
    \centering
    \includegraphics[width=0.5\linewidth]{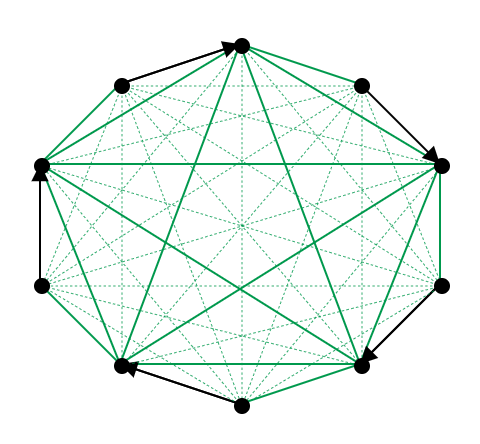}
    \caption{The graph $H=(N,M)$ used in the proof of \cref{thm: non-existence-orientation}, when $n=10$. The black edges denote the heavy edges and the green edges denote the light edges. $N_H$ is the set of nodes incident to the inner $K_5$ graph of light edges. We can easily verify that for any orientation of the highlighted inner $K_5$ graph, there exists a node in $N_h$ with at least two incoming edges.}
    \label{fig: ef2x-orientation-imposs}
\end{figure}

\noindent Next, we move on to our NP-completeness result. To this end, we leverage the already known NP-completeness result of \citet{christodoulou2023fair} for deciding whether an EFX orientation on a graph exists. We use that result as the base case and prove the NP-hardness part of the result using induction. The membership of the problem in NP is straightforward. First we provide the formal definition of the associated computational problem, and then we present the corresponding theorem and the proof.

\begin{center}
\fbox{
\begin{minipage}{0.9\linewidth}
\underline{Computational Problem:} \textsc{\efkx Orientation}

\smallskip 

\textbf{Input:} Graph $H = (N,M)$ and values $v_i(g)$ for every $i \in N$ and every $g \in M$ such that $i$ is incident to $g$ in $H$.

\textbf{Output:} \textsc{Yes} if $H$ admits an \efkx orientation $\X$ and \textsc{No} otherwise.
\end{minipage}
}
\end{center}

\begin{theorem}\label{thm:NP-c-orientation}
    Let $k \leq n-1$. \textsc{\efkx Orientation} is NP-complete.
\end{theorem}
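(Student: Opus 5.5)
Membership in NP is immediate: given an orientation $\X$, we can check in polynomial time that it is an orientation and that every ordered pair $(i,j)$ satisfies \efkx. For additive values, agent $i$ is \efkx towards $j$ if and only if $v_i(X_i) \geq v_i(X_j \setminus Y^*_{ij})$, where $Y^*_{ij}$ is the set of the $k$ goods of $X_j$ most valuable to $i$. If $|X_j| \le k$ the condition holds trivially. So NP-hardness is the real content. I would prove it by induction on $k$, using the NP-hardness of EFX orientations due to \citet{christodoulou2023fair} as the base case $k=1$. The inductive step is a polynomial reduction from \textsc{EF$(k-1)$X Orientation} to \textsc{\efkx Orientation}.

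For the inductive step, take an instance $H=(N,M)$ of the $(k-1)$ problem and build $H'$ by attaching a gadget to every node $i \in N$. The gadget forces $i$ to receive exactly one extra good $e_i$ that is worthless (or negligibly valued) for $i$ itself. Every other agent $j$ adjacent to $i$ in $H$ then effectively gets one additional removal when evaluating \efkx towards $i$, because the best $k$-removal from $X_i \cup \{e_i\}$ is the best $(k-1)$-removal from $X_i$ plus $e_i$. Since $e_i$ is not valued by $j$, removing it costs nothing. I would make $e_i$ an edge $(i,a_i)$ to a fresh agent $a_i$ valued $0$ by $i$ (or some $\varepsilon$ if the footnote's positivity convention forces it). The value for $a_i$ should be chosen so that $a_i$ must give the edge to $i$, for instance by attaching a pendant structure that gives $a_i$ a large private edge, making $a_i$ indifferent but not envious. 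The gadget agents must also be satisfied. Their only positively valued goods are gadget edges, so \efkx conditions involving them reduce to local checks.

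The correctness proof has two directions. If $H$ has an EF$(k-1)$X orientation $\X$, extend it by giving each $e_i$ to $i$ and orienting gadget edges in a canonical way. Then verify \efkx: original pairs pass because one removal is spent on $e_i$, and gadget pairs pass by the local analysis. Conversely, from an \efkx orientation of $H'$, first argue that the gadget forces $e_i \in X_i$. Otherwise some gadget agent would violate \efkx, since its heavy private edge would be taken by a neighbor whose bundle minus $k$ goods still beats it. Then restrict the orientation to $M$; the extra removal corresponds exactly to removing $e_i$, which yields EF$(k-1)$X on $H$.

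The main obstacle is designing the gadget so that $e_i$ is both forced onto $i$ and truly worthless to all other original agents, while keeping values positive only at endpoints. The difficulty is that the removal of $e_i$ must be optimal from $j$'s perspective. With $v_j(e_i)=0$ automatically (since $j$ is not an endpoint), the best $k$ removals include $e_i$ only if $X_i\setminus\{e_i\}$ has fewer than $k$ goods. Otherwise, removing $k$ goods of $X_i$ already dominates, so the correspondence is not exact. I would therefore instead attach $k-1$ \emph{valuable-to-nobody-else} but forced goods in a different way. The first thing to try is to make $e_i$ valued highly by $i$, to be counted as a dummy in the bundle of $i$ that others (non-endpoints) value $0$. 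If this breaks the equivalence, I would reconsider the direction: use a gadget where $i$'s neighbors $j$ receive an additional forced edge of value exactly matching one removed good. I expect tuning this gadget, possibly mirroring the heavy/light clique trick of \cref{thm: non-existence-orientation}, to be the crux.
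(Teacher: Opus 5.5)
You have the same architecture as the paper: induction on $k$ from the EFX-orientation hardness of \citet{christodoulou2023fair}, and a reduction that gives every original node of degree at least $k$ exactly one extra good of negligible value that no other original agent values. In the paper these extra goods are the $\delta$-valued connecting edges to the funnel node $s$. The proposal nonetheless has a genuine gap, and part of it comes from a definitional slip.

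\efkx requires $v_i(X_i)\ge v_i(X_j\setminus Y)$ for \emph{every} $Y\subseteq X_j$ with $|Y|=k$. The binding $Y$ is therefore the set of $k$ goods of $X_j$ \emph{least} valuable to $i$. Your NP verifier removes the most valuable ones instead, so it checks a weaker property. Your last paragraph uses the same wrong reading to reject the dummy idea, but the correspondence is in fact exact. In a simple graph instance, $j$ is \efkx towards $i$ if and only if $(i,j)\notin X_i$, or $|X_i|\le k$, or $v_j(X_j)\ge v_j((i,j))$. This holds because every other good of $X_i$ is worth $0$ to $j$ and is removed first. One forced dummy per node therefore turns the EF$(k-1)$X threshold $|X_i|\le k-1$ into $|X_i\cup\{e_i\}|\le k$. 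The dummy $e_j$ in $j$'s own bundle is harmless provided its value $\delta$ to $j$ is below every positive gap $v_j(e)-v_j(S)$, with $S$ a set of edges at $j$; this is the role of the paper's $\delta$. So you abandoned the right idea for the wrong reason. The alternatives you float instead are either harmful or unspecified: a dummy valued highly by its owner relaxes exactly the constraints you need to preserve.

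The more serious gap is that the forcing gadget is absent, and it is the entire technical content of the reduction. The pendant structure you sketch cannot work. \efkx imposes no constraint towards a bundle of at most $k$ goods, so for $k\ge 2$ an agent $a_i$ holding only $e_i$ and one private edge is never envied, and nothing pushes $e_i$ onto $i$.

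To forbid a node from taking a dummy edge, you need the following in \emph{every} \efkx orientation: the node already holds exactly $k$ goods, and one of them is valued by its other endpoint above that endpoint's whole bundle. This saturation must itself be forced. One way is the counting argument on a heavy/light complete graph: a node holding its heavy edge can hold at most $k-1$ further goods, and the number of light edges among such nodes forces exactly $k-1$. The gadget must also admit a violation-free orientation when each original node takes exactly one dummy.

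The paper aims for this with the envy enhancer feeding the funnel, whose hub $s$ should hold exactly its $k$ solid edges. If you follow that route, make the saturation argument airtight. For example, a $v_i$ that keeps its own solid edge is not ruled out by envy from $s$, since $v_s(X_s)\ge k\beta+1$ as soon as $s$ holds one other solid edge.
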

\begin{proof}
    \noindent The membership of the problem in NP is immediate; indeed, if we are given an allocation $\X$, we can in linear time in $m$ check if it is a valid orientation. We next argue about the NP-hardness of the problem, via induction on $k$. Specifically, we will assume that the problem \textsc{EF(k-1)X Orientation} is NP-hard, and we will argue that \textsc{\efkx Orientation} is NP-hard as well. For the base case, it is already known from \citep{christodoulou2023fair} that \textsc{EF1X Orientation} (or, simply, \textsc{EFX Orientation}) is NP-hard. Thus, effectively, it suffices to construct a polynomial time reduction from \textsc{EF(k-1)X Orientation} to \textsc{\efkx Orientation}. \medskip

    \noindent Let $I=\left(\hat{H},(\hat{v}_i)_{i \in N}\right)$ be an instance of \textsc{the EF(k-1)X Orientation} (with $\hat{H}=(\hat{N},\hat{M})$) and let $f(I) = \left(H,(v_i)_{i \in N}\right)$ be the corresponding instance of \textsc{\efkx Orientation} (with $H=(N,M)$) constructed via our reduction. We describe the function $f$ below: \medskip
    
    \noindent First, we introduce an \emph{envy enhancer} gadget $H_e = (N_e, M_e)$ that consists of $|N_e| = 4(k-1) + 2$ nodes and two types of edges, \emph{heavy edges} and \emph{light edges}. The values of the edges for any agent incident to those are
    \begin{itemize}
    \item[-] $|N_e|+ \beta + 1$ for heavy edges, and 
    \item[-] $1$, for light edges.
    \end{itemize}
    we choose the positive parameter $\beta$, such that $\beta > |N_e| + 1$ holds. \medskip
    
    \noindent Similarly to the proof of \cref{thm: non-existence-orientation}, we fix an ordering $1,\ldots,4(k-1)+2$ of the nodes of $H_e$, and, for $i=1,\ldots,4(k-1)+1$, we add an edge $(i,i+1)$ according to this ordering. Additionally, we add an edge $(4(k-1)+2,1)$, essentially creating a (Hamiltonian) cycle $C$. For each edge $(i,i+1)$ in this ordering, let the edge be a \emph{heavy} edge if $i$ is even, and let the edge be a \emph{light} edge if $i$ is odd, and set the values of the endpoints of the edge accordingly (with both endpoints having the same value for the edge). For edge $(4(k-1)+2,1)$, let the edge be a \emph{heavy} edge. Since $|N_e|$ is even, the edges of the cycle $C$ alternate between heavy and light edges. Finally, for any pair of nodes $\{i,j\}$ other than the pairs in $C$, we add an edge $(i,j)$ to $M_e$ and we let it be a \emph{light} edge, setting the values of the corresponding endpoint agents accordingly. \medskip

    \noindent Next, we introduce a \textit{funnel} gadget $F = (N_F, M_F)$ that consists of $|N_F| = k + 1$ nodes and two types of edges, \emph{solid} and \emph{transit} edges. The values of the edges for any agent incident to those are
    \begin{itemize}
    \item[-]  $k \cdot \beta + 1$ for solid edges, and 
    \item[-] $\beta$, for transit edges.
    \end{itemize}
    The \textit{funnel} gadget consists of $k$ vertices $v_1, v_2, \ldots, v_k$ which are connected to a vertex $s$ with a solid edge and each vertex $v_i$ for $i \in [k]$ is connected to $2k$ consecutive (given the initial ordering) vertices $N_K \subset N_e$ of the \textit{envy enhancer} gadget we described above via a transit edge. \medskip

    \noindent Lastly, in addition to the edges introduced above node $s$ is incident to edges of a third type, besides solid and transit, which we refer to as \emph{connecting edges}: a connecting edge is an edge $(i,j)$ in which node $i$ is in $N_F$ and node $j$ is in $\hat{N}$. We add a connecting edge to nodes in $\hat{N}$ that have degree strictly more than $k-1$. The value of any connecting edge for any agent incident to it is $\delta$, for some $\delta >0$. More precisely, we will choose $\delta$ such that

    \begin{equation*}
    \delta < \min_{\substack{e=(i,j) \in M,\\
    J \in 2^{Z(e)}}} \min \bigg\{v_i(e)-v_i(J(e)), v_j(e)-v_j(J(e))\bigg\}
    \end{equation*}
    where, for any edge $e=(i,j) \in \hat{E}$, $Z(e) \subseteq M \setminus \{e\}$ is the subset of edges such that $v_i(Z(e)) \leq v_i(e)$ and $v_j(Z(e)) \leq v_j(e)$, and $2^{Z(e)}$ denotes the powerset of $Z(e)$. See \cref{fig: efkx-hardness} for an illustration of the gadget and its \efkx orientation $\X$; its proof follows. \medskip
     
    \noindent First, assume that an \efkx orientation $\X$ of $M$ exists, we will show that the EFk-1X orientation $\hat{\X} \subset \X$ of $\hat{M} \subset M$ exists. First, we show how the edges $M_F \subset M$ and $M_e \subset M$ of the funnel and the envy enhancer gadgets, respectively, would be oriented in $\X$. \medskip
    
    \noindent Namely, in the envy enhancer gadget, using a similar argument to the proof of \cref{thm: non-existence-orientation}, we can establish that $2k - 1$ of the nodes $i \in N_e$, receive its incident heavy edge $g_i=(i,j)$, as well as exactly $k-1$ light edges. To see this, consider the set $N_h$ of the nodes that receive their incident heavy edge. The subgraph $H' = (N_h, M_h)$ induced has a total of $(|N_h| \cdot (|N_h| - 1))/2 = |N_h| \cdot (k - 1)$ edges. Therefore, we just need to prove that no node $i$ can receive more than $k-1$ edges, besides the heavy edge $g_i$. To see this, consider node $j$ (the other endpoint of the heavy edge $g_i$) which has degree $|N_e| + 1$ and thus can only receive up to this number of edges and we can bound $v_j(X_j)$ as:
    \[
    v_j(X_j) \leq |N_e| + 1
    \]
    \noindent On the other hand, if node $i$ receives the transit edge in $M_F$ that it is incident to or a light edge from $M_e$, we would have that 
    \[
    v_j(X_i \setminus Y) \geq |N_e| + \beta + 1
    \]
    where $Y \subset X_i$ is a bundle consisting of $k$ edges from $M_F \cup M_e$. \medskip
    
    \noindent Therefore we first establish the fact that in the \efkx orientation $\X$, the $k$ nodes $N_K^h \subset N_K$ that received their incident heavy edge, from the $2k$ nodes $N_K$ connected to $v_i$ in $N_F$ for $i \in [k]$,  would not receive the transit edge in $N_F$. \medskip

    \noindent Next, we argue that for the other $k$ nodes $N_K^l = N_K \setminus N_K^h$, which did not receive their incident heavy edge in $M_e$. For node $j$ in $N_K^l$ we prove that in \efkx orientation $\X$ its incident transit edge should be received by it, in contrast with the nodes $N_K^h$. To see this, node $j$, same as before, has degree $|N_e| + 1$ and thus can only receive up to this number of edges and we can bound $v_j(X_j)$ as:
    \[
    v_j(X_j) \leq |N_e| + 1
    \]
    \noindent On the other hand, if node $v_i$ for $i \in [k]$ receives its incident transit edge or its incident solid edge in $M_F$, we would have that 
    \[
    v_j(X_{v_i} \setminus Y) \geq \beta > |N_e| + 1
    \]
    where $Y \subset X_i$ is a bundle consisting of $k$ edges from $M_F$. \medskip

    \noindent Hence, for every $i \in [k]$ vertex $v_i$ in $N_F$ in the \efkx orientation receives exactly $k$ transit edges. Lastly, we show that node $s$ receives solely the $k$ solid edges it is incident to, thus every connecting edge, that it is incident to, should be oriented towards the nodes $\hat{N}$. Since, node $v_i$ for any $i \in [k]$ has $k$ transit edges in $\X$, we can bound $v_{v_i}(X_{v_i})$ as:
    \[
    v_{v_i}(X_{v_i}) \leq k \cdot \beta
    \]
    \noindent On the other hand, if node $s$ receives any of its incident connecting edges, we would have that 
    \[
    v_j(X_{v_i} \setminus Y) \geq k \cdot \beta + 1
    \]
    where $Y \subset X_i$ is a bundle consisting of $k$ solid and connecting edges from $M_F$. \medskip
     
    \noindent Now consider a node $i \in \hat{N}$. If it is not incident to any connecting edges in $M_e$ or received at most $k-1$ edges from $\hat{M}$ in $\X$, then for its bundle $\hat{X}_i$ it holds that $|\hat{X}_i| \leq k-1$, and thus there cannot be any node that EFk-1X-envies it. Also, consider an endpoint $j \in \hat{N}$ of its incident edges. If $j$ received their incident edge $\hat{e} = (i,j)$ in the \efkx orientation $\X$, then $j$ cannot EFk-1X-envy $i$. Therefore, we assume that $i$ is incident to connecting edges in $M_e$ from the enhancer, received strictly more than $k-1$ edges from $\hat{M}$ (including $\hat{e}$) and also that it received the edge $\hat{e}$ between $(i,j)$. \medskip
    
    \noindent We will show that $v_j(\hat{X}_j) \geq v_j(\hat{e})$. Assume that $v_j(\hat{X}_j) < v_j(\hat{e})$ and consider that from the existence of the \efkx orientation $\X$ that $v_j(X_j) \geq v_j(X_i \setminus Y)$ holds, where $|Y| \leq k$ and bundle $Y$ consists of edges from $\hat{M}\cup M_e$. It is true that $|X_i| > k$, since by assumption $|\hat{X}_i| \geq k-1$ and $i$ is incident to one connecting edge from $M_e$, and therefore received one more connecting edge, given the previous established fact. Hence, it holds that $v_j(X_j) \geq u_j(\hat{e})$ and thus,
    \begin{align*}
        v_j(X_j) &= v_j(\hat{X_j}) + \delta \\
                 & < v_j(\hat{X_j}) + \min_{\substack{e=(i,j) \in M,\\
    J \in 2^{Z(e)}}} \min \bigg\{v_i(e)-v_i(J(e)), v_j(e)-v_j(J(e))\bigg\} \\
                 & < v_j(\hat{X_j}) + v_j(\hat{e}) - v_j(\hat{X}_j) \\
                 &= v(\hat{e})
    \end{align*}
    \noindent In the second to last inequality we used the assumption that $v_j(\hat{X}_j) < v_j(\hat{e})$, hence $\hat{X}_j \subseteq 2^{Z(\hat{e})}$, and applied the definition of $\delta$ to reach a contradiction. Therefore it holds that $v_j(\hat{X}_j) \geq v_j(\hat{e})$, which implies that $\hat{\X}$ is an EFk-1X orientation of $\hat{M}$ in any case, and proving the desired. \medskip

    \noindent On the other hand, assume that an EFk-1X orientation $\hat{X}$ of $\hat{M}$ exists, we will show that an \efkx orientation $\X$ of $M$ exists. Essentially, we have to construct an \efkx orientation $\X$ of the edges in $M_F$ and $M_E$, and also show that the orientation $\hat{M}$ is an \efkx orientation in $\X$. \medskip

    \noindent First, we orient all the connecting edges in $M_F$ towards the vertices of $\hat{N}$. Every pair of vertices $(i,j)$ in $\hat{N}$ satisfies the EFk-1X property by assumption; thus it is straightforward to see that adding one connecting edge with value $\delta > 0$ for any $e \in M$ and $i \in N$ such that $\delta < u_i(e)$, to the bundle of $i$ and $j$, will preserve the \efkx property for the pair $(i,j)$. \medskip
    
    \noindent Next, the orientation of the envy enhancer gadgets' $H_e$ edges $M_e$ is the following. For every edge $(i, i+1)$ for $i=1, \ldots, 4(k-1) + 1$ in the (Hamiltonian) cycle $C$ we orient the edge $(i, i+1)$ from node $i$ to node $i+1$. Let $N_h$ denote the set of nodes that receive the heavy edge. We orient towards the $N_h$ nodes $k-1$ light edges with the other endpoint also in $N_h$. This is the orientation of the subgraph $H' = (N_h, M_h)$, since we showed that the total number of edges of $H'$ is $|N_h| \cdot (k-1)$. For the rest of the nodes $N_e \setminus N_h$, which on contrary received the light edge; we orient the edges they are incident to arbitrarily. \medskip
    
    \noindent The transit edges are incident to $2k$ consecutive nodes in $N_e$. We orient the $k$ edges, which their endpoint in $N_e$ received the heavy edge in $C$, towards the other endpoint $v_i$, for every $i \in [k]$. The other $k$ edges, which their endpoint in $N_e$ did not receive the heavy edge, are instead oriented towards their endpoint in $N_e$. Finally, we orient every edge from $v_i$ to node $s$ for every $i \in [k]$. By the definition of \efkx orientation, it can be verified that every pair of nodes in $N_F$ satisfies the \efkx property in $\X$.
\end{proof}

\begin{figure}[H]
    \centering
    \includegraphics[width=0.9\linewidth]{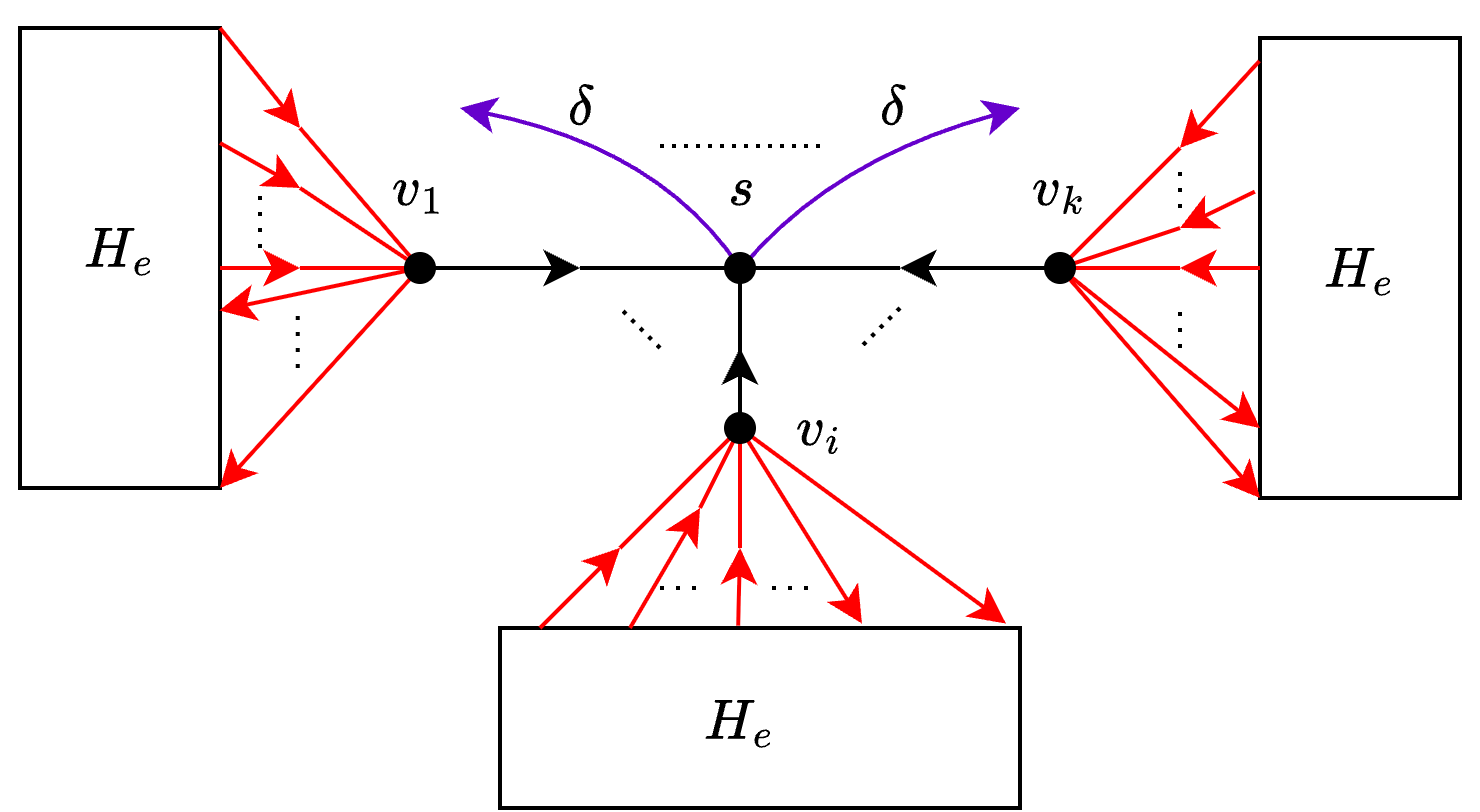}
    \caption{The \textit{funnel} gadget $F$ used in proof of \cref{thm:NP-c-orientation}. The $H_e$ box represents the \emph{envy enhancer} gadget. The black edges are the \emph{solid} edges and the red are the \emph{transit} edges that connect to $2k$ vertices $N_K$ of the envy enhancer gadget. In an \efkx orientation, the black (solid) edges would be oriented towards $s$ and $k$ of the transit edges would be oriented towards $v_i$ for $i \in [k]$ from nodes $N_K^h \subset N_K$, and the other $k$ towards the vertices in $N_K^l \subset N_K$. Lastly, the blue are the \emph{connecting} edges which connect to every vertex of $\hat{N}$ with degree strictly more than $k-1$ and all of them would be oriented towards the vertices in $\hat{N}$ in an \efkx orientation. }
    \label{fig: efkx-hardness}
\end{figure}

\section{Conclusion and Future Work}
We initiated the study of approximate envy-free allocations up to $k$ goods. Namely, we showed the existence of $\frac{k+1}{k+2}$-EFkX allocations for any number of agents by constructing the G3PA Algorithm; a careful generalization of the algorithm that gives the state-of-the art approximate EFX allocation for a few agents by \citet{amanatidis2024pushing}. The G3PA algorithm could potentially be applied to other frameworks and valuation classes, such as multigraphs, three valued instances, etc.. Furthermore, we improved upon the result of \citet{amanatidis2024pushing} and showed the existence of $2/3$-EFX allocation for $8$ agents. Lastly, we showed that \efkx orientations do not always exist and deciding their existence is NP-complete. An interesting direction, also posed by \citet{deligkas2024ef1}, is settling the complexity of EF1 orientations.

\bibliographystyle{plainnat}
\bibliography{bio}
\clearpage 

\appendix

\section*{APPENDIX}

\section{Pseudocode for the algorithms and subroutines of \cref{sec:preliminaries}}
\crefalias{section}{appendixsection}
\label{app:section2-subroutines}

In this section, we provide the pseudocode for the Envy Cycle Elimination algorithm of \citep{lipton2004approximately}, as well as the subroutines defined in \cref{sec:preliminaries}. The former is identical to the one presented in \citep[Section 4.1]{amanatidis2024pushing}, and the latter are virtually identical to those presented in \citep[Section 2.3]{amanatidis2024pushing}.

\setcounter{algocf}{0}
\SetAlgorithmName{ALGORITHM}{Algorithm}

\setcounter{algocf}{1}
\begin{algorithm}[!ht]
	\DontPrintSemicolon 
    \caption{$\ECE(\X, G)$}\label{alg:ece}
        \KwData{A partial allocation $\X$ and its envy graph ${G}(\X)$.}
        \KwResult{A complete allocation $\X'$.}
		\For{every $g \in M'$ in arbitrary order}{
			\While{there is no source in $G({\X})$ }{
				Find a cycle $C$ in $G({\X})$ \;
                $\X \gets \CR(\X,{G}(\X),C)$
				}
			
			Let $s\in N$ be a source in $G({\X})$ \label{line:source} \; 
            $X_{i} \gets X_{i}\cup \{g\}$\;
		}
		\Return $\X$ \; 
\end{algorithm} 

\setcounter{algocf}{0}
\SetAlgorithmName{SUBROUTINE}{Subroutine}

\begin{algorithm}[!ht]
\DontPrintSemicolon
\caption{$\CR(\X,\tilde{G},C)$} \label{alg:cycle-resolution}
\SetKwComment{Comment}{/* }{ */}
\KwData{A partial allocation $\X$, its graph $\tilde{G}(\X)$, and a cycle $C$ in $\tilde{G}(\X)$}
\KwResult{An updated partial allocation $\X$ such that $\tilde{G}(\X)$ does no longer contain the cycle $C$.\vspace{3pt}}
   $\tilde{\X}\gets \X$ \tcp*[r]{$\tilde{\X}$ is an auxiliary allocation}
   \For{every edge $(i,j) \in C$}
   {
   $X_i \gets \tilde{X}_j$ \tcp*[r]{swap the bundles backwards along the cycle}
   }
   \Return $\X$
\end{algorithm}

\begin{algorithm}[!ht]
\DontPrintSemicolon
\caption{$\ACR(\X,\tilde{G})$} \label{alg:all-cycles-resolution}
\SetKwComment{Comment}{/* }{ */}
\KwData{A partial allocation $\X$ and its graph $\tilde{G}(\X)$.}
\KwResult{An updated partial allocation $\X$ such that $\tilde{G}(\X)$ is acyclic.\vspace{3pt}}
   \While{there exists a cycle $C$ in $\tilde{G}(\X)$}
   {
   $\X=\CR(\tilde{G}(\X),C)$
   }
    \Return $\X$
\end{algorithm}

\begin{algorithm}[!ht]
\DontPrintSemicolon
\caption{$\PR(\X,\tilde{G},\Pi)$} \label{alg:path-resolution}
\SetKwComment{Comment}{/* }{ */}
\KwData{A partial allocation $\X$, its graph $\tilde{G}(\X)$, and a path $\Pi = (i_1, i_2, \ldots, i_\ell)$ in $\tilde{G}(\X)$}
\KwResult{An updated set of bundles $X_i$, one for every agent $i \in \{i_1, i_2, \ldots, i_{\ell-1}\}$. \vspace{3pt}} 
\For(\tcp*[f]{go through every $i$ such that $(i,j) \in \Pi$ following the path}){$k \gets 1$\  to\  $\ell-1$} 
{
$X_{i_k} \gets X_{i_{k+1}}$ \tcp*[r]{assign to $i_k$ the bundle of the agent $i_{k+1}$ that she envies}
}
\Return $(X_i)_{i \in N: \exists (i, \ell) \in \Pi}$
\end{algorithm}

\section{A note on the running time of the \GPPA algorithm}\label{app:alternative-running-time-proof}

In this section, we provide some useful discussion related to the running time of the \GPPA algorithm. To this end, we provide the following remark regarding the proof of its running time, as presented in \citep{amanatidis2024pushing}.

\begin{remark}\label{rem:amanatidis-et-al-proof}
The proof of polynomial running time in \citep{amanatidis2024pushing}, as stated, is technically incorrect. This is due to an oversight in the argument about the termination of the \ACR subroutine when applied to the ``enhanced graph'' that the authors use in their version of their algorithm. In particular, \citet{amanatidis2024pushing} state that the \ECE procedure, when applied to the enhanced graph, will always result in a reduction in the number of edges. This was proved to not be the case by \citet{hv2025almost} via an appropriate counter-example. \citet{hv2025almost} also provide a fix, which however only results in a pseudopolynomial-time algorithm, rather than a polynomial-time one. In our modified envy graph $\Gr$, the statement about the reduction in the number of edges is true, and an almost straightforward adaptation of the argument in \citep{amanatidis2024pushing} goes through. Intuitively, this is because, the new edges that can appear in the enhanced graph in the counter-example of \citet{hv2025almost} are already present in the modified graph $\Gr$. Indeed, this oversight in the analysis in \citep{amanatidis2024pushing} is not a feature of the algorithm, which still runs in polynomial time. 

Initially, regardless of whether the \GPPA or \PPA algorithm is used, the number of possible bundles that an agent can have is at most $\binom{m}{k} + \binom{m}{1} + 1$ (where, for \PPA, $k = 2$).
In every step except \ACR of the \GPPA and \PPA algorithms, the proxies of some agents increase while those of the other agents remain the same.
Whenever the \CR subroutine is executed in the modified envy graph, enhanced envy graph, or reduced envy graph, the proxies of some agents increase while those of the others remain the same.
\ACR can be regarded as a sequence of \CR subroutines, so we can simply consider how many times \CR is executed during the algorithm (\GPPA or \PPA).
Therefore, there are at most $n \cdot m^{k}$ iterations of all the steps except Step~\ref{step5}, including the \CR subroutines.
The complexity of each step (except \ACR) and the \CR subroutine, is $\mathcal{O}(n)$.
Hence, both \GPPA and \PPA run in polynomial time.

\end{remark}

\section{Proof of Lemma~\ref{lem:GPA-satisfies-properties}}\label{app:proofG3PA}

We will develop the proof in three steps: (1) we will first prove that $\X^1$ satisfies \cref{prop:a,prop:b,prop:c}, then (2) we will prove that $X^1$ satisfies \cref{prop:d,prop:e,prop:f}, and finally (3), we will prove the property about the sources of $\Ge(\X^1)$. The only part of the algorithm where \cref{prop:a} might be violated is if \GPPA returns a full allocation in Substep 6.2, in which case it will follow from our arguments that $\X^1$ is $\frac{k+1}{k+2}$-\efkx. 

\paragraph{(1) Satisfying \cref{prop:a,prop:b,prop:c}} We will prove that the algorithm satisfies \cref{prop:a,prop:b,prop:c} by induction on the number of executions of the main while loop of the algorithm. In particular, we will assume that the conditions hold before the $\ell$-th execution, and we will show that they still hold after the $\ell$-th execution (or, equivalently, before the $(\ell+1)$-th execution). The base case follows directly from the fact that $\X$ is a seed allocation, and hence by definition satisfies \cref{prop:a,prop:b,prop:c}. For ease of reference, we will use $\X$ and $\X'$ to denote the partial allocation before and after the execution of the while loop, respectively. Additionally, we will not explicitly argue about pairs of agents whose bundles remained unchanged during the current execution, even if those bundles exchanged owners in some graph reallocation step; \cref{prop:a,prop:b,prop:c} are still satisfied for such pairs, since they were satisfied before, by the induction hypothesis. Note that during the execution of the while loop, if step $i$ is executed, any step $j \neq i$ is not.  \medskip

\noindent \emph{If Step~\ref{step1} is executed:} \cref{prop:a} is satisfied for $\X$ by the induction hypothesis, and the only difference between $\X$ and $\X'$ is in the bundle of agent $i$, which however still contains only a single good. Therefore \cref{prop:a} is satisfied after the execution of the step. Since $v_i(X_i') > v_i(X_i)$ and the remaining bundles remain unaffected, it follows that \cref{prop:b,prop:c} are satisfied by the induction hypothesis. Furthermore, since $|X_i'|=1$, every agent $j \in N \setminus \{i\}$ is \efkx towards agent $i$, and thus \cref{prop:b,prop:c} are satisfied for that agent as well. Hence, \cref{prop:a,prop:b,prop:c} are satisfied for all agents. \medskip

\noindent \emph{If Step~\ref{step2} is executed:} For \cref{prop:a}, notice that all bundles in $\X$ remain unchanged, except the bundle of agent $i$, which changes in size from $k+1$ to $1$; hence $\X'$ satisfies \cref{prop:a}. By the induction hypothesis, agent $i$ was $\frac{k+1}{k+2}$-\efkx towards any agent $j$ in $\X$. Since $v_i(X_i') > \frac{k+2}{k+1}v_i(X_i)$, agent $i$ is \efkx towards any other agent in $\X'$, hence satisfying \cref{prop:b,prop:c}. Since $|X_i'|=1$, every agent $j \in N \setminus \{i\}$ is \efkx towards agent $i$, and thus \cref{prop:b,prop:c} are satisfied for that agent as well. Hence, \cref{prop:a,prop:b,prop:c} are satisfied for all agents. \medskip 

\noindent \emph{If Steps~\ref{step1} and \ref{step2} are \emph{not} executed:} We will proceed by considering the cases where one of Steps~\ref{step3} to \ref{step8} is executed. This means in particular that Steps~\ref{step1} and \ref{step2} were \emph{not} executed. From this, we can infer the following \emph{Step Properties}, for the allocation $\X$, just before the execution of some subsequent step $i$:

\begin{sproperties}
\item For any agent $j \in N$ with $|X_i|=1$, $v_j(X_j) \geq v_i(g)$ for all $g \in \PP(\X)$, as otherwise Step~\ref{step1} would have been executed. \label{prop:step-property-1}
\item For any agent $j \in N$ with $|X_j|=k+1$, $v_j(X_j) \geq \frac{k+1}{k+2}\cdot v_i(g)$ for all $g \in \PP(\X)$, as otherwise Step~\ref{step2} would have been executed. \label{prop:step-property-2}
\end{sproperties}

\noindent \emph{If Step~\ref{step3} is executed:} For \cref{prop:a}, notice that all bundles in $\X$ remain unchanged, except the bundle of agent $i$, which changes in size from $1$ to $k+1$; hence $\X'$ satisfies \cref{prop:a}. Consider agent $i$, and notice that $v_i(X_i') > \frac{k+1}{k+2}v_i(X_i)$, i.e., the value of the agent was possibly reduced in $\X'$, but no more than $(k+1)/(k+2)$ of her value in $\X$. By the induction hypothesis, given that $|X_i|=1$, the agent was \efkx towards any agent $j \in N$ in $\X$. By the inequality above, agent $i$ is now $\frac{k+1}{k+2}$-\efkx towards any such agent $j$; given that $|X_i'|=k+1$, \cref{prop:b,prop:c} are thus satisfied for the agent.  

Now consider any other agent $j \in N \setminus \{i\}$; we will consider two cases depending on the size of agent $j$'s bundle in $\X'$ (which the same as her bundle in $\X$):
\begin{itemize}[leftmargin=*]
    \item[-] \emph{Case 1: $|X_j|=1$.} In this case, we have that for all $\hat{g} \in X_i'$, $v_j(X_j) \geq v_j(\hat{g})$; this follows from \cref{prop:step-property-1} and because in $\X$, any of the goods of $X_i'$ were in $\PP(\X)$. This implies that agent $j$ is \efkx towards agent $i$.  
    \item[-] \emph{Case 2: $|X_j|=k+1$.} In this case, we have that for all $\hat{g} \in X_i'$, $v_j(\hat{g}) \leq \frac{k+2}{k+1} v_j(X_j)$; this follows from \cref{prop:step-property-2} and because in $\X$, any of the goods of $X_i'$ were in $\PP(\X)$. This implies that agent $j$ is $\frac{k+1}{k+2}$-\efkx towards agent $i$. 
\end{itemize}
In any case, \cref{prop:b,prop:c} are satisfied for agent $j$ also. Hence, all three properties are satisfied for all agents. \medskip

\noindent \emph{If Step~\ref{step4} is executed:}  For \cref{prop:a}, notice that the size of any bundle in $\X$ remains unchanged in $\X'$, and hence \cref{prop:a} is satisfied by the induction hypothesis. First, consider agent $i$ and observe that $v_i(X_i') > v_i(X_i)$. Since the size of agent $i$'s bundle did not change, \cref{prop:b,prop:c} still hold for the agent by the induction hypothesis. Now, consider any other agent $j \in N \setminus \{i\}$; we will consider two cases depending on the size of agent $j$'s bundle in $\X'$ (which the same as her bundle in $\X$): 
\begin{itemize}[leftmargin=*]
    \item[-] \emph{Case 1: $|X_j|=1$.} In this case, we have that for the good $g \in X_i'\setminus X_i$ that was added to agent $i$'s bundle, it holds that $v_j(X_j) \geq v_j(g)$;  this follows from \cref{prop:step-property-1} and because in $\X$, $g \in \PP(\X)$. Additionally, we know that $v_j(X_j) \geq v_j(\hat{g})$, for any good $\hat{g} \in X_i$; this follows from \cref{prop:b} and the induction hypothesis, as in $\X$, agent $j$ was \efkx towards agent $i$, whose bundle size was $k+1$. This implies that agent $j$ is \efkx towards agent $i$, and \cref{prop:b,prop:c} are satisfied for this agent. 
    \item[-] \emph{Case 2: $|X_j|=k+1$.} In this case, we have that for the good $g \in X_i'\setminus X_i$ that was added to agent $i$'s bundle, it holds that $v_j(X_j) \geq \frac{k+1}{k+2}v_j(g)$;  this follows from \cref{prop:step-property-2} and because in $\X$, $g \in \PP(\X)$. Additionally, we know that $v_j(X_j) \geq \frac{k+1}{k+2}v_j(\hat{g})$, for any good $\hat{g} \in X_i$; this follows from \cref{prop:c} and the induction hypothesis, as in $\X$, agent $j$ was $\frac{k+1}{k+2}$-\efkx towards agent $i$, whose bundle size was $k+1$. This implies that agent $j$ is $\frac{k+1}{k+2}$-\efkx towards agent $i$, and \cref{prop:b,prop:c} are satisfied for this agent. 
\end{itemize}
In any case, \cref{prop:b,prop:c} are satisfied for agent $j$ also. Hence, all three properties are satisfied for all agents. \medskip

\noindent \emph{If Step~\ref{step5} is executed:} First, observe that this step only reallocates bundles that were previously allocated, so \cref{prop:a} follows immediately from the induction hypothesis. Next, consider any agent $i \in N$ for whom $X_i' \neq X_i$; for any other agent, her bundle has not changed, and hence \cref{prop:b,prop:c} are satisfied by the induction hypothesis. Let $j$ be the agent $j \in N \setminus\{i\}$ such that $X_i'=X_j$. We consider two cases:
\begin{itemize}[leftmargin=*]
    \item[-] \emph{Case 1: $|X_i|=1$, $|X_i'|=k+1$.} By the definition of $\Gr$, it holds that $v_i(X_i') = v_i(X_j) > \frac{k+1}{k+2}\cdot v_i(X_i)$. By the induction hypothesis, agent $i$ was \efkx towards any agent $j \in N$, and hence the agent is now $\frac{k+1}{k+2}$-\efkx towards any such agent $j$, satisfying \cref{prop:b,prop:c}.
     \item[-] \emph{Case 2: $|X_i|=k+1$, $|X_i'|=1$.} By the definition of $\Gr$, it holds that $v_i(X_i') = v_i(X_j) > \frac{k+2}{k+1} \cdot v_i(X_i)$. By the induction hypothesis, agent $i$ was $\frac{k+1}{k+2}$-\efkx towards any agent $j \in N$, and hence the agent is now \efkx towards any such agent $j$, satisfying \cref{prop:b,prop:c}.
\end{itemize}
In any case, \cref{prop:b,prop:c} are satisfied for agent $j$ also. Hence, all three properties are satisfied for all agents. \medskip

\noindent \emph{If Step~\ref{step6} is executed:} Assume first that Substep~6.1 is executed, which means that $\X$ is a partial allocation that needs to satisfy \cref{prop:a,prop:b,prop:c}. \cref{prop:a} is satisfied by the induction hypothesis, since the bundle of agent $s$ changed from having size $1$ to having size $k+1$, and all remaining bundles remained unchanged. We next argue for \cref{prop:b,prop:c}. 

Observe that for agent $s$, we have that $v_s(X'_s) \geq v_s(X_s)$. Since $|X_s|=1$, by the induction hypothesis, agent $s$ was \efkx towards any other agent in $\X$, and therefore the same holds for the agent in $\X'$; hence the properties are satisfied for this agent. Now consider any other agent $j \in N \setminus\{s\}$; we consider two cases depending on the size of agent $j$'s bundle in $\X'$ (which is the same as her bundle in $\X$):
\begin{itemize}[leftmargin=*]
 \item[-] \emph{Case 1: $|X_j|=1$.} By \cref{prop:step-property-1}, it holds that $v_j(X_j) \geq v_j(g)$ for any $g \in \PP(\X)$, and therefore $v_j(X_j) \geq v_j(g)$ for any good $g \in X_s'\setminus X_s$. Since $s$ is a source of $\Gr(\X)$, it holds that $v_j(X_j) \geq v_j(X_s)$, as otherwise there would be an edge $(j,s) \in \Gr(\X)$. Also, since $|X_s|=1$, from the discussion above, it follows that $v_j(X_j) \geq v_j(g)$ for any good $g \in X_s'$, and hence agent $j$ is \efkx towards agent $s$, thus satisfying properties \cref{prop:b,prop:c}.  
    \item[-] \emph{Case 2: $|X_j|=k+1$.} By \cref{prop:step-property-2}, it holds that $v_j(X_j) \geq \frac{k+1}{k+2}\cdot v_j(g)$ for any $g \in \PP(\X)$, and therefore $v_j(X_j) \geq \frac{k+1}{k+2}\cdot v_j(g)$ for any good $g \in X_s'\setminus X_s$. Since $s$ is a source of $\Gr(\X)$, it holds that $v_j(X_j) \geq \frac{k+1}{k+2}\cdot v_j(X_s)$, as otherwise there would be an edge $(j,s) \in \Gr(\X)$. Also, since $|X_s|=1$, from the discussion above, it follows that $v_j(X_j) \geq \frac{k+1}{k+2}\cdot v_j(g)$ for any good $g \in X_s'$, and hence agent $j$ is \efkx towards agent $s$, thus satisfying properties \cref{prop:b,prop:c}.  
\end{itemize}
In any case, \cref{prop:b,prop:c} are satisfied for agent $j$ also. Hence, all three properties are satisfied for all agents. Now assume that Substep 6.2 is executed instead. In that case $\X'=\X^1$ is a full allocation, and hence we need to establish that it is also $\frac{k+1}{k+2}$-\efkx. This follows from exactly the same arguments used above for agent $s \in N$ and any other agent $j \in N \setminus \{s\}$.   \medskip

\noindent \emph{If Step~\ref{step7} is executed:} First, notice that any agent $j \in N \setminus\{i\}$ in this step receive the bundle of some other agent, which by the induction hypothesis has size either $1$ or $k+1$. Agent $i$ receives a bundle of size $k+1$, and therefore \cref{prop:a} is satisfied for all agents. We will next argue for \cref{prop:b,prop:c}.

First, consider agent $i$ who receives the bundle $Y$ of goods; by definition of the step, we have that $v_i(X_i')=v_i(Y) > \frac{k+1}{k+2}\cdot v_i(X_i)$, as the \PRPA subroutine allocates bundle $Y$ to agent $i$. By the induction hypothesis, since $|X_i|=1$, agent $i$ was \efkx towards any agent $j \in N$. By the inequality above, the agent is $\frac{k+1}{k+2}$-\efkx towards any such agent $j$ in $\X'$; since $|X_i'|=k+1$, this establishes \cref{prop:b,prop:c} for the agent.  Now consider any other agent $j \in N \setminus\{s\}$; we consider two cases depending on the size of agent $j$'s bundle in $\X$, and for each of those, we consider two subcases, depending on the size of the agent's bundle in $\X'$: 
\begin{itemize}[leftmargin=*]
 \item[-] \emph{Case 1: $|X_j|=1$.} In this case, it follows from the induction hypothesis that in $\X$, agent $j$ was \efkx towards agent $s$, and hence it holds that $v_j(X_j) \geq v_j(g)$ for any good $g \in X_s$. Additionally, by \cref{prop:step-property-1}, we have that $v_j(X_j) \geq v_j(g)$ for any good $g \in \PP(\X)$. These two facts together establish that 
 \begin{equation}
     v_j(X_j) \geq v_j(g) \text{ for any good } g \in X_s',
     \label{eq:step7-equation}
 \end{equation}
 since $X_s'$ has size $k+1$ and consists of goods from either $X_s$ or $\PP(\X)$.
 \begin{itemize}[leftmargin=*]
     \item[-] \emph{Subcase 1a: $|X_j'|=1$.} In this case, the edge between agent $j$ and the agent $\hat{j}$ whose bundle $j$ received by the \PRPA subroutine in $\Gr(\X)$ signifies (exact) envy, and hence we have that $v_j(X_j')=v_j(X_{\hat{j}}) \geq v_j(X_j)$. Combining this fact with \cref{eq:step7-equation}, we establish that $v_j(X_j')\geq v_j(g)$ for any good $g \in X_s'$. This implies that agent $j$ is \efkx towards agent $i$, and hence \cref{prop:b,prop:c} are satisfied for this agent.  
     \item[-] \emph{Subcase 1b: $|X_j'|=k+1$.} In this case, the edge between agent $j$ and the agent $\hat{j}$ whose bundle $j$ received by the \PRPA subroutine in $\Gr(\X)$ signifies $(k+1)/(k+2)$-approximate envy, and hence we have that $v_j(X_j')=v_j(X_{\hat{j}}) \geq \frac{k+1}{k+2} \cdot v_j(X_j)$. Combining this fact with \cref{eq:step7-equation}, we establish that $v_j(X_j')\geq \frac{k+1}{k+2}\cdot v_j(g)$ for any good $g \in X_s'$. This implies that agent $j$ is $\frac{k+1}{k+2}$-\efkx towards agent $i$, and hence \cref{prop:b,prop:c} are satisfied for this agent.
 \end{itemize}
 \item[-] \emph{Case 2: $|X_j|=k+1$.} In this case, it follows from the induction hypothesis that in $\X$, agent $j$ was \aefkx towards agent $s$, and hence it holds that $v_j(X_j) \geq \frac{k+1}{k+2}\cdot v_j(g)$, for any good $g \in X_s$. Additionally, by \cref{prop:step-property-2}, we have that $v_j(X_j) \geq \frac{k+1}{k+2}\cdot v_j(g)$ for any good $g \in \PP(\X)$. These two facts together establish that 
 \begin{equation}
     v_j(X_j) \geq \frac{k+1}{k+2}\cdot v_j(g) \text{ for any good } g \in X_s',
     \label{eq:step7-equation-b}
 \end{equation} 
 since $X_s'$ has size $k+1$ and consists of goods from either $X_s$ or $\PP(\X)$.
  \begin{itemize}[leftmargin=*]
     \item[-] \emph{Subcase 2a: $|X_j'|=1$.} In this case, the edge between agent $j$ and the agent $\hat{j}$ whose bundle $j$ received by the \PRPA subroutine in $\Gr(\X)$ signifies $(k+2)/(k+1)$-approximate envy, and hence we have that $v_j(X_j')=v_j(X_{\hat{j}}) \geq \frac{k+2}{k+1}\cdot v_j(X_j)$. Combining this fact with \cref{eq:step7-equation-b}, we establish that $v_j(X_j')\geq v_j(g)$ for any good $g \in X_s'$. This implies that agent $j$ is \efkx towards agent $i$, and hence \cref{prop:b,prop:c} are satisfied for this agent.  
     \item[-] \emph{Subcase 2b: $|X_j'|=k+1$.} In this case, the edge between agent $j$ and the agent $\hat{j}$ whose bundle $j$ received by the \PRPA subroutine in $\Gr(\X)$ signifies (exact) envy, and hence we have that $v_j(X_j')=v_j(X_{\hat{j}}) \geq v_j(X_j)$. Combining this fact with \cref{eq:step7-equation-b}, we establish that $v_j(X_j')\geq \frac{k+1}{k+2}\cdot v_j(g)$ for any good $g \in X_s'$. This implies that agent $j$ is \aefkx towards agent $i$, and hence \cref{prop:b,prop:c} are satisfied for this agent.
 \end{itemize}
\end{itemize}
In summary, \cref{prop:a,prop:b,prop:c} are satisfied for all agents. \medskip 
 
\noindent \emph{If Step~\ref{step8} is executed:} In this step, the algorithm terminates. In particular, the step does not change the allocation $\X$ and \cref{prop:a,prop:b,prop:c} are thus satisfied by the induction hypothesis. 

\paragraph{(2) Satisfying \cref{prop:d,prop:e,prop:f}} Now that we have established that the \GPPA algorithm satisfies \cref{prop:a,prop:b,prop:c}, we will establish that it satisfies the remaining three properties, unless it returned a full allocation in Step~6.2. To this end, assume by contradiction that one of the properties is violated by the allocation $\X^1$ outputted by the algorithm after its termination; we consider three cases, depending on which property that is. \medskip

\noindent \emph{If \cref{prop:d} is violated.} In this case, there exists some agent $i \in N$ and some good $g \in \PP(\X^1)$ such that $v_i(X^1_i) < v_i(g)$. If $|X^1_i|=1$, then Step~\ref{step1} of the algorithm would have been executed, and the algorithm would not have terminated. If $|X^1|=k+1$, either Step~\ref{step2} or Step~\ref{step4} would have been executed, and the algorithm would not have terminated. In either case we obtain a contradiction. \medskip

\noindent \emph{If \cref{prop:e} is violated.} In this case, there exists some agent $i \in N$ with $|X^1_i|=1$ that has at least one $\frac{1}{k+1}$-critical good, i.e., there exists some good $g^{*} \in \PP(\X^1)$ such that $v_i(g^{*})>\frac{1}{k+1}\cdot v_i(X^1_i)$. Since $|X^1_i|=k+1$, this implies that $v_i(g^{*}) > \min_{g \in X^1_i}v_i(g)$, i.e., there is some good from the pool that the agent strictly prefers to one of the goods in her bundle. In such a case, however, either Step~\ref{step2} or Step~\ref{step4} of the algorithm would have been executed, contradicting its termination. \medskip 

\noindent \emph{If \cref{prop:f} is violated.} In this case, there is an agent $i \in N$ with $|X^1_i|=1$ and a set $Y_i^c$ of $\frac{1}{k+1}$-critical goods for agent $i$ such that
\begin{itemize}[leftmargin=*]
\item[-] $|Y_i^c| \geq k+1$, or 
\item[-] $|Y_i^c| \leq k$, but agent $i$ values the set $Y_i^c$ by more than a $\frac{k+1}{k+2}$-fraction of her value for her bundle $X_i^1$, i.e., $v_i(Y_i^c) > \frac{k+1}{k+2}\cdot v_i(X_i^1)$.    
\end{itemize}
In the former case, since $|X_i^1|=1$, Step~\ref{step3} of the algorithm would have been executed for agent $i$ and the goods $g_1,\ldots,g_{k+1} \in Y_i^c$, and the algorithm would not have terminated, a contradiction. In the latter case, consider a path $\Pi=(s,\ldots,i)$ from some source of the modified graph $\Gr$ to the node corresponding to agent $i$; such a path must exist as otherwise agent $i$ would be involved in a cycle and Step~\ref{step5} of the algorithm would have been executed. Furthermore, we know that $|X_i^1|=k+1$ by \cref{prop:a}, and by the fact that Step~\ref{step6} was not executed, since the algorithm terminated. But in that case, Step~\ref{step7} of the algorithm would have been executed, for this path $\Pi$ and the set $Y = Y_i^c$, contradicting the termination of the algorithm.\medskip

\noindent Therefore, any of \cref{prop:d,prop:e,prop:f} must be satisfied by $\X^1$, as otherwise the algorithm would not have terminated.

\paragraph{(3) The property of the sources of $\Gr$} We conclude the proof of \cref{lem:GPA-satisfies-properties} by arguing that, unless $\X^1$ is a full allocation, its modified graph $\Gr(\X^1)$ has at least one source, and for every source $s$, we have that $|X_s^1|=k+1$. The existence of a source follows directly from Step~\ref{step5} of the algorithm, which would have been executed if a cycle was present in $\Gr(\X^1)$. To establish the cardinality of the bundles of the sources, note that, by \cref{prop:a}, either $|X_s^1|=k+1$ or $|X_s^1|=1$, for any source $s$. In the latter case, however, Step 6.1. of the algorithm would have been executed, contradicting its termination. Therefore, it must be the case that $|X_s^1|=1$.

\section{Achieving $2/3$-EFX for 8 agents}\label{app:8-agents}

In this section, we provide the proof of \cref{thm:8-agents-23-efx}, namely that there exists an algorithm that computes a $2/3$-EFX allocation for an instance with at most 8 agents in polynomial time. We refer to this algorithm as \IFAA, see Algorithm~\ref{alg:approx-efx}. The algorithm uses the same three general parts described in \cref{sec:efkx-approximations} for achieving approximations for any $k \geq 2$, but the second part is now crucially split into two sub-parts. Since in this section, the value of $k$ is fixed at $1$, we use the term \emph{critical good} instead of $1/2$-critical good for simplicity. The parts are the following.

\begin{itemize}
\item[1.] A polynomial-time algorithm \GPPAplus which produces a seed allocation $\X^0$ as input, and outputs a $2/3$-EFX allocation $\X^1$. We describe the algorithm more generally, but instantiate it with $k=1$ for the results of this section. We remark that \GPPAplus is a generalization of the \PPAplus algorithm of \citet{amanatidis2024pushing} which is used in the algorithm that achieves $2/3$-EFX allocations for $7$ agents in their work.\footnote{Our results can also be obtained by using the \PPAplus algorithm as presented in \citep{amanatidis2024pushing} verbatim as the first part of the \IFAA algorithm. We elect to present our version of the algorithm, since it is conceptually simpler and for consistency with the results in \cref{sec:efkx-approximations}, primarily due to the use of the modified graph $\Gr$ rather than the ``reduced envy graph'' and ``enhanced envy graph'' of \citep{amanatidis2024pushing}. } 
We present this algorithm in \cref{sec:G3PA+}, see Algorithm~\ref{alg:G3PA+}.
The output $\X^1$ also satisfies \cref{prop:a,prop:b,prop:c,prop:d,prop:e,prop:f}.

\item[2a.]A polynomial-time algorithm which inputs the partial allocation $\X^1$ and outputs a partial allocation $\X^2$. $\X^2$ is still $2/3$-EFX, and in addition, it does not contain any \emph{contested} critical good for any agent in $\PP(\X^2)$, i.e., any good that is critical for more than one agent.
We present this algorithm in \cref{sec:acc}.

\item[2b.] A polynomial-time algorithm which inputs the partial allocation $\X^2$ and outputs a partial allocation $\X^3$. $\X^3$ is an $2/3$-EFX partial allocation, where all the remaining (uncontested) critical goods in $\PP(\X^2)$ are allocated. We present this algorithm in \cref{sec:ucc}.

\item[3.] By the P2FA Lemma (\cref{lem:partial-efx-to-full-efx}), a polynomial-time algorithm which inputs the partial allocation $\X^3$ and outputs a full $2/3$-EFX allocation $\tilde{\X}$.
\end{itemize}

\subsection{The \GPPAplus Algorithm}\label{sec:G3PA+}
We first present the \GPPAplus algorithm. The algorithm is almost the same as the \GPPA algorithm, but augmented with an additional new step (Step~\ref{step8*}). \medskip

\noindent Whenever Step~\ref{step8*} is executed, \cref{prop:a,prop:b,prop:c,prop:d,prop:e,prop:f} still hold for the same reason they hold whenever Step~\ref{step7} is executed.
Hence, with minor modifications to the proof of \cref{lem:GPA-satisfies-properties}, we can have its analogue for \GPPAplus in \cref{lem:GPA+-satisfies-properties}.

\begin{lemma}\label{lem:GPA+-satisfies-properties}
    Let $\X^0$ be a seed allocation, and let $\X^1 = \GPPAplus((v_i)_{i \in N},\X)$ be the outcome of the \GPPAplus algorithm on input $\X$.
    Then either (i) $\X^1$ is a $\frac{k+1}{k+2}$-\efkx complete allocation or (ii) $\X^1$ satisfies \cref{prop:a,prop:b,prop:c,prop:d,prop:e,prop:f}, and, furthermore, the modified graph $\Ge(\X^1)$ has at least one source, and every source $s$ in $\Ge(\X^1)$ has $|X_s^1|=k+1$. 
\end{lemma}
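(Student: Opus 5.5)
We reuse the proof of \cref{lem:GPA-satisfies-properties} (\cref{app:proofG3PA}) essentially verbatim. \GPPAplus differs from \GPPA only by the extra Step~\ref{step8*}, placed before the terminating step. The plan has three parts, matching that proof. In each part we check that the new step can be slotted into the existing case analysis.

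\emph{Part (1): invariance of \cref{prop:a,prop:b,prop:c}.} We argue by induction over iterations of the main loop, as before. Every case for Steps~\ref{step1}--\ref{step7} carries over unchanged, since those steps are identical. For the new Step~\ref{step8*}, we exploit that, like Step~\ref{step7}, it reallocates bundles along a path of $\Gr(\X)$ starting at a source $s$ with $|X_s|=k+1$ via \PRPA. At the end, some agent $i$ with $|X_i|=1$ receives a set $Y\subseteq X_s\cup\PP(\X)$ of $k+1$ goods with $v_i(Y)>\frac{k+1}{k+2}v_i(X_i)$, and the leftovers return to the pool.

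\cref{prop:a} is immediate. For agent $i$ itself, \cref{prop:b,prop:c} follow from the strict inequality together with the fact that $i$ was \efkx before. For every other agent $j$ we repeat the four-way split on $|X_j|,|X_j'|\in\{1,k+1\}$. This combines \cref{prop:step-property-1} or \cref{prop:step-property-2} (Steps~\ref{step1}, \ref{step2} did not fire) with the induction hypothesis for $j$ towards $s$, which yields \eqref{eq:step7-equation} or \eqref{eq:step7-equation-b}. We then use the type of the edge of $\Gr$ traversed by $j$ (exact, weak, or strong envy) to transfer the bound to $X_j'$.

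The main obstacle is verifying that the precondition of Step~\ref{step8*} really supplies these same ingredients. In particular, it must use a genuine path in $\Gr$ from a source whose bundle, together with the pool, supplies $Y$. If instead the step selects $Y$ by a different rule (e.g.\ best good of $X_s$ plus best pool good, as in \PPAplus for $k=1$), only the inequality defining $v_i(Y)$ needs rechecking. The per-good bounds are unaffected, because they hold for every good of $X_s\cup\PP(\X)$.

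\emph{Part (2): \cref{prop:d,prop:e,prop:f} at termination.} These are proved by contradiction exactly as before. A violation would make one of Steps~\ref{step1}--\ref{step7} applicable, contradicting termination. Adding a step that must also fail before termination only strengthens the hypotheses available, so no change is needed.

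\emph{Part (3): sources and the full-allocation case.} Absence of cycles (Step~\ref{step5}) gives a source, and failure of Step~6.1 forces $|X^1_s|=k+1$ for every source $s$. The full-allocation outcome arises only in Substep~6.2, handled as in the original proof. Hence both alternatives (i) and (ii) of \cref{lem:GPA+-satisfies-properties} follow.

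Finally, to keep the running-time theorem valid, we note that Step~\ref{step8*} strictly increases the proxy value $\tilde v$ of every agent whose bundle changes, just as Step~\ref{step7} does, so the counting argument is unchanged.
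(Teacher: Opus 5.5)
Your overall route is the paper's. Its proof of \cref{lem:GPA+-satisfies-properties} just observes that the extra step preserves the properties for the same reasons Step~\ref{step7} does, and that everything else in the proof of \cref{lem:GPA-satisfies-properties} carries over; that is exactly the structure of your Parts (1)--(3). However, you have misread the trigger of Step~\ref{step8*}. The endpoint of the path is an agent $i$ with $|X_i|=k+1$, not a singleton, and the condition is $v_i(Y)>v_i(X_i)$, not $v_i(Y)>\frac{k+1}{k+2}v_i(X_i)$. Under your reading the step would be literally Step~\ref{step7}, and since it is only tried after Step~\ref{step7} has failed, it could never fire. Its actual role is to rule out improvements for agents holding $k+1$ goods, which is what \eqref{eq:X2S2} relies on later. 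As a result, your justification for the endpoint agent (``$i$ was \efkx before, and loses at most a factor $\frac{k+1}{k+2}$'') does not apply. An agent with $k+1$ goods is only guaranteed to be $\frac{k+1}{k+2}$-\efkx by \cref{prop:c}, and \cref{prop:b} says nothing about her. Your hedge about a different selection rule for $Y$ does not cover this, because what changes is the size of $X_i$, not only the inequality.

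The repair is immediate. By the induction hypothesis $i$ was $\frac{k+1}{k+2}$-\efkx towards every agent, and her value strictly increases, so she remains $\frac{k+1}{k+2}$-\efkx. \cref{prop:b} is vacuous for her because $|Y|=k+1$, and \cref{prop:a} holds since her bundle size is unchanged. The rest of your argument is unaffected. This includes your treatment of the other agents: per-good bounds for $X_s\cup\PP(\X)$ from \cref{prop:step-property-1,prop:step-property-2} and the hypothesis towards $s$ (where $|X_s|=k+1$ because Step~\ref{step6} did not fire), then transfer along the edge types of the path. Parts (2) and (3) also stand, as does your running-time remark: the endpoint's proxy value still strictly increases, now because her bundle size stays $k+1$ while her value grows.
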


\noindent Since we use the \GPPAplus algorithm with $k=1$, the properties of the output $\X^1$ can be restated as follows:
\begin{figure}[H]
\begin{tcolorbox}[
    standard jigsaw,
    title=Desired Properties of a Partial Allocation $\X^1$:,
    opacityback=0,  
]
\begin{properties}[topsep=5pt,itemsep=0.5ex,partopsep=1ex,parsep=1ex]
\item For every agent $i \in N$, we have either than $|X^1_i|=1$ or $|X^1_i|=2$. \label{prop:a*}
\item Every agent $i \in N$ with $|X^1_i|=1$ is EFX towards any other agent.  
\label{prop:b*}
\item Every agent $i \in N$ is $2/3$-EFX towards any other agent. 
\label{prop:c*}
\item Every agent $i$ values her own bundle at least as much as any single unallocated good, i.e., for every agent $i \in N$ and good $g \in \PP(\X^1)$, $v_i(X^1_i)\geq v_i(g)$.\label{prop:d*}
\item Every agent $i \in N$ with $|X^1_i|=2$ does not have any critical goods, i.e., for any good $g \in \PP(\X^1)$, we have that $v_i(g)\leq \frac{1}{2} v_i(X^1_i)$.\label{prop:e*}
\item Any agent $i$ with $|X^1_i|=1$ has at most one critical good, and she values the good at most $\frac{2}{3}$ of her current bundle, i.e., there exists at most one good  $g \in \PP(\X^1)$ such that $\frac{1}{2}v_i(X^1_i)\leq v_i(g) \leq  \frac{2}{3} v_i(X^1_i)$.\label{prop:f*}
\end{properties}
\end{tcolorbox}
\end{figure}

\SetAlgorithmName{ALGORITHM}{Algorithm}

\begin{algorithm}[H]
\DontPrintSemicolon
\caption{\IFAA($(v_i(g))_{i \in N,\, g \in M}$)}
\label{alg:approx-efx}
\SetKwComment{Comment}{/* }{ */}
\KwData{Valuations $(v_i(g))_{i \in N,\, g \in M}$.}
\KwResult{A $2/3$-\efx allocation $\X$.\vspace{3pt}}

Let $\X^0$ be an allocation where each bundle has size $1$\;
\tcp{Start with an initial seed allocation (\cref{def:seed-allocation}) $X^0$.}
$\X^1 \gets \GPPAplus((v_i(g))_{i \in N,\, g \in M}, \X^0, 1)$ \;
\tcp{Run the \GPPAplus algorithm (Algorithm~\ref{alg:G3PA+}) on $\X^0$ to obtain a partial allocation $\X^1$ satisfying \cref{prop:a,prop:b,prop:c,prop:d,prop:e,prop:f} (\cref{sec:G3PA+}).}
\If{There is at least one contested critical good in $\PP(\X^1)$}{
$\X^2 \gets \ACC((v_i(g))_{i \in N,\, g \in M}, \X^1)$\;
\tcp{Run the \ACC algorithm (Algorithm~\ref{alg:acc}) on $\X^1$ to obtain a partial allocation $\X^2$ that is $2/3$-EFX and does not have any contested critical goods (\cref{sec:acc}).}}
\Else{$\X^3\gets \X^2$}
$\X^3 \gets \UCC(\X^2,G(\X^2))$\;
\tcp{Run the \UCC algorithm (Algorithm~\ref{alg:ucc}) to obtain a partial allocation $\X^3$ that is $2/3$-EFX and does not have any critical goods (\cref{sec:ucc}).}
$\X \gets \ECE(\X^3,G(\X^3))$\;
\tcp{Run the \ECE algorithm (Algorithm~\ref{alg:ece}) to obtain a full $2/3$-EFX allocation (\cref{lem:partial-efx-to-full-efx}).}
\Return $\X$
\end{algorithm}

\SetAlgorithmName{ALGORITHM}{Algorithm}

\begin{algorithm}[H]
\DontPrintSemicolon
\caption{\GPPAplus$((v_i(g))_{i \in N,\, g \in M}, \X^0, k)$  }
\label{alg:G3PA+}
\SetKwComment{Comment}{/* }{ */}
\SetKw{Continue}{continue}
\SetKw{Break}{break}
\SetKw{Step}{Step}
\SetKw{Substep}{Substep}
\SetKwData{Kw}{}
\textbf{Input/Output:}(as in \GPPA)\;
... (same as \GPPA)...\;
\setcounter{AlgoLine}{7}
\nl \Kw{\texttt{Step 8}}\label{step8*}
\uElseIf{for some agent $i$ with $|X_i|=k+1$, there exists a path $\Pi=(s,\ldots,i)$ in $\Gr(\pX)$ starting at a source $s$ of $G_i(X)$, such that $v_i(X_i) < v_i(Y)$ for a set of $Y\subseteq |X_s\cup\mathcal{P}(\X)|$ with $|Y|=k+1$}
{$(X_j)_{j \in N: \exists (j, \ell) \in \Pi}=\PR(\X,\tilde{G},\Pi)$\;
$X_i\gets Y$\;
$\PP(\X) \gets (\PP(\X) \cup X_s)\setminus Y$\;
\tcp{Else if there exists a path from a source $s$ to some agent $i$ such that $|X_i|=k+1$ in the modified envy graph, we use 
\PR subroutine and allocate $Y$ to $i$ to update the allocation, when agent $i$ prefers $Y$ of $k+1$ goods from $X_s$ and the pool to her own bundle.
}}
\setcounter{AlgoLine}{8}
\nl \Kw{\texttt{Step 9}}\label{step9*}
\Else {\Break\;}
\Return $\X$ 
\end{algorithm}

\noindent Before moving on, we present some bounds for the the value of the goods in $\mathcal{P}(\pX^1)$. The proofs of these bounds are the same as those in \cite{amanatidis2024pushing}. \medskip

\noindent For any agent $i$ with $|X_i^1|=1$ and $Y\subseteq \PP(\X^1) \ \text{with} \ |Y|=2$, due to Step~\ref{step7}, we get
\begin{equation}\label{eq:X1S2}
    v_i(Y)\leq\frac{2}{3}v_i(X^1_i)
\end{equation}

\noindent For any agent $i$ who has a critical good $g$ in $\mathcal{P}(\pX^1)$, for any other good $g'$ in $\PP(\pX^1)\setminus\{g\}$ it holds that $v_i(g')<\frac{2}{3}v_i(X^1_i)-\frac{1}{2}v_i(X^1_i)=\frac{1}{6}v_i(X^1_i)$. 
Hence for any $Y\subseteq \PP(\X^1) \ \text{with} \ |Y|=3$ we get
\begin{equation}\label{eq:X1S3}
    v_i(Y)< \frac{2}{3}v_i(X^1_i)+\frac{1}{6}v_i(X^1_i)= \frac{5}{6}v_i(X^1_i)
\end{equation}

\noindent For any agent $i$ with $|X_i^1|=2$, if there is a path in the modified envy graph $\Gr(\X^1)$ from $s$ to $i$, due to Step~\ref{step8*}, we get for any $Y\subseteq X^1_s\cup \PP(\X^1)$ with $|Y| = 2$ that
\begin{equation}\label{eq:X2S2}
    v_i(Y)\leq v_i(X^1_i)
\end{equation}

\subsection{Allocating the Contested Critical Goods}\label{sec:acc}

\noindent In this subsection, we present the allocation of the contested critical goods in $\PP(\X^1)$ by Algorithm~\ref{alg:acc}. Let $N_s$ denote the  source agents in the modified envy graph $\Gr(\X^1)$, with $n_s = |N_s|$, and let $M_c$ denote the set of contested critical goods in $\PP(\X^1)$, with $m_c = |M_c|$. \medskip

\noindent By \cref{lem:GPA+-satisfies-properties}, there is at least one source agent in $\Gr(\X^1)$, i.e. $n_s\geq 1$ and this source agent does not have any critical goods. Note that each contested critical good is critical for at least two agents, and each agent can have at most one critical good.
Thus, when there are at most eight agents and at least one agent has no critical good, the number of contested critical goods is at most three, i.e. $m_c \leq 3$.
Furthermore, we only employ Algorithm~\ref{alg:acc}, when there is at least one contested critical good, i.e. $m_c\geq 1$. \medskip

\noindent Given the values of $m_c$ and $n_s$, all instances with $m_c\geq 1$ and $n\leq 8$ are distinguished into the following $5$ cases:
\begin{itemize}[labelwidth=1.2cm, leftmargin=!]
    \item[\textit{(Case 1)}] The number of source agents is at least the number of contested critical goods.
    \item[\textit{(Case 2)}] There is one source agent and two contested critical goods. 
    \item[\textit{(Case 3)}] There are two source agents and three contested critical goods.
    \item[\textit{(Case 4)}] There is one source agent, three contested critical goods, and every agent other than the source agent in $\Gr(\X)$ is allocated exactly one good.
    \item[\textit{(Case 5)}] There is one source agent, three contested critical goods and an agent $j$ who is not the source agent  in $\Gr(\X)$ but is allocated two goods.
\end{itemize}

\noindent In Algorithm~\ref{alg:acc}, we use the simple strategies to allocate contested  critical goods in Cases 1–4. Case 5 is more involved. Hence, we employ a more detailed allocation process in Algorithm~\ref{alg:smallcase5}.
\medskip

\noindent Before introducing Algorithm~\ref{alg:smallcase5}, let $s$ denote the only source agent in the modified envy graph.
Since $m_c = 3$ and $n_s = 1$, there is exactly one agent other than $s$ who has two goods in $\X^1$; denote this agent as $j$.
In Algorithm~\ref{alg:smallcase5}, we divide Case 5 into 5 sub-Cases, and each case is triggered if the conditions of all the previous ones are not satisfied.
\begin{itemize}[labelwidth=1.5cm, leftmargin=!]
    \item[\textit{(Case 5.1)}] There exists a pair of goods $Y \subseteq M_c$ such that, after allocating $Y$ to $s$ and resolving the cycles in the envy graph, a source agent occurs who has either 1 or 4 goods.
    \item[\textit{(Case 5.2)}] There exists an agent who considers a good $g$ in $M_c$ critical and is $2/3$-EFX towards $X_j \cup {g}$. 
    \item[\textit{(Case 5.3)}] There exists a set of two goods $Y\subseteq M_c$, such that there exists an agent $k$ with $|X_k|=1$ who is not envied by any agent in $N\setminus\{s,j\}$ and $v_s(X_k)\leq \frac{3}{2}v_s(X_s\cup Y)$.
    \item[\textit{(Case 5.4)}] There exists set of two goods $Y\subseteq M_c$ such that for agent $j$, $v_j(X_s\cup Y)\leq v_j(X_j)$.
    \item[\textit{(Case 5.5)}] Previous cases are not satisfied.
\end{itemize}

\begin{algorithm}[H]
\DontPrintSemicolon
\caption{\ACC($(v_i)_{i \in N}, \pX$)}
\label{alg:acc}
\SetKwComment{Comment}{/* }{ */}
\SetKw{Continue}{continue}
\SetKw{Break}{break}
\SetKw{Step}{Step}
\SetKw{Substep}{Substep}
\SetKwData{Kw}{}
\KwIn{The values $v_i(g)$ for every agent $i \in N$ and every good $g \in M$, and a partial allocation $\pX^1$ of size at most 2 which satisfies \cref{prop:a*,prop:b*,prop:c*,prop:d*,prop:e*,prop:f*}.}
\KwOut{A partial allocation $\pX^2$ with no contested critical good in $\mathcal{P}(\pX^2)$.}
$N_s \gets \{i \in N:i \text{ is a source agent of the modified envy graph } \Gr(\X^1)\}$\\
$n_s \gets |N_s|$ \\
$M_c \gets \{g \in \PP(\X^1): g \text{ is a contested critical good}\}$\\
$m_c \gets |M_c|$ \\
\If{ $n_s\geq m_c$}{
Let $N' \subseteq N_s$ such that $|N'|=m_c$\\
\For{$s_t \in N'$}{
Choose a good $g_t$ in $M_c$ arbitrarily.\\
$X_{s_t}\gets X_t\cup\{g_t\}$\\
$M_c\gets M_c\setminus \{g_t\}$\\
}
\tcp{Case 1: $n_s\geq m_c$. We allocate each contested critical good to a source agent in $\Gr(\X)$ individually.}
}
\ElseIf{$n_s=1$ and $m_c=2$
}{
 Let $s$ a source agent in $\Gr(\pX)$ \\
 $X_s\gets X_s\cup M_c$\\
 \tcp{Case 2: $n_s=1$ and $m_c=2$. We allocate the two contested critical goods to the source agent in $\Gr(\X)$}
}
\ElseIf{
$n_s=2$ and $m_c=3$
}{
Let $s_1$ and $s_2$ two source agents in the modified envy graph $\Gr(\X)$ \\
Choose two goods $g_1,g_2$ in $M_c$ arbitrarily.\\
$X_{s_1}\gets X_{s_1}\cup \{g_1\}$\\
$X_{s_2}\gets X_{s_2}\cup \{g_2\}$\\
$M_c\gets M_c\setminus\{g_1,g_2\}$\\
\If{There are some cycles in the envy graph $G(\pX)$}{
$\X \gets\ACR(\X,G)$\;
}
Let $s'$ source agent in the envy graph $G(\X)$.\\
$X_{s'}\gets X_{s'}\cup M_c$\\
\tcp{Case 3: $n_s = 2$ and $m_c = 3$. Allocate one contested critical good from $M_c$ to each source agent in the modified envy graph and resolve any cycles if they exist in the envy graph. Then, allocate the remaining goods in $M_c$ to one of the source agents in the envy graph.}
}
\ElseIf{
$n_s=1$, $m_c=3$ and $|X_i|=1$ for any agent $i$ who is not source agent in $\Gr(\X)$
}{
Choose two goods $\{g_1,g_2\}$ in $M_c$ arbitrarily. \\
Let $s$ a source agent in the modified envy graph $\Gr(\X)$\\
$X_s\gets X_s\cup \{g_1,g_2\}$\\
$M_c\gets M_c\setminus \{g_1,g_2\}$\\
\If{There are some cycles in the envy graph $G(\pX)$}{
$\X \gets\ACR(\X,G)$\;
}
Let $s'$ a source agent in the envy graph $G(\X)$ \\
$X_{s'}\gets X_{s'}\cup M_c$\\
\tcp{Case 4: $n_s=1$, $m_c=3$ and only the source agent in the modified envy graph has two goods. Allocate two contested critical goods from $M_c$ to the source agent in the modified envy graph and resolve any cycles if they exist in the envy graph. Then, allocate the remaining good in $M_c$ to one of the source agents in the envy graph.}
}
\Else{
$\X\gets \textsc{LastAllocateContestedCritical}((v_i)_{i \in N}, \pX)$\;
\tcp{Case 5: None of the conditions in Cases 1–4 are satisfied. This case is more involved, so we defer to Algorithm~\ref{alg:smallcase5} for allocating the contested critical goods.}
}
\Return $\X$ 
\end{algorithm}

\noindent In every case, we allocate all the contested critical goods in $\PP(\X^1)$.
Now we prove that the partial allocation $\X^2$ is $2/3$-EFX.

\begin{lemma}\label{lem:acc}
   Let $\X^0$ be a seed allocation, and let $\X^1 = \GPPAplus((v_i){i \in N},\X^0,1)$.
    Then the allocation $\X^2$, which is the output of $\ACC((v_i){i \in N}, \X^1)$, is $2/3$-EFX.
\end{lemma}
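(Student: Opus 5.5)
We would prove \cref{lem:acc} by a case analysis that follows the five branches of \ACC. The starting point is \cref{lem:GPA+-satisfies-properties}: $\X^1$ satisfies \cref{prop:a*,prop:b*,prop:c*,prop:d*,prop:e*,prop:f*}, and every source of $\Gr(\X^1)$ holds exactly two goods.

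Three facts will be used throughout.

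\begin{itemize}
\item Any agent $i$ is $2/3$-EFX towards another agent $j$ if $|X_j|\le 1$ or if $X_j$ is unchanged and $v_i(X_i)$ has not decreased.
\item For any good $g\in\PP(\X^1)$ and any agent $i$, \cref{prop:d*} gives $v_i(g)\le v_i(X_i^1)$. If $|X^1_i|=2$, \cref{prop:e*} gives $v_i(g)\le \frac12 v_i(X^1_i)$.
\item If $s$ is a source of $\Gr(\X^1)$ with $|X_s|=2$, every $i$ with $|X_i|=1$ has $v_i(X_s)<\frac23 v_i(X_i)$, and every $i$ with $|X_i|=2$ has $v_i(X_s)\le v_i(X_i)$.
\end{itemize}

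\textbf{Adding one good to a source.} This covers Case~1 and the first half of Case~3. Suppose $s$ receives $g\in M_c$, so $X_s'=X_s\cup\{g\}$ has three goods. For any $i\ne s$ we must bound $v_i(X_s'\setminus\{h\})$ for the least valuable $h$.

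If $|X_i|=2$, we use $v_i(X_s'\setminus\{h\})\le \frac23 v_i(X_s)+v_i(g)$ together with the facts above. A cleaner route is to note that removing the minimum of three goods leaves at most $\frac23$ of the bundle. This gives
\[
v_i(X_s'\setminus\{h\}) \le \tfrac23\bigl(v_i(X_i)+\tfrac12 v_i(X_i)\bigr)= v_i(X_i).
\]

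If $|X_i|=1$, we get $\frac23\bigl(\frac23+1\bigr)v_i(X_i)=\frac{10}{9}v_i(X_i)$, which is too weak. In this case we instead pair the good $g$ (value at most $v_i(X_i)$) with the larger good of $X_s$, using \eqref{eq:X1S2}-type bounds. Step~\ref{step7} was not executed for the path from $s$ to $i$, so every pair $Y\subseteq X_s\cup\PP(\X^1)$ satisfies $v_i(Y)\le\frac23 v_i(X_i)$. The two most valuable goods of $X_s'$ form such a pair, which gives exactly the $2/3$-EFX bound.

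This is the key tool: Step~\ref{step7} and \eqref{eq:X1S2}, \eqref{eq:X2S2} control any two-element subset of $X_s\cup\PP(\X^1)$ for every agent reachable from $s$. Since $\Gr$ is acyclic, every agent is either reachable from $s$ or is itself a source.

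\textbf{Case~2.} Here $s$ receives both goods of $M_c$ and ends with four goods. Removing the least valuable good leaves three goods of $X_s\cup\PP(\X^1)$. We bound these by splitting them as a pair plus a single: the pair is at most $v_i(X_i)$ or $\frac23 v_i(X_i)$, and the single is small by \cref{prop:e*}. For critical holders we use \eqref{eq:X1S3} and the fact that non-critical pool goods are worth less than $\frac16 v_i(X_i)$. The agent with two goods needs the sharper estimate $\frac12$ per pool good, combined with the top-two bound from \eqref{eq:X2S2}.

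\textbf{Cases~3 and~4.} After the first allocation we resolve cycles in the exact envy graph. \CR does not decrease any agent's value and keeps existing approximate-EFX relations. The last good then goes to a source $s'$ of $G$, and nobody envies $s'$. For any agent $i$, $v_i(X_{s'}\cup\{g\}\setminus\{h\})\le v_i(X_{s'})\le v_i(X_i)$ whenever $g$ is not the least valuable good, which is the standard ECE argument and gives exact EFX.

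\textbf{Main obstacle.} We expect Case~5, handled by Algorithm~\ref{alg:smallcase5}, to be the main obstacle. There the lone non-source agent $j$ with two goods has no guarantee from Step~\ref{step7} for single-good agents, and each subcase 5.1--5.5 exploits the failure of the previous conditions. Our plan is to verify each subcase separately, using the defining inequality of that subcase as the missing bound; for example, Case~5.4 provides $v_j(X_s\cup Y)\le v_j(X_j)$ directly. In Case~5.5 we will combine the negations of 5.1--5.4 to derive the necessary inequalities for $s$, $j$, and the singleton agents.
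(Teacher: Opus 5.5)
There is a genuine gap in Cases~3 and~4, and Case~5 is not actually proved.

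In Cases~3--4 you argue that, since the last contested good $g$ goes to a source $s'$ of $G$, one has $v_i(X_{s'}\cup\{g\}\setminus\{h\})\le v_i(X_{s'})\le v_i(X_i)$ ``whenever $g$ is not the least valuable good''. The first inequality is reversed. For $h\neq g$ we get $v_i(X_{s'}\cup\{g\}\setminus\{h\})=v_i(X_{s'})+v_i(g)-v_i(h)\ge v_i(X_{s'})$, so it holds only when $h=g$. Being unenvied bounds $v_i(X_{s'})$ and nothing more. The added $g$ is a contested critical good, worth up to $\frac23 v_i(X^1_i)$ to the agents who find it critical. That is exactly why ECE/P2FA-type reasoning cannot be applied to critical goods, so no exact-EFX conclusion follows.

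What you need is a split on where $g$ lands. In Cases~3 and~4 every non-source of $\Gr(\X^1)$ is a singleton: in Case~4 by definition, and in Case~3 because at least six singletons hold critical goods and $n\le 8$. So after \ACR the bundles are the augmented source bundle(s) and singletons.
\begin{itemize}
\item If $g$ joins a singleton $\{g'\}$, everyone is EFX towards $\{g,g'\}$ by \cref{prop:d*} and the source property.
\item If $g$ joins an augmented bundle, agents for whom $g$ is not critical combine $v_i(X_{s'})\le v_i(X^2_i)$ with $v_i(g)\le\frac12 v_i(X^1_i)$.
\item Singleton critical holders instead bound the whole bundle, using $v_i(X^1_s)<\frac23 v_i(X^1_i)$ together with \eqref{eq:X1S2} in Case~3 or \eqref{eq:X1S3} in Case~4. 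This gives at most $\frac32 v_i(X^1_i)$.
\end{itemize}

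A smaller slip: in Case~1 your $\frac{10}{9}$ bound already suffices, because the target is $\frac32 v_i(X_i)$, not $v_i(X_i)$. The detour you take instead rests on ``every agent is reachable from $s$ or is a source''. That fails once there are two or more sources (Cases~1 and~3) and is valid only when $s$ is the unique source.

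For Case~5 you give a plan rather than a proof, and it omits the ingredients the subcases actually use.

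First, counting. With $n\le 8$, $n_s=1$ and $j\neq s$ holding two goods, there are exactly six singleton agents. Each holds one critical good, and every good of $M_c$ is critical for exactly two of them. This is what lets Case~5.2 conclude that at most one agent fails to be $2/3$-EFX towards $X_j\cup\{g\}$; that agent is then repaired by path resolution.

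Second, a bound for $j$. Once Cases~5.1 and~5.2 fail, every critical holder $i$ with critical good $g$ satisfies $v_i(X^1_j)>\frac32 v_i(X^1_i)-v_i(g)\ge\frac56 v_i(X^1_i)$. So $(i,j)$ is an edge of $\Gr(\X^1)$, and acyclicity (Step~\ref{step5}) gives $v_j(X^1_i)<\frac32 v_j(X^1_j)$. Cases~5.3 and~5.5 rest on exactly this bound.

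Third, in Case~5.5 you must show that some $k\in N\setminus\{s,j\}$ is unenvied within that set, again by acyclicity among singletons. You must also verify that the three-way reallocation lowers no one's value, using the failures of Cases~5.2, 5.3 and~5.4 for $k$, $s$ and $j$ respectively. Without that, your ``unchanged bundle'' fact cannot even be applied.

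Apart from these points, your Cases~1--2 agree with the paper, which uses the same bounds \eqref{eq:X1S2} and \eqref{eq:X2S2}.
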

\begin{proof}
\noindent \emph{If $\X^2$ is computed in Case 1: } The allocation $\X^2$ guarantees that $v_i(X^2_i)\geq v_i(X^1_i)$ for every $i\in [n]$. Suppose that agent $s_t$ receives a new good $g_t$ in the algorithm. For any agent $i$ who does not have a critical good in $\PP(\X^1)$ holds that
\[
v_i(X^1_{s_t}\cup \{g_t\})\leq v_i(X^1_i)+\frac{1}{2}v_i(X^1_i)\leq \frac{3}{2}v_i(X^2_i).
\]
And for an agent $i$ who has a critical good in $\PP(\pX^1)$, 
\[
v_i(X^1_{s_t}\cup \{g_t\})\leq \frac{2}{3}v_i(X^1_i)+\frac{5}{6}v_i(X^1_i)\leq\frac{3}{2}v_i(X^2_i).
\]
Thus, $\X^2$ computed in Case 1 is $\frac{2}{3}$-EFX.\medskip

\noindent \emph{If $\X^2$ is computed in Case 2: }
The allocation $\X^2$ guarantees that $v_i(X^2_i)\geq v_i(X^1_i)$ for every $i\in [n]$. Assume that agent $s$ receives goods $g_1$ and $g_2$ in $M_c$. For agent $i$ with $|X^1_i|=1$, by inequality (\ref{eq:X1S2}), we have 
\[
 v_i(X^1_s\cup \{g_1,g_2\})\leq \frac{2}{3}v_i(X^1_i)+\frac{2}{3}v_i(X^1_i)\leq \frac{3}{2}v_i(X^2_i).
\]
For agent $i$ who does not have a critical good in $\PP(\X^1)$ and $|X^1_i|=2$; since agent $s$ is the only source agent in the modified envy graph $\Gr(\X^1)$, there must exist a path from agent $s$ to agent $i$.
By inequality (\ref{eq:X2S2}), agent $i$ values any two goods in $X^1_s\cup\{g_1,g_2\}$ at most $v_i(X^1_i)$. 
Hence, for agent $i$, the value of any three goods in $X^1_s\cup \{g_1,g_2\}$ is no larger than $\frac{3}{2}v_i(X^1_i)$ and is also no larger than $\frac{3}{2}v_i(X^2_i)$.
Thus, $\X^2$ computed in Case 2 is $\frac{2}{3}$-EFX.\medskip

\noindent \emph{If $\X^2$ is computed in Case 3: }
The allocation $\X^2$ guarantees that $v_i(X^2_i)\geq v_i(X^1_i)$ for every $i\in [n]$.
Good $g_1$ is allocated to agent $s_1$, good $g_2$ is allocated to agent $s_2$.
Let $g_3$ be the last good in $M_c\setminus\{g_1,g_2\}$.
Since $n_s=2$, every agent who is not source agent in $\Gr(\X^1)$ contains only one good.
We divide this case into two sub-cases depending on which bundle, $g_3$ is allocated. \medskip

(\emph{Case 3.1}) Good $g_3$ is allocated to the bundle $X^1_{s_1}\cup\{g_1\}$ or $X^1_{s_2}\cup \{g_2\}$.
Without loss of generality, suppose that $g_3$ is allocated to the bundle $X^1_{s_1}\cup \{g_1\}$.
For any agent $i$ who does not have a critical good in $\PP(\X^1)$, since $g_3$ is allocated to the source agent in the envy graph, we get
\[v_i(X^1_{s_1}\cup\{g_1,g_3\})\leq v_i(X^2_i)+v_i(g_3)\leq  v_i(X^2_i)+\frac{1}{2}v_i(X^1_i)\leq \frac{3}{2}v_i(X^2_i).
\]
For any agent $i$ who has one critical good in $\PP(\pX^1)$, by inequality (\ref{eq:X1S2}), we get
\[v_i(X^1_{s_1}\cup\{g_1,g_3\})\leq \frac{2}{3}v_i(X^1_{i})+v_i(\{g_1,g_3\})\leq \frac{3}{2}v_i(X^1_i)\leq \frac{3}{2}v_i(X^2_i).
\]

(\emph{Case 3.2}) Good $g_3$ is allocated to one bundle containing only one good, say $g_k$. For agent $i\in N$, as proved in Case 1, holds that
$v_i(X^1_{s_1}\cup \{g_1\})\leq \frac{3}{2}v_i(X^2_i)$ and $ v_i(X^1_{s_2}\cup \{g_2\})\leq \frac{3}{2}v_i(X^2_i)$
Since $g_k$ is in a bundle not envied by any other agent, we have $v_i(g_k)\leq v_i(X^2_i)$. By \cref{prop:d*}, it holds that $v_i(g_3)\leq v_i(X^1_i)\leq v_i(X^2_i)$. Therefore, every agent $i\in N$ is EFX towards $\{g_3,g_k\}$ and
 $\X^2$ computed in Case 3 is $\frac{2}{3}$-EFX.\medskip

\noindent \emph{If $\X^2$ is computed in Case 4: }
The allocation $\X^2$ guarantees that $v_i(X^2_i)\geq v_i(X^1_i)$ for every $i\in [n]$. Goods $g_1$ and $g_2$ are allocated to agent $s$. Let $g_3$ be the last good in $M_c\setminus\{g_1,g_2\}$.
We divide this case into two sub-cases depending on which bundle $g_3$ is allocated: \medskip

(\emph{Case 4.1}) Good $g_3$ is allocated to $X^1_s\cup\{g_1,g_2\}$.
For any agent $i$ with one critical good in $\PP(\X^1)$, by inequality (\ref{eq:X1S3}) we get
\[v_i(X^1_{s}\cup \{g_1,g_2,g_3\})\leq \frac{2}{3}v_i(X^1_i)+\frac{5}{6}v_i(X^1_i)\leq \frac{3}{2}v_i(X^2_i).\]
For any agent $i$ with no critical good in $\PP(\X^1)$,  since $g_3$ is allocated to the source agent in the envy graph, we get
\[v_i(X^1_{s}\cup \{g_1,g_2,g_3\})\leq v_i(X^1_{s}\cup \{g_1,g_2\})+v_i(g_3)\leq v_i(X^2_i)+\frac{1}{2}v_i(X^1_i)\leq \frac{3}{2}v_i(X^2_i).\]
\medskip

(\emph{Case 4.2}) Good $g_3$ is allocated to one bundle with only one good, say $g_k$.
As proved in Case 2, for any agent $i\in N$, $i$ is $\frac{2}{3}$-EFX towards $X^1_{s}\cup\{g_1,g_2\}$.
As proved in Case 3.2, for any agent $i\in N$, $i$ is EFX towards $\{g_3,g_k\}$.
Hence, $\X^2$ computed in Case 4 is $\frac{2}{3}$-EFX.\medskip

\noindent Before we proceed to the analysis, we have the following observation for Case 5. There are 3 contested critical goods in $\PP(\X^1)$, an agent $j$  who is not the source agent but has two goods and no more than $8$ agents in the instance, hence, every good in $M_c$ is critical for two agents.
\medskip

\noindent \emph{If $\X^2$ is computed in Case 5.1:} 
The allocation $\X^2$ guarantees that $v_i(X^2_i)\geq v_i(X^1_i)$ for every $i\in [n]$.
As proved in Case 4, for any agent $i\in N$, $i$ is $2/3$-EFX towards any bundle in $\X^2$.\medskip

\noindent \emph{If $\X^2$ is computed in Case 5.2:} 
The allocation $\X^2$ guarantees that $v_i(X^2_i)\geq v_i(X^1_i)$ for every $i \in [n]$. As proved in Case 2, every agent is $2/3$-EFX towards $X_s\cup Y$.
Since in this case, Case 5.1 is not satisfied, good $g$ is allocated to bundle $X^1_j$.  For any agent $i$ who does not have good $g$ as critical, since $g$ is allocated to the source agent in the envy graph, we get 
\[
v_i(X^1_j\cup \{g\})\leq v_i(X^2_i)+\frac{1}{2}v_i(X^1_i)\leq \frac{3}{2}v_i(X^2_i).
\]
Since $g$ is critical for two agents and one agent is $2/3$-EFX towards $X_j\cup \{g\}$, there is at most one agent who is not $2/3$-EFX towards  $X_j\cup \{g\}$.
Assume that this agent is $k$, if such an agent exists.
Since there is only one source agent $s'$ in the envy graph before allocating $g$, there must exist a path from $s'$ to $k$. Thus, $\X^2$ computed in Case 5.2 is $\frac{2}{3}$-EFX.\medskip

\noindent For the following three sub-cases, the conditions of Case  5.1 and 5.2 are not satisfied. Hence, for any agent $i$ who has a critical good $g$ in $\PP(\X^1)$, we have that $|X^1_i|=1$ and 
\[v_i(X^1_{j})\geq \frac{3}{2}v_i(X^1_i)-v_i(g)\geq \bigg(\frac{3}{2}-\frac{2}{3}\bigg)v_i(X^1_i)>\frac{2}{3}v_i(X^1_i).\]
Hence, for agent $j$ it holds that, $v_j(X^1_i)\leq \frac{3}{2}v_j(X^1_j)$.
Otherwise, there exists a cycle between $j$ and $i$ in the modified envy graph $\Gr(\X^1)$ and Step~\ref{step7} in Algorithm~\ref{alg:G3PA+} should have been executed, which is a contradiction.\medskip

\noindent \emph{If $\X^2$ is computed in Case 5.3:}  
The allocation $\X^2$ guarantees that $v_i(X^2_i)\geq v_i(X^1_i)$ for every $i\in [n]$. Let $g_k$ be the last good in $M_c$.
As shown in Case 2, every agent is $2/3$-EFX towards $X^1_s\cup Y$.
For agent $j$, since $g$ is not critical for her and $v_j(X^1_k)\leq \frac{3}{2}v_j(X^1_j)$ holds, $j$ is $2/3$-EFX towards $X^1_k\cup \{g\}$ in $\X^2$.
Similarly, for agent $s$, we can also conclude that $s$ is $2/3$-EFX towards $X^1_k\cup \{g\}$ in $\X^2$.
For any agent $i\in N\setminus\{s,j\}$, it holds that $v_i(g)\leq v_i(X^1_i)$ and $v_i(X^1_k)\leq v_i(X^1_i)$.
Hence, $i$ is $2/3$-EFX towards $X^1_k\cup \{g\}$ in $\X^2$. Thus
$\X^2$ computed in Case 5.3 is $\frac{2}{3}$-EFX.\medskip

\noindent \emph{If $\X^2$ is computed in Case 5.4:} 
The allocation $\X^2$ guarantees that $v_i(X^2_i)\geq v_i(X^1_i)$ for every $i\in [n]$.
For agent $j$, since she has no critical good, it holds that
$
v_j(X^1_s\cup M_c)=v_j(X^1_s\cup Y)+v_j(M_c\setminus Y)\leq \frac{3}{2}v_j(X^2_j).
$
For any agent $i$ who has a critical good, by inequality (\ref{eq:X1S3}), we get
$
v_i(X^1_s\cup M_c)=v_i(X^1_s)+v_i(M_c)\leq \frac{3}{2}v_i(X^2_i).
$
Therefore, $\X^2$ computed in Case 5.4 is $\frac{2}{3}$-EFX.\medskip

\noindent \emph{If $\X^2$ is computed in Case 5.5:} 
First, we show that in $\X^1$, there exists an agent $k$ in $N \setminus \{s, j\}$ who is not envied by another one in the same set. Suppose, for contradiction, that no such agent exists in $\X^1$. Then, for every agent $i$ with $|X^1_i| = 1$, there must be some other agent $t$ with $|X^1_t| = 1$ who envies $i$. This implies the existence of a cycle among agents with singleton bundles in the modified envy graph $\Gr(\X^1)$, which contradicts the fact that Step~\ref{step5} has already resolved all cycles. \medskip

\noindent Since Case~5.5 does not satisfy the conditions of Cases~5.1 or~5.2, agent $k$ is not $2/3$-EFX towards $X^1_j \cup \{g\}$. 
Furthermore, since $v_k(g_1) \geq v_k(g_2)$ and $v_k(X^1_j) \leq v_k(X^1_k)$, we have that $v_k(\{g, g_1\}) > \frac{3}{2} v_k(X^1_k)$. Also, Case~5.5 does not satisfy the condition of Case~5.3, and $k$ is not envied by any other agent in $N \setminus \{s, j\}$, hence it follows that $v_s(X^1_k \cup \{g_2\}) \geq v_s(X^1_k) > \frac{3}{2}v_s(X^1_s)$. Additionally, since Case~5.5 does not satisfy the condition of Case~5.4, we have that $v_j(X^1_s \cup Y) > v_j(X^1_j)$. Therefore, for any $i \in N$, it holds that $v_i(X^2_i) \geq v_i(X^1_i)$. \medskip

\noindent As proved in Case 2, every agent is $2/3$-EFX towards $X^1_s\cup Y$ in $\X^2$. As shown above, in Case 5.5, it holds that $v_j(X^1_k)\leq \frac{3}{2}v_j(X^1_j)$. Since $X^1_k$ is not envied by any agent $i\in N\setminus\{i,j\}$, for agent $i$, it holds that $v_i(X^1_k)\leq \frac{3}{2} v_i(X^1_i)$.
For any agent $i\in N\setminus\{s,j\}$, by \cref{prop:b*}, we get $v_i(g_1)\leq v_i(X^1_i)$ and $v_i(g_2)\leq v_i(X^1_i)$. \medskip

\noindent For any agent $i\in \{s,j\}$, by \cref{prop:c*}, we have $v_i(g_1)\leq \frac{2}{3}v_i(X^1_i)$ and $v_i(g_2)\leq \frac{2}{3} v_i(X^1_i)$.
For any agent $i\in N$, by \cref{prop:d*}, we have $v_i(g)\leq v_i(X^1_i)$.
Hence, every agent in $N\setminus \{s\}$ is $\frac{2}{3}$-EFX towards $X^1_{k}\cup \{g_2\}$ and every agent in $N$ is $\frac{2}{3}$-EFX towards $\{g_1,g\}$.
Since $X^1_{k}\cup \{g_2\}$ is allocated to agent $s$, $s$ is EFX towards $X^1_{k}\cup \{g_2\}$.
Thus, $\X^2$ computed in Case 5.5 is $2/3$-EFX.
\end{proof}

\begin{algorithm}[H]
\DontPrintSemicolon
\caption{\textsc{LastAllocateContestedCritical}($(v_i)_{i \in N}, \pX$)}
\label{alg:smallcase5}
\SetKwComment{Comment}{/* }{ */}
\SetKw{Continue}{continue}
\SetKw{Break}{break}
\SetKw{Step}{Step}
\SetKw{Substep}{Substep}
\SetKwData{Kw}{}
\KwIn{The values $v_i(g)$ for every agent $i \in N$ and every good $g \in M$, and a partial allocation $\pX^1$ of size at most 2 which satisfies \cref{prop:a*,prop:b*,prop:c*,prop:d*,prop:e*,prop:f*}.}
\KwOut{A partial allocation $\pX^2$ with no contested critical good in $\mathcal{P}(\pX^2)$.}

$M_c \gets \{g \in \PP(\X^1): g \text{ is a contested critical good}\}$\\

\If{there exists a subset $Y \subseteq M_c$ of two goods such that, after allocating $Y$ to agent $s$ and resolving the cycles in the envy graph $G(\X^1_{-s}, X_s \cup Y)$, there is a source agent who is allocated either one good or four goods}{
$X_s\gets X_s\cup Y$\\
$M_c\gets M_c\setminus Y$\
\If{there exists cycles in the envy-graph $G(\X)$}{
$\X \gets \text{AllCyclesResolution}(\X, G)$\;
}
Let $s'$ source agent in the envy-graph $G(\X)$ who has either 1 or 4 goods.\\
$X_{s'}\gets X_{s'}\cup M_c$\\
\tcp{Case 5.1: We allocate $Y$ to agent $s$, resolve the cycles in the envy graph, and then allocate the remaining good in $M_c$ to a source agent $s'$ who has 1 or 4 goods.
}
}
\ElseIf{there exists an agent who has one critical good $g$ in $M_c$ and is $2/3$-EFX towards $X_j\cup \{g\}$}{
Let $S\gets M_c\setminus \{g\}$ be the set of other two goods. \\
$X_s \gets X_s\cup S$\\
\If{There are some cycles in the envy graph $G(\X)$}{
$\X \gets \text{AllCyclesResolution}(\X, G)$\;
}
Let $s'$ source agent who has two goods.\\
$X_{s'}\gets X_{s'}\cup \{g\}$\\

\If{there exists an agent $k$ in the envy graph who is not $2/3$-EFX towards $X_j\cup \{g\}$}{
Let $\Pi$ the path from $s'$ to $k$. \\
$X_k\gets X_j\cup \{g\}$ \\
$(X^2_i)_{i \in N: \exists (i,l) \in \Pi}$ = \textsc{PathResolution}($\pX^2$, $G(\pX^2)$)}
\tcp{Case 5.2: We allocate the two goods in $M_c\setminus\{g\}$ to the source agent $s$, resolve the cycles in the envy graph, and then allocate the remaining good $g$ to the new source agent.
If there is still an agent $k$ who is not $2/3$-EFX towards $X_j \cup {g}$, we allocate $X_j\cup \{g\}$ to agent $k$ and resolve the path from $s$ to $k$ in the envy graph.}
}
\ElseIf{there exists a set of two goods $Y\subseteq  M_c$ and an agent $k$ with $|X_k|=1$ such that no agent in $N\setminus\{s,j\}$ envies $X_k$ and $v_s(X_s\cup Y)\geq 3/2v_s(X_k)$}{
$X_s\gets X_s\cup Y$
$X_k\gets X_k\cup (M_c\setminus Y)$.\\
\tcp{Case 5.3: We allocate $Y$ to agent $s$ and the remaining good in $M_c\setminus Y$ to agent $k$.}
}
\ElseIf{there exists set of two goods $Y\subseteq M_c$ such that for agent $j$, $v_j(X_s\cup Y)\leq v_j(X_j)$}{
 $X_s\leftarrow X_s\cup M_c$\;
 \tcp{Case 5.4: We allocate all the three goods in $M_c$ to agent $s$.}
}
\Else{
Let $k$ be an agent who is not envied by any other in $N\setminus \{s,j\}$.\\
Let good $g$ in $M_c$ that is critical for agent $k$. \\
$Y\gets M_c\setminus\{g\}$\\
Let $g_1$ and $g_2$ the two goods in $X_j$, where agent $k$ prefers $g_1$ than $g_2$.\\
$X_j\gets X_s\cup Y$\\
$X_s\gets X_{k}\cup \{g_2\}$\\
$X_{k}\gets \{g_1,g\}$\\

\tcp{Case 5.5: We first find the agent $k$ who is not envied by any other agent in $N \setminus \{s, j\}$, and identify the good $g \in M_c$ that is critical for $k$, (the remaining goods denoted by $Y = M_c \setminus {g}$). Let $X_j$ be the bundle that contains $g_1$ and $g_2$, where $k$ prefers $g_1$ over $g_2$. Then, we allocate $Y$ together with $X_s$ to agent $j$, allocate $g_2$ together with $X_{k_1}$ to agent $s$, and allocate ${g_1, g}$ to agent $k$.
}
}
\Return $\X$ 
\end{algorithm}

\subsection{Allocating the Uncontested Critical Goods and the Remaining Goods}\label{sec:ucc}

The partial allocation $\X^2$ satisfies: (1) $\X^2$ is $2/3$-EFX, (2) each agent has at most one critical good, and (3) no good is critical for more than one agent. To complete the allocation, we first use Algorithm~\ref{alg:ucc} to allocate the remaining critical goods and get $\X^3$.
Finally, we employ the Envy Cycle Elimination algorithm~\ref{alg:ece} to allocate the remaining goods and get the final complete allocation $\X^4$. $\X^4$ remains a $2/3$-EFX allocation, known from \cite{amanatidis2024pushing}.
Hence, Theorem~\ref{thm:8-agents-23-efx} is proved. \medskip

\SetAlgorithmName{ALGORITHM}{Algorithm}

\begin{algorithm}[H]
\DontPrintSemicolon
\caption{\textsc{UncontestedCritical}($\pX,\tilde{G}$)}
\label{alg:ucc}
\SetKwComment{Comment}{/* }{ */}
\KwData{A partial allocation $\X$ with no contested critical good in $\PP(\X)$, and its envy graph $\tilde{G}(X)$.\vspace{3pt}}
\KwResult{A partial allocation $\X'$ which does not induce any critical goods.\vspace{3pt}}
$\X \gets \text{AllCyclesResolution}(\X,G(\X))$\;
\While{there exists $i \in N$ and $g_i \in \PP(\X)$ such that $v_i(g_i) > \frac{1}{2}v_i(X_i)$}{
\tcp{while there exists an agent with a critical good}
    Let $s$ be a source of $\tilde{G}(\pX)$ such that there exists a path $\Pi$ from $s$ to $i$ in $\tilde{G}(\pX)$\;
    \If{$v_i(X_i \cup \{g_i\}) > v_i(X_s)$}{
    \tcp{if agent $i$ prefers the source’s bundle augmented with her critical good}

         $(X_j)_{j \in N:\exists(j,l)\in \Pi} \gets \textsc{PathResolution}(\X,\tilde{G}(\X),\Pi)$\;
        \tcp{every agent on the path except agent $i$ receives the bundle of her successor}
        $X_i \gets X_s \cup \{g_i\}$
        \tcp{agent $i$ receives the bundle of the source plus her critical good $g_i$}
        }
    \Else{
         $X_s \gets X_s \cup \{g_i\}$ \tcp{agent $i$'s critical good is given to the source of the path}}
         $\X \gets \textsc{AllCyclesResolution}(\X,\tilde{G})$\tcp{Update $\X$ by eliminating all envy cycles in $\tilde{G}(X)$}
    
}
\Return $\X$
\end{algorithm}

\end{document}